\documentclass[11pt]{article}
\usepackage[margin=1in]{geometry}
\usepackage{graphicx}
\usepackage{amsfonts}
\usepackage{hyperref}
\usepackage{amsthm}
\usepackage{amssymb}
\usepackage{mathtools}
\usepackage[utf8]{inputenc}
\usepackage[english]{babel}
\usepackage[dvipsnames]{xcolor}
\usepackage[capitalize]{cleveref}
\usepackage[b]{esvect}
\usepackage{thmtools}
\usepackage{thm-restate}

\usepackage{tikz} 
\usepackage{mathdots}
\usetikzlibrary{decorations.pathreplacing,calc,arrows,arrows.meta,patterns,positioning,backgrounds,shapes.geometric}

\newcommand{\bigO}{\mathcal{O}}

\definecolor{corlinks}{RGB}{0,0,150}
\definecolor{cormenu}{RGB}{0,0,150}
\definecolor{corurl}{RGB}{200,0,0}

\hypersetup{
colorlinks=true,
urlcolor=corurl,
linkcolor=corlinks,
menucolor=cormenu,
citecolor=corlinks,
pdfborder= 0 0 0
}

\def\colorful{1}
\ifnum\colorful=1

\fi
\ifnum\colorful=0

\fi

\newcommand{\eps}{\varepsilon}
\newcommand{\Bits}{\{0,1\}}
\newcommand{\poly}{\mathsf{poly}}
\newcommand{\shortpath}{\mathsf{SHORT}}
\newcommand{\longpath}{\mathsf{LONG}}

\newcommand{\polylog}[1]{\mathsf{poly}(\log(#1))}
\newcommand{\stconn}{\mathsf{STCONN}}
\newcommand{\CSPACE}{\mathsf{CSPACE}}
\newcommand{\CTS}{\mathsf{CTIMESPACE}}

\newcommand{\ed}{\mathsf{ED}}
\newcommand{\discreteFrechet}{\ensuremath{\mathsf{DFD}}}
\newcommand{\lcs}{\mathsf{LCS}}
\newcommand{\layergrid}{\mathsf{Grid}}
\newcommand{\frechet}{\mathsf{Frech}}
\renewcommand{\mod}{\,\text{mod}\,}
\newcommand{\inneralg}{\mathsf{{INNER}}}
\newcommand{\outeralg}{\mathsf{{OUTER}}}
\newcommand{\codecomment}[1]{\hfill{}\hfill{}{\color{darkgray} //~#1}\hfill{}}
\renewcommand{\codecomment}[1]{\bgroup\hfill\makebox[3.1in][l]{\small//~#1}\egroup}

\newcommand{\weightoracle}{\mathcal{D}}

\newcommand{\calH}{\mathcal{H}}

\newcommand{\calW}{\mathcal{W}}
\newcommand{\calU}{\mathcal{U}}

\newcommand{\N}{\mathbb{N}}
\newcommand{\Z}{\mathbb{Z}}

\newcommand{\F}{\mathbb{F}}

\newcommand{\vin}{v_{in}}
\newcommand{\vout}{v_{out}}
\newcommand{\vmid}{v_{mid}}
\newcommand{\bigvin}{V_{in}}
\newcommand{\bigvout}{V_{out}}
\newcommand{\bigvmid}{V_{mid}}
\newcommand{\cin}{c_{in}}
\newcommand{\cout}{c_{out}}
\newcommand{\cmid}{c_{mid}}
\newcommand{\bigRin}{\vv{r_{in}}}
\newcommand{\bigRout}{\vv{r_{out}}}
\newcommand{\bigRmid}{\vv{r_{mid}}}
\newcommand{\bigRinprime}{\vv{r'_{in}}}
\newcommand{\bigRoutprime}{\vv{r'_{out}}}
\newcommand{\bigRone}{\vv{r_1}}
\newcommand{\bigRtwo}{\vv{r_2}}
\newcommand{\bigRthree}{\vv{r_3}}
\newcommand{\tauone}{\vv{\tau_1}}
\newcommand{\tautwo}{\vv{\tau_2}}

\newcommand{\tauin}{\vv{\tau_{in}}}
\newcommand{\tauout}{\vv{\tau_{out}}}
\newcommand{\taumid}{\vv{\tau_{mid}}}
\newcommand{\lin}{\ell_{in}}
\newcommand{\lout}{\ell_{out}}
\newcommand{\lmid}{\ell_{mid}}
\newcommand{\hclass}{h_{class}}
\newcommand{\itop}{i_{top}}
\newcommand{\iin}{i_{in}}
\newcommand{\iout}{i_{out}}

\newcommand{\pluseq}{\mathrel{+}=}
\newcommand{\minuseq}{\mathrel{-}=}

\newtheorem{theorem}{Theorem}[section]
\newtheorem{lemma}[theorem]{Lemma}
\newtheorem{proposition}[theorem]{Proposition}

\newtheorem{definition}[theorem]{Definition}
\newtheorem{remark}[theorem]{Remark}

\crefname{theorem}{Theorem}{Theorems}
\crefname{lemma}{Lemma}{Lemmas}
\crefname{definition}{Definition}{Definitions}
\crefname{proposition}{Proposition}{Propositions}
\crefname{corollary}{Corollary}{Corollaries}
\crefname{remark}{Remark}{Remarks}

\newcommand*\samethanks[1]{\footnotemark[#1]}

\title{Frontier Space-Time Algorithms Using Only Full Memory}

\author{Petr Chmel\thanks{Partially supported by the Grant Agency of the Czech Republic under the grant agreement no. 24-10306S, by GAUK project No. 70924 and by Charles Univ. project UNCE 24/SCI/008.} \\ Charles University \\ \texttt{chmel@iuuk.mff.cuni.cz} \and Aditi Dudeja\thanks{This work was initiated while the author was affiliated with the University of Salzburg. This project has received funding from the European Research Council (ERC) under the European Union’s Horizon 2020 research and innovation programme (grant agreement No 947702)}\\ The School of Data Science \\ The Chinese University of Hong Kong, Shenzhen \\ \texttt{aditidudeja@cuhk.edu.cn} \and Michal Kouck\'y\thanks{Partially supported by the Grant Agency of the Czech Republic under the grant agreement no. 24-10306S and by Charles Univ. project UNCE 24/SCI/008.} \\ Charles University \\ \texttt{koucky@iuuk.mff.cuni.cz} \and Ian Mertz\samethanks{3} \\ Charles University \\ \texttt{iwmertz@iuuk.mff.cuni.cz} \and Ninad Rajgopal\samethanks{3} \\ Charles University \\ \texttt{ninad@iuuk.mff.cuni.cz}}

\date{\today}

\begin{document}

\maketitle
\begin{abstract}
    We develop catalytic algorithms for fundamental problems in algorithm design that run in polynomial time, use only $\bigO(\log(n))$ workspace, and use sublinear catalytic space matching the best-known space bounds of non-catalytic algorithms running in polynomial time.
    
    First, we design a polynomial time algorithm for directed $s$-$t$ connectivity using $n \big/ 2^{\Theta(\sqrt{\log n})}$ catalytic space, which matches the state-of-the-art time-space bounds in the non-catalytic setting~\cite{BBRS98}, and improves the catalytic space usage of the best known algorithm~\cite{cook_pyne_26_efficient}. Furthermore, using only $\bigO(\log(n))$ random bits we get a randomized algorithm whose running time nearly matches the fastest time bounds known for space-unrestricted algorithms.

    Second, we design polynomial time algorithms for the problems of computing Edit Distance, Longest Common Subsequence, and the Discrete Fr\'{e}chet Distance, again using $n \big/ 2^{\Theta(\sqrt{\log n})}$ catalytic space. This again matches non-catalytic time-space frontier for Edit Distance and Least Common Subsequence~\cite{KiyomiHO21}.
\end{abstract}
\tableofcontents

\section{Introduction}
\label{sec:intro}

\subsection{Catalytic Computation}
\label{sec:intro:catalytic}
Recently, a new paradigm for space-bounded computation called \emph{catalytic computing} has received a great deal of attention.
In short, a catalytic machine is one whose memory is almost completely full with arbitrary data at the start; while it may freely use this full memory, known as \emph{catalytic space}, in a read-write manner as with ordinary space, it must restore the initial data to the catalytic memory at the end of the computation.
Catalytic computing was defined by Buhrman, Cleve, Kouck{\'y}, Loff, and Speelman, in  \cite{BCKLS14}, who showed that the addition of polynomial catalytic space to a log-space machine gives it significant power, more so even than non-determinism, randomness, or both.

Since then there have been a host of results regarding the catalytic model, such as structural results~\cite{BuhrmanKLS18,GuptaJST19,DattaGJST20,CookM22,CookLMP25,KouckyMPS25}, placing further problems into the model~\cite{Pyne25,AgarwalaM25,AlekseevFMSV25,AgarwalaAV26}, and alternative catalytic models~\cite{BisoyiDS22,GJST24,FolkertsmaMST25,BuhrmanFMSSST25,BisoyiDRS25,CGMPS25} (see surveys of Kouck\'y~\cite{Koucky16} and Mertz~\cite{Mertz23} for more background).
Furthermore, the techniques of using full memory were also used in the context of ordinary space-bounded computation, giving rise to a novel paradigm for derandomization~\cite{DoronPT24,LiPT24,DoronPTW25,GTS25} as well as shedding new light on the relationship of space and time~\cite{CookM20,CM24_tree_eval,Wil25_simulate,Shalunov25}.

While all problems in $\mathsf{NL}$ are computable in polynomial time $\mathsf{CL}$ (i.e., the class of problems computable by a log-space machine with polynomial catalytic space), the approach of \cite{BCKLS14} incurs a large overhead in both time and catalytic space. In response to this, there been a recent focus, responding to a question posed by Mertz~\cite{Mertz23}, of truly efficient catalytic algorithms whose time and catalytic space usage are both comparable to the best known results in the non-catalytic setting. Cook and Pyne~\cite{cook_pyne_26_efficient} showed that connectivity in graphs ($\stconn$) can be solved in quadratic time using linear catalytic space; they also show a similar result for estimating random walks. These algorithms are conceptually simple enough to be implemented in practice---to wit, they are only a $\polylog{n}$ factor slower than breadth-first search---and redevelop pre-existing techniques from the catalytic literature in new and streamlined ways.

In this work we push the algorithmic frontier of connectivity and other problems, by asking whether we can capture the best \emph{space-time} algorithms, namely those using (reasonable) polynomial time and sublinear space, entirely within the catalytic paradigm.

\subsection{Sublinear-Space Polynomial-Time Algorithms}
\label{sec:intro:problems}
Connectivity is an extremely fundamental primitive in graph algorithms, both in theory and in practice. In complexity theory the many variants of $\stconn$ characterize logspace and their adjacent classes, while on the algorithmic side there has been an extensive history of studying its space, time, space-time trade-offs and heuristics.

While the straightforward DFS and BFS algorithms for $\stconn$ use linear time in the number of edges, the space requirement for these algorithms is linear in the number of vertices.
In contrast, Savitch's theorem shows that this problem can be solved in $\bigO(\log^2(n))$ space and time $n^{\bigO(\log(n))}$ \cite{Sav70}.
To meet both conditions, i.e. run in polynomial time and use sublinear space, a result by Barnes, Buss, Ruzzo, and Schieber~\cite{BBRS98} gives space complexity $\frac{n}{2^{\Theta{\left( \sqrt{\log(n)} \right)}}}$.

Except for restricted graph classes~\cite{AllenderL98,KannanKR08,ReingoldTV06,Reingold08}, no better results are known. Furthermore, the algorithm of~\cite{BBRS98} can be implemented in a restricted model of computation called Node Naming Jumping Automata on Graphs (NNJAG), for which we have a matching space lower bound in the polynomial time regime due to Edmonds, Poon, and Achlioptas~\cite{EPA99_nnjag_lb}, even for probabilistic algorithms.

There are also a number of fundamental algorithmic questions which reduce to connectivity, including Edit Distance ($\ed$), Longest Common Subsequence ($\lcs$), and Discrete Fr\'{e}chet Distance ($\discreteFrechet$). The first two, which quantify how close two strings are in various ways, are very well-studied problems with applications in various areas such as data analysis, computational biology, and text processing~\cite{BoroujeniEGHS18,AbboudBW15,AndoniKK10,AndoniK09,BackursI15,AbboudHWW16,AbboudR18,AlvesCS06,AndoniDGIR03,AndoniGMP13,AndoniK07,AndoniK08,BansalLMZ10,BatuFC06,BoroujeniS19,BringmannK18,BringmanGSW16,BringmanK15}.
Meanwhile, 
$\discreteFrechet$ has been studied extensively in computational geometry, motivated by applications such as dynamic time-warping~\cite{KeoghP99}, matching of time series in databases~\cite{KimKS05}, and analysis of moving objects~\cite{BuchinBG08,BuchinBGLL11}. 

For $\ed$ and $\lcs$, the classic dynamic programming algorithms need linear space, and similarly for $\discreteFrechet$, the original dynamic programming algorithm of Eiter and Mannila~\cite{EiterM94} takes $\bigO(mn)$ time (where $m$ and $n$ are numbers of vertices of the polygonal chains); this was subsequently improved to $\bigO(n)$ space and $\bigO \left( \frac{mn\log\log(n)}{\log(n)} \right)$ time by Agarwal et al.~\cite{AgarwalARKS14}.
On the other hand, if we allow for $n^{\bigO(\log(n))}$ runtime, then we can solve these problems in $\bigO(\log^2 n)$ space via a reduction to $\stconn$.
With regards to space-time results, their respective reductions to $\stconn$ involve a quadratic blowup in the input size $n$, which would render the result of~\cite{BBRS98} useless; however, a recent work of Kiyomi, Horiyama, and Otachi~\cite{KiyomiHO21} gives a polynomial time algorithm for $\lcs$ and $\ed$ using $\frac{n}{2^{\Theta\left( \sqrt{\log(n)} \right)}}$ space.
While better algorithms are known in the approximation and/or streaming regimes~\cite{ChakrabortyGK16,AndoniKO10,SaksS13,ChengFHJLRSZ21}, any improvements for these problems beyond the $\stconn$ threshold seem firmly out of reach~\cite{ChengFHJLRSZ21}.

\subsection{Our Results}
\label{sec:intro:results}
In this work, we develop polynomial-time algorithms using catalytic space that matches the $\tfrac{n}{2^{\Omega{\left( \sqrt{\log(n)} \right)}}}$ space-bound frontier for polynomial-time algorithms that solve the aforementioned problems, with all the space being borrowed from full memory except for $\bigO(\log n)$ bits. 

\vspace{0.1in}
We begin with $\stconn$, where additionally $\bigO(\log(n))$ random bits are sufficient to nearly match the time bounds of the best space-\emph{unrestricted} algorithms.
\begin{theorem} [\textbf{Sublinear catalytic space algorithms for $\stconn$}] \label{thm:main_stconn}
    There exists a deterministic algorithm solving $\stconn$ on $n$-vertex graphs in time $\poly(n)$, space $\bigO(\log(n))$, and catalytic space $\frac{n}{2^{\Omega{\left( \sqrt{\log(n)} \right)}}}$.
    
    Furthermore, for every $\epsilon > 0$, there is a randomized algorithm that runs in time $\bigO(|E| \cdot n^{2+\epsilon})$ given $\bigO(\log(n))$ random bits which uses the same space (both ordinary and catalytic) as the deterministic algorithm.
\end{theorem}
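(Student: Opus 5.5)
The plan is to make the Barnes--Buss--Ruzzo--Schieber (BBRS) recursion catalytic, replacing each explicitly stored vertex set with a register-based representation in catalytic memory, in the style of the Cook--Pyne catalytic BFS.

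\textbf{The non-catalytic recursion.} BBRS proceeds in $k=\Theta(\sqrt{\log n})$ levels, each with a parameter $\lambda$ (say $\lambda = 2^{\Theta(\sqrt{\log n})}$).
\begin{itemize}
\item The top level decides whether $t$ is reachable from $s$ by a path of length at most $\lambda^k \approx n$.
\item Each level finds a set of ``landmark'' vertices, of size about $n/\lambda$, placed along the candidate paths.
\item It then recurses on subpaths of length $\lambda^{k-1}$ between consecutive landmarks.
\item The total space is $\sum_{\text{levels}} n/\lambda$, which is $k\cdot n/\lambda = n/2^{\Theta(\sqrt{\log n})}$.
\item The time is $\poly(n)$, because each level multiplies the running time by $n^{\bigO(1/\sqrt{\log n})\cdot\text{const}}$, i.e.\ about $2^{\bigO(\sqrt{\log n})}$ per level.
\end{itemize}

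\textbf{Making it catalytic.} I would emulate this recursion so that every data structure BBRS stores explicitly lives instead in catalytic memory as registers that are modified reversibly.
\begin{itemize}
\item A \emph{set} of $m$ vertices is encoded by XOR-ing membership bits into $m$ catalytic bits. Its contents are only ever read as the difference between the initial value and the current value.
\item The Cook--Pyne \emph{compress-or-random} approach then makes the arbitrary initial content usable. We run a BFS-like procedure restricted to a budget of $n/\lambda$ marked vertices.
\item If the initial catalytic content makes the procedure take too long, that content is compressible. We can then compress it, use the freed space, run the computation, and decompress.
\item At each level we allocate a fresh catalytic block of size $n/\lambda$. Over $k$ levels this gives catalytic space $n/2^{\Omega(\sqrt{\log n})}$.
\item The only ordinary workspace needed is counters and pointers, i.e.\ $\bigO(\log n)$ bits.
\end{itemize}
Every subroutine must be written as a reversible ``compute, copy answer into a clean bit, uncompute'' block. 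Since each subroutine is invoked only a polynomial number of times, the running time remains $\poly(n)$.

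\textbf{Main obstacle.} The hard step is the recursive landmark selection. BBRS chooses landmarks adaptively, using hashing or enumeration over vertex subsets, and it relies on knowing explicit membership of landmark sets. My approach would be to fix the landmark set of each level as the first vertices, in a canonical order, found by a bounded BFS inside the catalytic register. Uncomputation of the register contents is then deterministic and reproducible. The key quantity to bound is the slowdown at a level where the register is initially ``bad''. I would invoke the compression argument: a bad configuration has a description shorter by $\Omega(\log n)$ bits, which yields a polynomial bound per level. I still need to verify that the per-level overheads multiply to only $\poly(n)$.

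\textbf{Randomized variant.} Here I would use $\bigO(\log n)$ random bits to pick a pairwise-independent hash, or a random shift XOR-ed into the catalytic tape. This makes the tape's content behave as if random with constant probability, so no compression phase is needed. Then each level costs roughly $|E|$ per reachability sweep, times $n^{\epsilon/k}$ overhead per level. Choosing $\lambda$ as a function of $\epsilon$ gives total time $\bigO(|E|\cdot n^{2+\epsilon})$. The $n^2$ factor would come from the outer loop over landmark pairs.
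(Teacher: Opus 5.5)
Your proposal has genuine gaps: it never gives a way to compute with catalytic memory whose contents are unknown, which is where the paper's proof lives. Three steps fail as stated.
\begin{itemize}
\item Encoding a vertex set by XOR-ing membership bits into catalytic cells, and reading it ``as the difference between the initial and current value'', requires knowing the initial value. You have no room to store it.
\item The compress-or-random step cannot supply the space a BBRS level needs. A configuration whose description is shorter by $\Omega(\log n)$ bits frees only $O(\log n)$ clean bits, not the $n/\lambda$ bits of explicit set storage.
\item If, as you sketch, each level can incur a $\poly(n)$ slowdown inside the recursion, these multiply over $\Theta(\sqrt{\log n})$ nested levels to $n^{\Theta(\sqrt{\log n})}$. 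The obstacle you flag is fatal, not a detail.
\end{itemize}
The missing idea is to make the computation algebraic: count weighted walks modulo a prime $p$ in registers over $\mathbb{F}_p$, using Ben-Or--Cleve cancellation. For each middle color class you run $[mid\to out]^-$, $[in\to mid]^+$, $[mid\to out]^+$, $[in\to mid]^-$ (Lemma~\ref{lem:main_subroutine}). The net effect, $r_{out} \mathrel{+}= r_{in}\cdot\mathcal{W}_\ell$, holds regardless of the initial contents, and the middle register is restored. This costs a factor $4C=2^{O(\sqrt{\log n})}$ per level and uses only three catalytic vectors of length $m/C$. The answer is read off by placing the $s$ and $t$ registers on the work tape, initialized to $1$ and $0$. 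A random prime of size $\poly(n)$ keeps a nonzero walk count nonzero. No compression argument is needed.

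Your landmark step has a second gap. Taking ``the first vertices in canonical order found by a bounded BFS'' gives no guarantee that every $s$--$t$ path passes a landmark every $\lambda$ steps, and the BFS itself needs a visited set you cannot keep. The paper instead uses a pseudorandom multiset (Theorem~\ref{thm:u_construct}):
\begin{itemize}
\item $U_\sigma$ is the union of the images of $K=O(\log n)$ pairwise-independent hash functions $[m]\to[n]$, with $m=O(n/\lambda)$, chosen along an expander walk.
\item One hash function hits any fixed path segment of $\lambda/2$ vertices with probability at least $1-\beta$.
\item The expander walk drives the failure probability to $n^{-2-\epsilon/8}$, and a union bound over vertex pairs gives representativeness.
\item $U_\sigma$ is never stored, since its $i$-th element is computable from the $(6+\epsilon)\log n$-bit seed in polylogarithmic time.
\end{itemize}
Long walks then run the same recursion on the sparsified graph with $C=1$, costing $4^{\log|V_U|}=O(n^2)$ calls to the short-walk routine. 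Short walks run it on $G$ with self-loops, using $C=\lambda$ and $\ell=\log\lambda$, at cost $O(|E|\cdot n^{\epsilon})$. The product of these two costs is where $O(|E|\cdot n^{2+\epsilon})$ comes from, not an outer loop over landmark pairs. The $O(\log n)$ random bits are the seed $\sigma$ and the prime $p$, not a shift XOR-ed onto the tape; such a shift cannot make adversarial content of length $n/\lambda$ behave randomly. The deterministic algorithm simply enumerates all pairs $(p,\sigma)$, relying on the error being one-sided.
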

The randomized algorithm only makes \textit{one-sided errors}, where it always answers correctly when $s$ is not connected to $t$ in $G$. As such, the deterministic algorithm stated above is obtained by direct derandomization within the same space, i.e., by going over all possible random strings and accepting, if at least one of them makes the randomized algorithm accept. Our algorithms also improve on the $\Tilde{O}(n)$ (or the $\Tilde{O}(n^2)$) catalytic space bound observed by the randomized (or deterministic) algorithms from \cite{cook_pyne_26_efficient}, achieving a sublinear bound in both cases.

\vspace{0.1in}
Next, we extend this result to $\ed$, $\lcs$, and $\discreteFrechet$, by designing catalytic algorithms that exploit the special graph structure underlying the problems. To the best of our knowledge, no sublinear space algorithm was known for $\discreteFrechet$ even without the catalytic restriction.
\begin{theorem}[\textbf{Sub-linear catalytic space algorithms for $\ed, \lcs$, and $\discreteFrechet$}] \label{thm:main_other}
    There exist deterministic algorithms for $\ed$, $\lcs$, and $\discreteFrechet$ which use time $\poly(n)$, space $\bigO(\log(n))$, and catalytic space $\frac{n}{2^{\Omega{\left( \sqrt{\log(n)} \right)}}}$.
    
    Furthermore, the running time of our algorithm for $\discreteFrechet$ is $\bigO(n^{4+\varepsilon})$, for any $\epsilon > 0$, if all the distances of the points are integers from the range $\{0,\ldots,\poly(n)\}$.
\end{theorem}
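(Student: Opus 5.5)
We reduce all three problems to reachability, or to a bottleneck-reachability variant, in a layered grid graph $\layergrid$ on $[n]\times[m]$. Vertex $(i,j)$ has edges only to $(i+1,j)$, $(i,j+1)$ and $(i+1,j+1)$, and each edge is computable in $\bigO(\log n)$ space from the input. For $\discreteFrechet$ with threshold $\delta$, we keep the vertices $(i,j)$ with $d(p_i,q_j)\le\delta$ and ask whether $(1,1)$ reaches $(n,m)$. The decision version is then wrapped in an outer loop over the $\poly(n)$ candidate values $\delta\in\{d(p_i,q_j)\}$, or a binary search when distances are integers in $\{0,\dots,\poly(n)\}$; this loop needs only $\bigO(\log n)$ ordinary space. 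For $\ed$ and $\lcs$, the question ``is $\ed\le k$'' becomes a weighted grid path problem with $0/1$ weights. We handle it by carrying a distance label of $\bigO(\log n)$ bits per stored vertex, so the frontier values live in catalytic memory. The important point is that the grid has $\Theta(n^2)$ vertices, so we cannot simply invoke \cref{thm:main_stconn}. Instead we must use the grid structure, following the Kiyomi--Horiyama--Otachi approach.

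The core is a catalytic implementation of the \cite{KiyomiHO21}/\cite{BBRS98}-style recursion. Split the grid into vertical strips of width $w=n/2^{\Theta(\sqrt{\log n})}$. Inside each strip we process $2^{\Theta(\sqrt{\log n})}$ levels of recursion, and at each level the state of one column or antidiagonal (a reachability bit or a distance value per cell) is compressed into $\bigO(n/2^{\Theta(\sqrt{\log n})})$ bits via landmark/checkpoint sets. The non-catalytic algorithm stores such compressed frontiers and recomputes everything else. In our version every such stored frontier is kept in catalytic memory using the reversible, ``add-then-subtract'' technique of \cite{cook_pyne_26_efficient} and \cite{BCKLS14}:
\begin{enumerate}
\item Instead of writing a value $v$ into a register, XOR or add it into whatever content $\tau$ the register already holds.
\item Use the result only through the difference against $\tau$, which can be read by recomputation, or through a compare-twice trick.
\item Later uncompute $v$ by running the same deterministic computation again and subtracting.
\end{enumerate}
Each step of the dynamic program is a deterministic function of the previous frontier, so the whole procedure is a reversible straight-line program over registers. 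Ordinary memory only holds the recursion stack, which consists of $\bigO(\sqrt{\log n})$ levels, each an index of $\bigO(\sqrt{\log n})$ bits, for $\bigO(\log n)$ bits total.

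The steps are as follows.
\begin{enumerate}
\item Reduce each problem to the grid formulation, with an $\bigO(\log n)$-space edge/weight oracle $\weightoracle$.
\item Restate the \cite{KiyomiHO21} recursion as a recursive procedure that maps an input frontier register to an output frontier register. Make the map injective or reversible: it should output $\mathrm{out}\oplus f(\mathrm{in})$ while leaving $\mathrm{in}$ intact.
\item Prove by induction on the recursion level that the running time is $T(\ell)\le 2^{\bigO(\sqrt{\log n})}\,T(\ell-1)$. Over $\bigO(\sqrt{\log n})$ levels this gives $2^{\bigO(\log n)}=\poly(n)$.
\item Bound the catalytic registers by $\bigO(\sqrt{\log n})$ levels times $n/2^{\Theta(\sqrt{\log n})}$ bits, which is again $n/2^{\Omega(\sqrt{\log n})}$.
\end{enumerate}
For the $\discreteFrechet$ runtime bound of $\bigO(n^{4+\varepsilon})$, take recursion parameters $2^{\sqrt{\log n}\cdot c}$ with $c$ depending on $\varepsilon$. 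The per-threshold test then costs $n^{2+\varepsilon}$ times a $\bigO(n^2)$ overhead, coming from the outer search over thresholds and from recomputation, which gives the stated bound.

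The main obstacle is reading values out of catalytic registers that store $\tau\oplus v$ rather than $v$. The recursion must branch on frontier data, for example to decide which cells are reachable, and arbitrary initial content corrupts these decisions. My first attempt would be the \cite{cook_pyne_26_efficient} approach: represent reachability and distances as values over $\Z_p$ or in counters, and make decisions only by comparing a register before and after an add-then-subtract pass. Concretely, compute $\tau+v$, then recompute from scratch to get $\tau$ alone, and take the difference using a $\bigO(\log n)$-bit local buffer, one cell at a time. This keeps every read local, so the ordinary space stays logarithmic, at the cost of the polynomial recomputation already counted in the time recurrence above.
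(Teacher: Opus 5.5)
There is a genuine gap. It sits at the step you yourself flag as the main obstacle, and your proposed fix does not close it. The add-then-subtract composition of \cite{BC92} and \cite{cook_pyne_26_efficient} is sound only for \emph{linear} frontier maps. When two recursive calls are chained through a middle register, that register holds $\tau_{mid}+f_1(\mathrm{in})$, not $f_1(\mathrm{in})$. The four-call sequence ($\mathrm{out}\minuseq f_2(\mathrm{mid})$, $\mathrm{mid}\pluseq f_1(\mathrm{in})$, $\mathrm{out}\pluseq f_2(\mathrm{mid})$, $\mathrm{mid}\minuseq f_1(\mathrm{in})$) therefore changes $\mathrm{out}$ by $f_2(\tau_{mid}+f_1(\mathrm{in}))-f_2(\tau_{mid})$. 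This equals $f_2(f_1(\mathrm{in}))$ only when $f_2$ is additive. Maps on reachability bits (OR of ANDs, with or without XOR) and on distance labels (min-plus) are not additive, so your step~2, ``output $\mathrm{out}\oplus f(\mathrm{in})$ leaving $\mathrm{in}$ intact,'' cannot be realized recursively.

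Your fallback reads $v[i]$ cleanly by buffering $(\tau+v)[i]$, uncomputing, and subtracting $\tau[i]$. But this destroys $v$, so every single cell read costs a full recomputation of the subcall. A level-$\ell$ call consumes $\Theta(n/C)$ cells of its middle frontier, so the per-level factor becomes about $n/C$ rather than $2^{\bigO(\sqrt{\log n})}$. Over $\Theta(\sqrt{\log n})$ levels that gives $n^{\Theta(\sqrt{\log n})}$ time. This cost is not ``already counted'' in your recurrence; at that point the catalytic frontiers do no work and you are back to Savitch-style recomputation.

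The missing idea is to keep the entire recursion linear and decode only once, at the end. The paper never stores reachability or distances. It computes the weighted path sum $\calW_{u,v}=\sum_P\mu(P)\bmod p$ with multiplicative edge weights, for which the cancellation of Lemma~\ref{lem:main_subroutine} is exact. The single target value is read as $r_2-r_1$ after rerunning with $\bigRin[u]$ incremented by $1$ (Theorem~\ref{thm-gridalg}).

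For $\ed$ and $\lcs$, the min or max is moved into the exponent. Cost-$0$ edges get weight $1$, cost-$1$ edges weight $16^n$, and missing edges weight $0$. Then $\calW_{s,t}=\sum_w 16^{nw}n_w$ with $n_w\le 9^n<16^n$, and the optimal cost is the lowest (respectively highest) nonzero base-$16^n$ digit. Since this number has $\Theta(n^2)$ bits, it is evaluated modulo many $\bigO(\log n)$-bit primes, and its bits are recovered with the log-space CRR-to-binary procedure of Proposition~\ref{prop-CRR}.

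Your mention of working over $\Z_p$ points the right way for $\discreteFrechet$: count paths with $0/1$ weights. However, one random prime gives only a randomized test, whereas the theorem asks for a deterministic algorithm. The paper instead tests nonzeroness modulo the first $4n$ primes. That is also where $\bigO(n^{4+\varepsilon})$ comes from: $4n$ primes times the $\bigO(n^{3+\varepsilon})$ of Theorem~\ref{thm-gridalg}, with the $\bigO(\log n)$ binary-search steps absorbed into $\varepsilon$. It does not come from your $n^{2+\varepsilon}\cdot n^2$ split.

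Finally, the \cite{KiyomiHO21} base case is a two-row DP on an $n/C$-layer subgrid, which needs $\Theta(n/C)$ clean memory. Your sketch does not say how to replace it. The paper uses a second linear recursion over layers (Lemma~\ref{lem:layered_inner}: branching $4$, depth $\log(n/C)$), run on a height-$4n/C$ window with the masking oracle of Lemma~\ref{lem:oracle_transform_subgraph}.
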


We do not know what is the exact running time of our algorithms for edit distance and longest common subsequence as it depends on the running time of a log-space procedure\footnote{The running time for this procedure was not explicitly calculated and it boils down to the running time of a highly sophisticated log-space algorithm for multiplying $n$ integers of $\bigO(\log n)$ bits each.} for conversion between Chinese remainder representation and the usual binary representation of integers \cite{HesseAB02}.

To the best of our knowledge, no sublinear space algorithm was known for $\discreteFrechet$ even without the catalytic restriction. However, we note that such a theorem can be derived via the reduction to $\stconn$ and existing results: Chakraborty and Tewari~\cite{ChakrabortyT2017} give a polynomial time algorithm for $\stconn$ in directed layered planar graphs requiring $\bigO(n^\varepsilon)$ space which we can directly plug in to obtain a deterministic algorithm for $\discreteFrechet$ that uses time $\poly(n)$ and space $\bigO(n^\varepsilon)$.

This result is incomparable to our catalytic algorithm, as it uses less space overall but significantly more free memory.
We also note that the results on grid-graph reachability of Allender et al.~\cite{AllenderBCDR06} are not sufficient for our purposes as in the process of reducing $\discreteFrechet$ to grid reachability, some instances can have both multiple sources and multiple sinks.
Moreover, the grids we consider are NL-complete as opposed to the planar grids considered in~\cite{AllenderBCDR06} which are not known to be NL-complete.

Independently, Edenhofer~\cite{Edenhofer26} considered catalytic space-work space tradeoffs for $\stconn$ using techniques similar to ours and obtained incomparable results.

\subsection{Proof Overview}
\label{sec:intro:overview}

Our technique will primarily rely on the structure of two different $\stconn$ algorithms: first, the sublinear-space polynomial-time algorithm of Barnes et al.~\cite{BBRS98} in the non-catalytic world; and second, the recent polynomial-time algorithm of Cook and Pyne~\cite{cook_pyne_26_efficient} using linear catalytic space. We merge their approaches using analysis from other areas of catalytic computing, as well as incorporating new graph tools for overcoming one of the core barriers that arises in the approach of \cite{BBRS98}, to resolve the case of $\stconn$. Lastly, we use a common framework to attack $\ed$, $\lcs$, and $\discreteFrechet$, working on an implicit grid graph whose structure will allow us to focus on small subgraphs and thus keep our space small.

\subsubsection{Previous algorithms for $\stconn$: \cite{BBRS98} \& \cite{cook_pyne_26_efficient}}

\paragraph{Long and Short Paths Framework (\cite{BBRS98}).} The starting point for \cref{thm:main_stconn} is the deterministic algorithm by \cite{BBRS98}.
For any parameter $\lambda$, \cite{BBRS98} constructs a \textit{representative set} of vertices $U \subset V$ of size roughly $n/\lambda$ such that, if any pair $u$ is connected to $v$ in $G$ by a path $\rho_{u,v} = (u, p_1, \dots, p_L, v)$, then there exists a sequence of vertices $\Tilde{\rho}_{u,v} = (u, w_1, \dots, w_{L'},v)$, such that each $w_i$ is in $U$, and consecutive vertices in $\Tilde{\rho}_{u,v}$ are connected by a path of length at most $\lambda$. This provides a sparsification $H = (U,E_U)$ of $G$ that still preserves connectivity within $G$: $(s_i,s_j)$ is in $E_U$, if and only if there exists a path of length $\lambda$ from $s_i$ to $s_j$ in $G$. Thus by setting $\lambda$ to be $2^{\bigO{\left( \sqrt{\log(n)} \right)}}$, storing $U$ in the workspace suffices to decide $\stconn$ efficiently, if one can, in polynomial time and $n/\lambda$ workspace,
\begin{itemize}
    \item \textbf{construct} the representative set $U$;
    \item check if $(s_i,s_j)$ forms an edge in the sparsified graph $H$, through a ``\textbf{short path}'' algorithm that checks for connectivity over paths of length $\lambda$ in $G$; and
    \item using this short-paths algorithm as a sub-routine, find a ``\textbf{long path}'' in $H$, say $w_1, \dots, w_{L'}$, such that the distances from $s$ to $w_1$, and from $w_{L'}$ to $t$ are at most $\lambda$ in $G$.
\end{itemize} 

\noindent
To construct the set $U$, \cite{BBRS98} employs a breadth-first search (BFS) from $s$ to partition the vertices into $\lambda$ candidate sets $U_1, \dots, U_\lambda$, where intuitively each $U_i$ contains the vertices at distance $i, i+\lambda, \dots, i+ \lfloor n/\lambda \rfloor \cdot \lambda$ from $s$ in the BFS-tree. Each $U_i$ is a representative set by design, and by a simple averaging argument one of these sets, which we choose to be our true representative set, has at most $n/\lambda$ vertices. Thus if our ``short paths'' algorithm can test whether two nodes are at distance exactly $\lambda$, the vertices and edges of our sparsified graph $H$ can be computed in polynomial time and $|U| = \bigO(n/\lambda)$ space using BFS; this gives us our ``long paths'' algorithm.

To check if vertices $u$ and $v$ are connected by a path of length $\lambda$ in $G$, \cite{BBRS98} interpolate between ordinary BFS and the recursive approach by Savitch \cite{Sav70}. As a first step to reduce the space, they arbitrarily partition set $V$ into $\lambda$ color classes, each of size $n/\lambda$, and look only for $u$--$v$ paths which traverse over a fixed sequence of $\lambda+1$ (possibly repeatable) color classes, where the first class contains $u$ and the last class contains $v$. This is achieved by a basic BFS algorithm, iteratively computing connectivity between consecutive color classes in the sequence edge-by-edge and only storing the bit-vector of length $n/\lambda$ of which nodes in the current color class are reachable from $u$.

By looping over all possible sequences of color classes this discovers every possible path of length $\lambda$; however, the runtime of this procedure is at least $\lambda^{\lambda}$, which is superpolynomial when $\lambda$ is $2^{\sqrt{\log(n)}}$ as desired. Taking inspiration from Savitch's recursive algorithm, they instead choose a middle color class and test connectivity from the first color class, i.e. the one containing $u$, to this middle class and then from this middle class to the final class containing $v$. Such a test reduces connectivity of distance $\lambda$ to $2\lambda$ tests of distance $\lambda/2$, and recursively this gives a runtime of $(2\lambda)^{\log \lambda}$ instead, giving us our final choice of $\lambda = 2^{\bigO(\sqrt{\log(n)})}$.

\vspace{0.1in}
We aim to design a polynomial time algorithm for $\stconn$ which uses only $\bigO(\log(n))$ workspace and $\bigO(n/\lambda)$ catalytic space. The first issue with adapting their approach is that storing $U$ requires $\bigO(n/\lambda)$ clean workspace, which we cannot afford.
Furthermore, even if one generates $U$ in low-space, it is unclear how to construct catalytic algorithms for computing long and short paths that satisfy our time-space requirements.
With regards to the second issue, we now turn to the catalytic algorithm of Cook and Pyne~\cite{cook_pyne_26_efficient}, using $\bigO(\log(n))$ workspace and $\Tilde{\bigO}(n)$ catalytic space, which gives us an initial clue for our subroutines.

\paragraph*{Connectivity via Propagating Sums (\cite{cook_pyne_26_efficient}).}
In a word, the Cook-Pyne algorithm implements a catalytic version of BFS. To illustrate their main ideas, consider a layered graph $G$ with $n$ layers of $n$ vertices each, with edges only going between consecutive layers, and we want to test connectivity between $s$ in the first layer and $t$ in the last. We describe an algorithm using $\Tilde{\bigO}(n^2)$ catalytic space, which is linear in the number of vertices of $G$ (of course their algorithm works on arbitrary graphs, albeit with slightly less efficiency than in the layered case).

Let $R \coloneqq \Z/2^m\Z$ be the ring of integers modulo $2^m$, for some $m$ which we fix shortly. Their algorithm first allocates $n^2$ registers $r_{1,1} \ldots r_{n,n}$ on the catalytic tape, with each $r_{i,j}$ being assigned to corresponding node $v_{i,j}$ in $G$, and let their initial contents be $\tau_{1,1} \ldots \tau_{n,n} \in R$; this gives a total catalytic memory of $n^2 \cdot m$ bits. They start by ``pushing" the value of $r_{1,j}$ through all its outgoing edges to the second layer, in particular by adding $\tau_{1,j}$ to each node $v_{2,j'}$ connected to $v_{1,j}$. They then push the values of the $r_{2,j}$ nodes forward to the third level, where now each $r_{2,j}$ has not just $\tau_{2,j}$ but also any value added to it in the previous round.

Repeating this procedure for all layers $i$ in turn, 
we obtain some final value $\sigma$ which arrives at the vertex $(n,t)$ at the end of this process. They reset all catalytic memory by performing a ``reverse edge push'', i.e. repeating the whole procedure but with subtraction and in reverse, thus getting back $\tau_{i,j}$ in each register $r_{i,j}$. Finally, they repeat the entire algorithm but with the register $r_{1,s}$ for the source incremented by 1, collecting the value $\sigma'$ at the vertex $(n,t)$ and resetting as before.

For each pair of nodes $(v_{1,i}, v_{n,j})$ in the first and final layer respectively, let $\Gamma_{i,j}$ be the collection of paths between them. By linearity, the net result of our first forward push before resetting is
$$r_{n,j} \pluseq \sum_{i \in [n]} \sum_{P \in \Gamma_{i,j}} \sum_{v_{k,\ell} \in P} \tau_{k,\ell} \;\; (\mod 2^m)$$
because each value $\tau_{k,\ell}$ is added forward along the path $P$. In particular this holds for $t$, which gives us our $\sigma$ value, while the same argument in the second forward push gives
$$\sigma'= \left(\sum_{i \neq s} \sum_{P \in \Gamma_{i,t}} \sum_{v_{k,\ell} \in P} \tau_{k,\ell}\right) + \left(\sum_{P \in \Gamma_{s,t}} \left( 1 + \sum_{v_{k,\ell} \in P} \tau_{k,\ell} \right) \right) (\mod 2^m)$$
Hence $\sigma' - \sigma $ is exactly $|\Gamma_{s,t}| \pmod{2^m}$, i.e. the number of paths from $s$ to $t$ modulo $2^m$.

When $2^m > \sigma' - \sigma$ this resolves the connectivity between $s$ and $t$ and even accurately counts the exact number of $s$--$t$ paths.
However, the number of paths can be as large as $\bigO(n^n)$, and setting each register size to be of size $m = \Tilde{\bigO}(n)$ increases the catalytic space as well as the run-time by a factor of at least $n$. 
To deal with this, they finish the argument by using the fact that with large probability, a random prime $p$ of value $\poly(n)$ satisfies $\sigma' - \sigma \not\equiv 0 \pmod{p}$ if the number of paths is non-zero.
Hence, instead of computing over the ring $R$ they compute over the field $\F_p$, thus reducing the catalytic space to $\bigO(n^2 \cdot \log n)$ as required.

\vspace{0.1in}
Assuming one could stitch shorter segments together, we could use this algorithm for the long path procedure (putting aside the layered issue) on the graph $H$ of representative nodes.
However, using this for our short-paths sub-routine would still require the algorithm to push values across the edges associated with all the $n$ vertices, necessitating catalytic space of $\Tilde{\bigO}(n)$.

\subsubsection{Our approach for $\stconn$}
We now move on to our approach.
The two conceptual contributions of our work are the following, which we discuss in turn: 1) we adapt the flow-pushing argument of \cite{cook_pyne_26_efficient} to the recursive framework of \cite{BBRS98} to get the best of both worlds; and 2) we construct the representative set $U$ for the long paths algorithm using tools from the theory of pseudo-randomness in a space and time efficient manner.

\paragraph*{Merging Recursion and Flows.}
Assuming we have access to our representative set $U$, the long paths procedure can be handled by the \cite{cook_pyne_26_efficient} algorithm for unlayered graphs, and so we focus on finding paths of length $\lambda$ in $G$.
In combining the flow-based algorithm of \cite{cook_pyne_26_efficient} with the recursive approach of \cite{BBRS98}, we show both that the catalytic space of \cite{cook_pyne_26_efficient} can be reduced using techniques from \cite{BBRS98} and that the space of \cite{BBRS98} can be made catalytic using \cite{cook_pyne_26_efficient}.

If we want to adopt the recursive structure of \cite{BBRS98}, then we cannot use the idea of adding 1 to the initial vertex and comparing the sums before and after; this would require too much free memory and would make it unclear how to ``stitch together'' the two sides of a recursive call. Instead, we take inspiration from previous catalytic algorithms, such as that of Ben-Or and Cleve \cite{BC92}, which handle arithmetic cancellations in recursion. Our goal will be to count the number of paths between any pair of vertices $u$ and $v$.

For $\ell = 1 \ldots \log(\lambda)$, let $\Gamma_{u,v}^{\ell}$ be the set of $u$--$v$ paths of distance exactly $2^{\ell}$. \cite{BC92} give the following recursive invariant:
$$r_v \pluseq \tau_u \cdot |\Gamma_{u,v}^{\ell}|$$
Following \cite{BBRS98}, we may consider all possible midpoints for a $u$--$v$ path of length $2^{\ell}$.
Where before we simply took $\lor_w (u \rightarrow_{2^{\ell-1}} w) \land (w \rightarrow_{2^{\ell-1}} v)$, the crucial insight is that moving to arithmetic $\lor$ and $\land$, i.e. $+$ and $\times$, faithfully counts the number of paths in the same way, as any $u$--$w$ path can be paired with any $w$--$v$ path to give a $u$--$v$ path:
$$|\Gamma_{u,v}^{\ell}| = \sum_w |\Gamma_{u,w}^{\ell - 1}| \cdot |\Gamma_{w,v}^{\ell-1}|$$
Thus utilizing the \cite{BC92} procedure for recursive multiplication, fixing a midpoint $w$,
allows us to evaluate this inner product using four recursive calls:
\begin{enumerate}
    \item $r_v \minuseq r_w \cdot |\Gamma_{w,v}^{\ell-1}|$ \codecomment{$r_v = \tau_v - \tau_w \cdot |\Gamma_{w,v}^{\ell-1}|$}
    \item $r_w \pluseq r_u \cdot |\Gamma_{u,w}^{\ell-1}|$ \codecomment{$r_w = \tau_w + \tau_u \cdot |\Gamma_{u,w}^{\ell-1}|$}
    \item $r_v \pluseq r_w \cdot |\Gamma_{w,v}^{\ell-1}|$ \codecomment{$r_v = \tau_v + \tau_u \cdot |\Gamma_{u,w}^{\ell-1}| \cdot |\Gamma_{w,v}^{\ell-1}|$}
    \item $r_w \minuseq r_u \cdot |\Gamma_{u,w}^{\ell-1}|$ \codecomment{$r_w = \tau_w$ (and $r_u = \tau_u$)}
\end{enumerate}
Repeating this for all potential midpoints $w$ gives us the outer sum, and hence the invariant above. Repeating for every $u$--$v$ pair, by linearity we arrive at a final value of
$$r_v \pluseq \sum_u \tau_u \cdot |\Gamma_{u,v}^{\ell}| \quad \forall v \in V$$
which we take as the recursive invariant for our algorithm.
This fits the base case of $\ell = 0$, since we add $r_u$ to $r_v$ for each $u$ which has an edge to $v$, while the analysis of \cite{BC92} shows that it holds for $\ell$ given access to $\ell - 1$. Crucially for the catalytic restriction, at the end of each subroutine and hence the entire algorithm, $r_u$ and $r_w$ are left in their original configuration.

We now turn to the time of the recursion.
The above procedure, looping over all $w$ independently, is the structure of Savitch's algorithm~\cite{Sav70}, and so following \cite{BBRS98} we generalize our recursive calls to work over choosing middle color classes rather than individual vertices.
Once again we choose to have $\lambda$ color classes of size $n/\lambda$ each, and in making four recursive calls per middle color class (rather than the two from \cite{BBRS98}), our runtime becomes $(4\lambda)^{\log(\lambda)}$, which does not change our asymptotics for $\lambda = 2^{\sqrt{\log(n)}}$.
Our vectors of length $n/\lambda$ are entirely catalytic, with our free space being only the $\bigO(\log n)$ necessary to compute sums and keep track of our runtime.

As a final note, to make the analysis of our recursive condition simpler, we use the \emph{same} recursive algorithm for the long paths as well, with some minor interfacing between the two to reflect the change in graphs.
In the outer algorithm we run for paths of length up to $|U|$, i.e. maximal length paths, but in exchange we remove the color class structure and operate on the entire graph at every step. 
This gives a runtime of $4^{\log |U|} = \bigO(n^2)$, which is linearly less efficient than \cite{BBRS98} and \cite{cook_pyne_26_efficient}.

\paragraph*{Efficiently Constructible Representative Sets.}
\noindent We return to our final loose end for $\stconn$, which is the construction of the representative set for our long paths algorithm using low space. To do this, we use \emph{pairwise-independent hash function families} $\calH_{m,n}$ to construct a family of \textit{multisets} $\{U_\sigma\}$, indexed by $\sigma \in \{0,1\}^{(6+\eps)\log(n)}$, and prove that with high probability over the choice of $\sigma$, a multi-set from this family is a representative set for $G$. For convenience of exposition, we assume $\{U_\sigma\}$ as a family of sets, and deal with the technical challenges that arise in the $\stconn$ algorithm from $U_\sigma$ being a multi-set in \cref{sec:graph_decomp_bbrs}. 

Recall (from \cite{vadhan_psuedo_2012}) that for $m \leq n$, $\calH_{m,n}$ is a pair-wise independent family of hash functions $h : [m] \rightarrow [n]$ indexed by $2\log(n)$ bits, such that an $h$ chosen at random from $\calH_{m,n}$ maps any pair $i \neq j \in [m]$ into $[n]$ independently. Moreover, each hash function can be evaluated on an input $i \in [m]$, in $\polylog{n}$ time and $\bigO(\log(n))$ space (see \cref{lem:pwi_hash_props} for the formal statement).

For the construction, we set $m = \bigO(n/\lambda)$, and suppose that we sample $K = \bigO(\log(n))$ many hash functions $h_1, \dots, h_K$ from $\calH_{m,n}$, and consider $U_\sigma$ as the union of their images, where $\sigma$ is the random seed used to sample them. Using a probabilistic argument in \cref{sec:hitting_sets}, we show that with high probability over the choice of the $h_1, \dots, h_K$ (or $\sigma$) the following holds: for every pair of vertices $u,v$ from $G$ that are connected by a path of length $\lambda/2$, there exists a walk between them traversing $\lambda/2-1$ distinct vertices that visits $U_\sigma$. 

With high probability over $\sigma$, $U_\sigma$ forms a representative set for $G$, over paths of length $\lambda$. Indeed, for any $u,v$ that is connected, suppose we break this path into $L'$ sub-paths of size $\lambda/2$. By the properties of a ``good" $U_\sigma$, we see that each sub-path can be replaced by a walk traversing $\lambda/2-1$ distinct vertices and visiting $U_\sigma$. 
The distance between consecutive vertices from $U_\sigma$ along this new path from $u$ to $v$ is at most $\lambda$, so $U_\sigma$ can be used to sparsify $G$.

However, the length of $\sigma$ is $2K\log(n) = \bigO(\log^2(n))$, and we cannot store this within our $\bigO(\log(n))$-sized workspace. As such, we use the well-known technique of reducing randomness by performing a random walk on a \textit{constant-degree explicit expander} $D$, with suitably high expansion \cite{AKS87_derand}. In particular, sampling $K$ hash functions using a random walk on $D$ specified over vertex set $\{0,1\}^{2\log(n)}$ (that describes all of $\calH_{m,n}$), gives similar guarantees of intersecting with walks, as that of picking $K$ hash functions uniformly at random. The number of random bits is now at most $\bigO(\log(n))$, and by carefully accounting the values of $K$, the degree and expansion of $D$, a seed length of $(6+\eps)\log(n)$ suffices to describe $U_\sigma$.

Finally, any vertex in $U_\sigma$ can be generated, given $\sigma$, in $\polylog{n}$ time and $\bigO(\log(n))$ space. This holds from standard explicit constructions of constant-degree expanders, and efficient evaluation of hash functions in $\calH_{m,n}$.

\subsubsection{$\ed$, $\lcs$, \& $\discreteFrechet$ Without Direct Reductions To $\stconn$}
We now turn our attention to computing $\ed$, $\lcs$, \& $\discreteFrechet$.
A natural direction would be to use the reduction to reachability of grid-like graphs and use the previous algorithm.
However, on inputs of length $n$, the underlying grid-like graph has quadratic size, and thus the space required by the previous algorithm would be $\frac{n^2}{2^{\Omega(\sqrt{\log n})}}$ -- far larger than the space we would like to use.

\paragraph*{$\lcs$ and $\ed$ in Sublinear Free Space (\cite{KiyomiHO21}).}
While~\cite{BBRS98} have the outer algorithm with a single color class (corresponding to the representative set) and recursion depth $\bigO(\log n)$ and the inner algorithm with $\lambda$ color classes and recursion depth $\bigO(\log(\lambda))=\bigO(\sqrt{\log n})$, \cite{KiyomiHO21} approach the problem by effectively using the ideas similar to the inner algorithm of~\cite{BBRS98} as their outer algorithm.
In the outer algorithm, given a color class in the first layer\footnote{The grid-like graph is not layered in the usual sense, which is not an issue in this algorithm as it only uses free space.} and a color class in the last layer, \cite{KiyomiHO21} choose a middle layer of the grid-like graph and then iterate over the $\lambda$ color classes of size $n/\lambda$ as the possible choices of a middle vertex of the path, doing this recursively until a base case is reached, when the two layers are at distance $n/\lambda$.
In this base case, the inner algorithm takes over.
Because the grid-like graph only has edges to close neighbors, either the two color classes are far enough so that there is no possible path between the two, or the graph containing the two color classes and all paths between them contains $\bigO(n/\lambda)$ rows and $\bigO(n/\lambda)$ columns, and therefore we can compute the shortest path from the first color class to the second color class by the standard dynamic programming algorithm that only requires us to remember two rows at a time.

It is not immediate that one can emulate the algorithm of  \cite{KiyomiHO21} using catalytic tools.
In particular, their algorithm is designed to find the shortest path which is necessary for computing edit distance.
Thus the algorithm always has to choose the shortest path that leads to a given middle point.
However, the ``push" algorithm of Cook and Pyne sums all the paths leading to the middle point.
It is not clear how to select the minimum path among them.
More to the point, the values we are summing in the algorithm of Cook and Pyne have nothing to do with the true values we want to take minimum of
as they are offset by some unknown initial values from the catalytic memory.
Taking their minimum would be completely useless.

The standard approach to overcome this would be to layer the graph and count the length of the paths implicitly.
Checking whether the length of a path from a source vertex to the target vertex is of length at most $d$ is done 
by checking whether the source vertex in the first layer is connected to the target vertex in the $d$-th layer.
However, the layering increases the size of the graph by a factor of $\bigO(n)$ so any possible savings in space from clever algorithms will be wiped-out.

To overcome this, we stick to the original graph but introduce multiplicative weights to the edges.
It turns out the the algebraic framework of the Cook-Pyne algorithm can be modified to count a {\em weighted sum} of the paths.
The weight of a path will be the product of its edge weights.
By choosing weights that are large enough, different path weights will be separated so that from their weighted sum we can read off the count of the paths of different weights.
This assumes we will work with large integers that we will not have space to store;
the path weights will have $\bigO(n^2)$ bits.
We overcome this by computing the final value modulo different primes, that is we compute it in Chinese remainder representation.
Using a space-efficient algorithm that converts Chinese remainder representation into the usual binary representation,
we get bit-by-bit access to the actual weighted sum.
From it we can read off the length of the shortest path as well as the length of the longest path.

So our goal will be to design an algorithm for weighted grid graphs that computes the weighted sum of paths between two vertices modulo a small prime.

\paragraph*{Weighted Reachability on Grid Graphs.}
We use a similar high-level technique to finding the shortest path in grid-like graphs as the algorithm~\cite{KiyomiHO21}.
Instead of finding the shortest path we will  compute the weighted sum of the paths.

Again, our algorithm for weighted reachability on grid graphs will be composed of an outer and an inner algorithm.
For practical purposes, we assume that the edge weights are given to us by an oracle, and we use this oracle as a parameter for our three functions of interest which we want to compute.
Moreover, for this algorithm we assume all values to be over a field $\F=\F_p$ for some prime $p=\poly(n)$.

The outer algorithm will be essentially identical to the inner ``short paths'' algorithm from our algorithm for general $\stconn$.
It will use recursion of depth $\bigO(\sqrt{\log n})$ with color classes of size $n/\lambda$, for $\lambda = 2^{\Theta(\sqrt{\log n})}$,
to reduce the weighted reachability on a grid of size $n\times n$ to the weighted reachability on grid-graphs of size roughly $n/\lambda \times n/\lambda$.
This smaller reachability will be solved by the inner algorithm.

However, we do not have the luxury of using our previous algorithm as our inner algorithm, given that our goal is to use only $\bigO(\log n)$ free space.
Therefore, we use a similar idea to the outer algorithm of~\cite{BBRS98} together with the ideas of Cook and Pyne.
We are interested in computing weighted reachability for two color classes of size $n/\lambda$ at horizontal distance $n/\lambda$.
Two color classes at such horizontal distance in a grid-graph can affect each other only if they are adjacent vertically. 
Thus, we get a subgrid with $n/\lambda$ layers and at most three color classes, yielding $3\cdot n/\lambda$ vertices per layer.
We now use the idea of propagating sums for the outer algorithm in our $\stconn$ algorithm while recursing again on the middle layers.
We will now have only a single color class, which is not an issue with $\bigO(n/\lambda)$ vertices per layer.
Our final base case here is the case of two neighboring layers, where we can iterate over all $\bigO(n/\lambda)$ edges and multiply intermediate values by their weight.

Here, we only need a constant number of bits of free space per level of recursion for the instruction counter, and therefore the total recursion depth $\bigO(\log n)$ does not pose any large issues space-wise.
Moreover, we can implement the recursion in a constant number of steps per each recursion level, thus we get time complexity $c^{\bigO(\log n)}\cdot \bigO(n/\lambda)=\poly(n)$.
Regarding catalytic space usage, if we use a new catalytic register vector at every recursion level, we require $\bigO(\log n)$ vectors, each using $n/\lambda$ field elements, which each requires $\bigO(\log p)$ bits, which in total yields the required catalytic space complexity $\frac{n}{2^{\Omega(\sqrt{\log n})}}$.

To finally compute the weight of all $u,v$-paths modulo $p$, we first run the algorithm forward, recording the result $r_1$, then rerun the computation backwards.
We than add 1 to the register corresponding to $u$, and run the computation again, recording the result $r_2$, and rerun the computation backwards for the final time.
We then subtract 1 from the register corresponding to $u$ and return $r_2-r_1$ as the result.

\paragraph*{Using Grid Reachability to Compute Discrete Fr\'{e}chet Distance.}
For discrete Fr\'{e}chet distance of point sequences $P=(p_1,\ldots,p_n)$ and $Q=(q_1,\ldots,q_n)$, we use the previous algorithm as a blackbox, and we provide a suitable edge weight oracle.
We also assume that the distances of the points are integers in the range $\{0,\ldots,\poly(n)\}$.
If we are given a more complicated metric, we can build a range query oracle out of it -- that is, we can compute how many different point pairs have distance lower than the queried point pair $(p,q)$, by iterating over all the other pairs, and thus we would get an oracle whose values are in the range $\{1,\ldots,n^2\}$.

One possible way of computing discrete Fr\'{e}chet distance is by taking a grid-like graph on vertex set $\{0,\ldots,n\}\times\{0,\ldots,n\}$, where each vertex $(i,j)$ has three outgoing edges into vertices $(i+1,j), (i,j+1), (i+1,j+1)$ if these vertices exist.
We then build an induced subgraph which we call $FRECH_{P,Q,m}$, where we set the weight of each edge $((i,j), (k,\ell))$ to one if the distance between $p_k$ and $q_\ell$ is at most $m$ and zero otherwise.
As there are only $\poly(n)$ possible values for $m$, we can try all possible values of $m$, and the smallest value of $m$ such that there is a path between $(0,0)$ and $(n,n)$ is the Fr\'{e}chet distance.
However, there may be exponentially many paths and we only allow computation modulo primes of polynomial size.
We remedy this using the Chinese remainder theorem and the fact that the total weight is zero iff the total weight module the first $4n$ primes is zero.
Thus, we run the algorithm for $4n$ primes with the same weights (as they are all either zero or one) to determine whether the discrete Fr\'{e}chet distance is at most $m$, and we use this to run binary search over all choices of $m$ to compute $\discreteFrechet$ in polynomial time, logarithmic free space and catalytic space $\frac{n}{2^{\Omega(\sqrt{\log n})}}$.

\paragraph*{Using Grid Reachability to Compute $\ed$ and $\lcs$.}
For $\ed$ and $\lcs$ for $n$-character strings, we want to find the shortest or, respectively, longest path in a weighted grid-like graph -- essentially the graph for $\discreteFrechet$. However, unlike $\discreteFrechet$, we need to take care of different weights of edges.
Furthermore, in our approach, we multiply weights instead of adding them, and hence we must move to addition in the exponent.

For the grid-like graph, we note that there at most $3^{2n}$ paths (the longest path has length $2n$, and each vertex has out-degree at most three).
In all cases, the edges in the grid-like graph have two possible weights: 0 or 1, but some edges might also be missing.
Therefore, we can simulate addition by having the weights as follows: if the edge is missing, its weight will be set to 0.
(And therefore, any path using such edge will have zero total weight, which effectively removes the edge.)
For the existing edges, if an edge had weight $\mu(e)$ originally, then our new weight will be $16^{n\cdot \mu(e)}$.
Therefore, weight-zero edges will have their new multiplicative weight set to 1, and weight-one edges will have their multiplicative weight set to $16^n$.
This also ensures that we can efficiently learn the longest/shortest path in the graph: given the result, which is a sum of weights of all $s,t$-paths, the lowest bit set to 1 corresponds to the length of the shortest path: indexing bits such that the least significant bit has index 0, any of the bits $4n(w+1)-1,\ldots,4nw$ being set to one corresponds to there existing a path of weight $w$ in the original additive weighting scheme.
Therefore, we can just check all the bits of the number to find the largest and the smallest non-zero bit index.

There is one slight issue: if we were computing this over integers, the representation of the value $r$ would have $\Theta(n^2)$ bits, which we cannot afford to represent directly.
However, we can use the Hesse-Allender-Barrington algorithm~\cite{HesseAB02} for reconstruction integers from their Chinese remainder representation that uses polynomial time, polynomial number of queries for the value of the $r\bmod p_i$ for some primes $p_i$ and logarithmic space to output a particular bit of $r$ in the binary representation.
And since our algorithm returns all results modulo $p$, each query to $r\bmod p_i$ corresponds to us running our reachability algorithm over the field $\F_{p_i}$.
Thus, we get again polynomial time, logarithmic free space and catalytic space $\frac{n}{2^{\Omega(\sqrt{\log n})}}$.

We note that this approach also works for the weighted versions of edit distance if the weights are $\bigO(\log n)$-bit integers with appropriate changes to the weighting scheme.

\subsection{Future Work}
\label{sec:intro:open}

We consider these algorithms to be an initial step towards bypassing the current $\frac{n}{2^{\Theta\left( \sqrt{\log(n)} \right)}}$ space barrier for these problems. The NNJAG lower bounds which substantiate this barrier are based on pebbling techniques, which was also the case for Tree Evaluation for over a decade before finally being broken by the catalytic algorithm of \cite{CM24_tree_eval}. Their algorithm bypasses these pebbling lower bounds by combining multiple pebbles in the same location, a process which has been called ``fuzzy pebbling'' by later reviews on the topic. This is also the situation for our $\stconn$ algorithm, where the connectivity of many nodes in the graph are all stored on top of one another; thus there is hope that further development of catalytic techniques or more clever instantiations will allow us to go beyond this longstanding space-time barrier.
\section{Preliminaries}
\label{sec:prelims}

\subsection{Catalytic Model}
Our machine model in this paper is an extension of ordinary space-bounded computation:

\begin{definition}[Catalytic Turing machine]
    A \emph{catalytic Turing machine} is a Turing machine with four tapes: 1) a read-only \emph{input} tape; 2) a write-only \emph{output} tape; 3) a read-write \emph{work} tape; and 4) a read-write \emph{catalytic} tape.
    
    We say that a function $f: \{0,1\}^n \rightarrow \{0,1\}^m$ can be solved in $\CSPACE[s,c]$ if there exists a catalytic Turing machine $M$ whose input, output, work, and catalytic tapes have length $n$, $m$, $s$, and $c$ respectively, such that for any $x \in \{0,1\}^n$ and $\tau \in \{0,1\}^c$, if $M$ is initialized with $x$ on the input tape and $\tau$ on the catalytic tape, then after execution $M$ halts with $f(x)$ on the output tape and $\tau$ on the catalytic tape.

    We further say that $f$ is in $\CTS[t,s,c]$ if $M$ runs in time $t$ for every $x$ and $\tau$.
\end{definition}

Since time is a consideration in our algorithm, we specify that in this paper we will use {\em Random Access Turing machines}, which allow us to jump with their heads to a particular tape cell, specified on an auxiliary {\em address} tape consisting of $\lceil \log(c+s)\rceil$ bits, as an atomic operation. This does not provide the machines with extra computational power, but only allows for faster simulation of RAM machines.

While our ultimate goal in catalytic computation is to compute a function into ordinary memory while leaving catalytic memory untouched, along the way it is necessary to alter the catalytic tape in predictable (and reversible) ways:

\begin{definition}[Catalytic subroutine]
    A \emph{catalytic subroutine} is a program $P$ whose net result is to 1) transform a section of the catalytic tape $\tau$ via addition, i.e. $\tau \pluseq v$ for some value $v$; 2) manipulate the free memory in any way; and 3) leave all other catalytic memory in its original configuration. We also require that every catalytic subroutine $P$ has an inverse program $P^{-1}$ whose result is to subtract $v$ from the same region instead.
\end{definition}

Our algorithms will need to operate with catalytic memory containing values from a particular field $\F_p$. In the appendix we explain how this can be done efficiently.

\subsection{Function Statements}
We first define all functions of interest in this paper:

\begin{definition}[Connectivity]
    For a directed graph $G = (V,E)$ and two vertices $s,t \in V$, we say $s$ and $t$ are \emph{connected} if there is a path from $s$ to $t$ in $G$.
\end{definition}

\begin{definition}[Longest common subsequence]
    For a string $x\in \Sigma^*$, a \emph{subsequence} is a string $z$ that can be obtained by removing any number of characters of $x$.

    For two strings $x,y\in\Sigma^*$, the \emph{longest common subsequence} is the longest $z\in\Sigma^*$ that is a subsequence of both $x$ and $y$, and we define the corresponding function $\lcs(x,y):=|z|$.
\end{definition}
\begin{definition}[Edit distance]
    Given a string $x\in \Sigma^*$, an \emph{edit operation} is either a deletion, insertion or a substitution of a character in $x$.

    For two strings $x,y\in \Sigma^*$, their \emph{edit distance} $\ed(x,y)$ is the minimum number of edit operations needed to transform $x$ into $y$.
\end{definition}

\begin{definition}[Discrete Fr\'{e}chet distance]
    For two sequences $P=(p_1,\ldots,p_m), Q=(q_1,\ldots,q_n)$ of points from a metric space $(M,d)$, their \emph{monotone alignment} is a sequence of pairs $(i_k,j_k)\in [m]\times [n]$ of some length $\ell$ such that $(i_1,j_1)=(1,1)$, $(i_\ell,j_\ell)=(m,n)$ and for all $2\leq k\leq \ell$, $i_k \in \{ i_{k-1}, i_{k-1}+1\}$ and $j_k \in \{ j_{k-1}, j_{k-1}+1\}$.
    
    The \emph{discrete Fr\'{e}chet distance} of $P$ and $Q$ is 
    \[
        \discreteFrechet(P,Q)=\min_{A}\max_{(i,j)\in A} d(p_i,q_j),
    \] where $A$ ranges over all possible monotone alignments.
\end{definition}
Informally, the discrete Fr\'{e}chet distance describes the situation, where there are two frogs connected by a rope at points $p_1,q_1$, and in each step, each frog may jump forward to the next point in the sequence.
The distance itself is then the minimum length of the rope such that the frogs can jump in a way that the rope does not snap.

In our algorithm, we assume that the metric $d$ is represented by integers from the range $\{0,\ldots,\poly(n)\}$.
This assumption is without loss of generality: there are $m\cdot n$ pairs in $P\times Q$, and it follows from the definition that the distance of one of the pairs is the discrete Fr\'{e}chet distance, and our algorithm will only use the queries of the form $(p_i,q_j)$ for some $p_i\in P, q_j\in Q$.
Thus, we can represent the order of distances in the metric by simply ranking the distances between the pairs of points.
These results are integers from the set $\{1,\ldots,m\cdot n\}$, which is sufficient.
Provided the distances from the original metric can be compared using $\bigO(\log n)$ bits of space, we can translate from the metric into the ranks in logarithmic space and roughly quadratic time per query.

\subsection{Pairwise Independent Hash Functions}
Our algorithms for $\stconn$ will require efficiently constructible \emph{hash functions} for sparsifying our graph $G$:

\begin{definition}[Pairwise Independent Hash Functions]
    \label{def:pwi_hash}
    A family of functions $\calH_{m,n} = \{h : [m] \rightarrow [n] \}$ is \textsf{pairwise independent} if for every $x_1 \neq x_2 \in [m]$, and every $y_1, y_2 \in [n]$, when a function $h$ is chosen uniformly at random from $\calH_{m,n}$,
    $$\Pr_{h \sim \calH_{m,n}} [h(x_1) = y_1 \land h(x_2) = y_2] =  \frac{1}{n^2}.$$
    In addition, for any $h \in \calH_{m,n}$, let $\mathsf{Im}(h)$ be the image of $h$.
\end{definition}
\noindent It follows from \Cref{def:pwi_hash} that for every $x, y \in [m]$, $\displaystyle \Pr_{h \sim \calH_{m,n}}[h(x) = y] = 1/n$. 

Important for our result is that there exist pairwise independent hash functions that have a short description and can be constructed efficiently:
\begin{lemma}[\cite{vadhan_psuedo_2012}]
    \label{lem:pwi_hash_props}
    For every $m,n \in \N$, there exists a pairwise independent hash function family $\calH_{m,n}$, where a random function can be sampled from $\calH_{m,n}$ using $r = \max \{\log(m),\log(n)\} + \log(n)$ bits. 
    
    Moreover, given any $h \in \calH_{m,n}$ (described by a string in $\Bits^r$) and $x \in [m]$, $h(x)$ can be computed in time $\polylog{m,n}$ and space $\bigO(r)$.
\end{lemma}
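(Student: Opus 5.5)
The natural route is the classical affine construction over a finite field, since it gives the required seed length and evaluation cost at the same time. Let $k = \max\{\lceil\log m\rceil, \lceil\log n\rceil\}$ and work in $\F_{2^k}$, identifying $[m]$ with a subset of $\F_{2^k}$ via binary encodings (possible since $m \le 2^k$). For $a,b \in \F_{2^k}$ define $g_{a,b}(x) = ax+b$. The family $\{g_{a,b}\}$ is pairwise independent as a family $\F_{2^k}\to\F_{2^k}$. Indeed, for $x_1\neq x_2$ and any targets $y_1,y_2$, the linear system $ax_1+b=y_1$, $ax_2+b=y_2$ has a Vandermonde-type matrix with determinant $x_1-x_2\neq 0$, so it has exactly one solution $(a,b)$. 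The probability is therefore $1/2^{2k}$.

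To land in $[n]$, I would truncate the output to its first $\lceil\log n\rceil$ bits, assuming as usual that $n$ is a power of two (otherwise round $n$ up, which is the standard convention in \cite{vadhan_psuedo_2012}). Truncation maps the uniform pair of outputs to a uniform pair over $[n]^2$, because each element of $[n]$ has exactly $2^{k}/n$ preimages.

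For the seed length I would use a slightly more economical version. Take $a\in\F_{2^k}$ and $b\in\Bits^{\log n}$, and let $h_{a,b}(x)$ be the first $\log n$ bits of $ax$, XORed with $b$. Pairwise independence still holds. For fixed $x_1\neq x_2$, the map $a\mapsto a(x_1-x_2)$ is a bijection, so the truncated difference $h(x_1)\oplus h(x_2)$ is uniform. Independently, the uniform shift $b$ makes $h(x_1)$ uniform. Together these make the pair $(h(x_1),h(x_2))$ uniform over $[n]^2$. The seed is then $k+\log n = \max\{\log m,\log n\}+\log n$ bits, as claimed.

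For efficiency, I would fix an irreducible polynomial of degree $k$ over $\F_2$. Such a polynomial can be found deterministically in $\poly(k)$ time and $\bigO(k)$ space, by brute force over candidates with irreducibility testing, or we can use the explicit trinomials $x^{2\cdot 3^\ell}+x^{3^\ell}+1$ when $k$ has that form, padding $k$ up as needed. Multiplication in $\F_{2^k}$ is then schoolbook polynomial multiplication followed by reduction. This takes $\bigO(k^2)=\polylog{m,n}$ time and $\bigO(k)=\bigO(r)$ space, and the XOR and truncation are trivial.

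The only step requiring care is making the irreducible polynomial available within $\bigO(r)$ space and polylog time. Brute-force search over all $2^k$ candidates costs $\poly(m,n)$ time rather than polylog. So I would either rely on explicit irreducible families, adjusting $k$ upward by a constant factor (which only changes the seed length by a constant factor, and can be absorbed by redefining $r$), or assume the polynomial is computed once and treated as part of the description. This is precisely the convention used in \cite{vadhan_psuedo_2012}, so I would cite that directly.
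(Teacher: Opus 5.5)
Your construction, $h_{a,b}(x)=\mathrm{trunc}_{\log n}(ax)\oplus b$ with $a\in\F_{2^k}$, $k=\max\{\log m,\log n\}$ and $b\in\Bits^{\log n}$, is exactly the standard construction behind the paper's citation of \cite{vadhan_psuedo_2012}. The paper gives no proof of its own. Your pairwise-independence argument is correct: $\mathrm{trunc}(a(x_1-x_2))$ is uniform by $\F_2$-linearity of truncation, and the independent shift $b$ handles the rest. The seed count $k+\log n$ is also correct.

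The one loose end is the field modulus, and here your write-up is inconsistent. Exhaustive search does \emph{not} find an irreducible polynomial in $\poly(k)$ time, as you concede a paragraph later. Padding $k$ up to the form $2\cdot 3^\ell$ also does not prove the lemma as stated, because it can nearly triple the $\max\{\log m,\log n\}$ part of the seed. This matters here: the paper uses the exact value $r=2\log n$ to obtain the $(6+\eps)\log n$ seed in Section~\ref{sec:hitting_sets}.

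The clean fix is your second option, stated precisely. The irreducible polynomial depends only on $k$, not on the random seed, so it is a fixed parameter of the family and costs no random bits. You can compute it once, either by exhaustive search with Rabin's irreducibility test in $\bigO(k)$ space and $\max\{m,n\}\cdot\polylog{m,n}$ total time, or in $\poly(k)$ time by Shoup's deterministic algorithm. Keep it in $\bigO(k)$ bits of work space; after that, each evaluation takes $\bigO(k^2)$ time and $\bigO(k)$ space, as you describe.
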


\subsection{Modular Arithmetic}

In addition to hash functions, we will also require working over the integers modulo a prime $p$ of logarithmic bit-size in an efficient manner:

\begin{proposition} \label{prop:fieldwork}
    Given a prime $p$, we can compute addition, subtraction, and multiplication modulo $p$, in time $\polylog{p}$ and space $\bigO(\log(p))$.
\end{proposition}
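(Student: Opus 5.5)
We will prove \cref{prop:fieldwork} by reducing each operation to schoolbook integer arithmetic on $\bigO(\log(p))$-bit numbers, followed by one reduction modulo $p$. Write $k = \lceil \log(p) \rceil$ and assume the inputs $a, b \in \{0,\ldots,p-1\}$ are given as $k$-bit binary strings. The prime $p$ sits either on the work tape or at a fixed location we can read. Throughout, the random-access machine model of the preliminaries lets us read and write any bit in one step.

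\textbf{Addition and subtraction.} For addition, we compute $a+b$ with the ripple-carry algorithm in $\bigO(k)$ time and $k+1$ bits of space. Since $a+b < 2p$, we compare $a+b$ with $p$ bit by bit from the most significant bit and subtract $p$ once if $a+b \geq p$. Each of these steps is linear in $k$. Subtraction is symmetric: we compute $a-b$ with a borrow. If it is negative, which we detect from the final borrow, we add $p$. Both operations therefore take $\bigO(k)$ time and space.

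\textbf{Multiplication.} We use double-and-add (Horner's scheme) over the bits of $b$ from most to least significant. We keep an accumulator $c \in \{0,\ldots,p-1\}$ and, for each bit $b_i$, update
\[
c \leftarrow 2c \bmod p, \qquad \text{then } c \leftarrow (c + b_i \cdot a) \bmod p .
\]
Each update is a doubling, which is an addition of $c$ to itself, followed by an optional addition of $a$, both modulo $p$ as above. So each of the $k$ iterations costs $\bigO(k)$ time, giving $\bigO(k^2) = \polylog{p}$ time in total. Space stays at $\bigO(k)$: the accumulator, one temporary, and an $\bigO(\log k)$-bit loop counter. Correctness follows by induction on the invariant that after processing the top $j$ bits of $b$, $c \equiv a \cdot \lfloor b / 2^{k-j} \rfloor \pmod p$.

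\textbf{Main point to check.} There is no real obstacle. The only care needed is to reduce modulo $p$ after every addition, so that no intermediate value exceeds $2p$ and hence $k+1$ bits. This is what keeps both the space bound $\bigO(\log(p))$ and the per-step cost $\bigO(\log(p))$. The alternative of forming the full $2k$-bit product and then long-dividing by $p$ would also work within the same bounds, but the interleaved reduction avoids writing out a division routine.
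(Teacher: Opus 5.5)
Your proof is correct and is the argument the paper has in mind: the paper states this proposition without proof as a standard fact. Your schoolbook addition and subtraction with one conditional correction by $p$, together with double-and-add multiplication that reduces after every step (so intermediates stay below $2p$), is the routine justification for the $\polylog{p}$ time and $\bigO(\log(p))$ space bounds; the remark about one-step random access is unnecessary, since every step you use is a sequential scan over $\bigO(\log(p))$ bits.
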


\noindent
Working with modulo primes will be a useful proxy for testing whether or not a number is non-zero:

\begin{proposition} \label{prop:rand_prime}
    Let $0 < m \leq 2^t$, and let $t \log t \leq p \leq 2t \log t$ be a randomly chosen prime. Then $m \not\equiv 0 \mod p$ with probability 0.99 over $p$.
\end{proposition}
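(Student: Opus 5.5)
The plan is to bound how many primes in the window $[t\log t, 2t\log t]$ can divide $m$, and compare that with the total number of primes in the window. Throughout, assume $t$ is at least a sufficiently large constant, and take logarithms base 2.

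\textbf{Step 1: few primes divide $m$.} Let $q_1,\dots,q_k$ be the distinct primes in the interval that divide $m$. Each $q_i \ge t\log t > 1$, and since they are distinct primes, their product divides $m$. Because $m \neq 0$, this gives
\[
(t\log t)^k \le \prod_i q_i \le m \le 2^t .
\]
Hence $k \le t/\log(t\log t) \le t/\log t$.

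\textbf{Step 2: many primes lie in the interval.} Set $N = t\log t$. By the prime number theorem, or a Chebyshev-type explicit bound, the number of primes in $[N,2N]$ is
\[
\pi(2N)-\pi(N) \ge c\,\frac{N}{\ln N}
\]
for an absolute constant $c>0$; asymptotically $c$ approaches $1$. Since $\ln N = \Theta(\log t)$, this count is $\Omega\big(t\log t/\log t\big) = \Omega(t)$. More precisely, it is about $t\log t/(\ln 2\cdot\log t) \approx 1.44\,t$ primes.

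\textbf{Step 3: combine.} The failure probability is at most
\[
\frac{t/\log t}{\Omega(t)} = O\!\left(\frac{1}{\log t}\right),
\]
which is at most $0.01$ once $t$ exceeds a constant.

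\textbf{Main obstacle.} The only delicate point is the constant bookkeeping in Step 2. We need the prime count in the window to be an explicit constant times $t$. Against that, the bad-prime bound $t/\log t$ is smaller by a $\log t$ factor, so any explicit Chebyshev estimate (e.g.\ Rosser--Schoenfeld) suffices for large $t$.

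For small $t$, the statement as written cannot hold unconditionally; for instance, the interval may contain only a single prime. I would therefore state that $t$ is larger than a fixed constant. Alternatively, one can enlarge the window by a constant factor, which is harmless for the paper's use with $t=\poly(n)$-bit numbers.
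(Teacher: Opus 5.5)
The paper states this proposition as a standard fact without proof. Your argument is the standard one and is correct once $t$ exceeds an absolute constant: the product bound shows at most $t/\log(t\log t)$ primes of the window divide $m$, while Chebyshev/PNT gives $\Theta(t)$ primes in the window.

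Your caveat is warranted, but for a different reason than the one you give. For integer $t\ge 2$ the window always contains at least two primes, so a single-prime window is not the issue. The real issue is that your failure bound is asymptotically $\ln 2/\log_2 t$, and it is tight for $m$ equal to a product of the smallest window primes. So probability $0.99$ with a factor-$2$ window needs $\log_2 t\approx 70$. Widening the window by a larger constant factor, as you suggest, brings this threshold down to a modest constant.
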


\noindent
Beyond testing for zero, we will sometimes need to represent numbers exactly, for which we use the \emph{Chinese remainder representation} of a number $x$ over many primes.
There is a log-space procedure that can calculate bits of $x$ in binary representation using {\em oracle} access to the Chinese remainder representation of $x$ \cite{HesseAB02}:
\begin{proposition}[\cite{HesseAB02}]\label{prop-CRR}
Let $p_i$ denote the $i$-th prime.
There is a procedure that works in space $\bigO(\log n)$ with query access to $x \bmod p_i$, for $i=1,\dots, n$, where $0\le x <2^n$ is an arbitrary integer, 
that on input $(n,j)$ represented in binary, where $n>j\ge 0$ are integers, outputs $j$-th bit of the binary representation of $x$.
The procedure runs in polynomial time and makes in total $\bigO(n^2)$ queries to $x \bmod p_i$.
\end{proposition}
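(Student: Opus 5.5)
The plan is to prove \cref{prop-CRR} by reduction to the Hesse--Allender--Barrington result rather than from scratch. Their theorem says that Chinese remainder representation (CRR) to binary conversion lies in $\mathsf{DLOGTIME}$-uniform $\mathsf{TC}^0$, hence in $\mathsf{L}$. The input to that conversion is the list of residues $x \bmod p_i$ for $i \le n$. A logspace machine reads this list only through its input head, so we will replace every read of the $i$-th residue by an oracle query for $x \bmod p_i$. The query is issued with $i$ written in binary on the work tape, which costs $\bigO(\log n)$ bits. This gives the oracle version directly; what remains is to check that the stated parameters hold.

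\textbf{Step 1: the residues determine $x$.} Let $M=\prod_{i\le n} p_i$. Since every prime is at least $2$, we have $M \ge 2^n > x$, so the residues uniquely determine $x$. The primes $p_i \le \bigO(n\log n)$ can be generated by trial division in $\bigO(\log n)$ space.

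\textbf{Step 2: reduce to a fractional sum.} Write $x=\sum_i a_i (M/p_i) \bmod M$, where $a_i = (x \bmod p_i)\cdot (M/p_i)^{-1} \bmod p_i$. Then $x/M$ equals the fractional part of $S=\sum_i a_i/p_i$. Each $a_i$ needs only $\bigO(\log n)$ bits and is computed from one query. The inverse is obtained by computing $M/p_i \bmod p_i$ as a running product mod $p_i$, followed by exponentiation mod $p_i$; \cref{prop:fieldwork} makes each arithmetic operation cheap.

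\textbf{Step 3: compute bits of $S$.} Each bit of $a_i/p_i$ at a given precision comes from long division with $\bigO(\log n)$ state. Iterated addition of $n$ numbers of $\bigO(n)$ bits is in logspace by the standard column-sum/carry-counting argument. Hence bits of $\mathrm{frac}(S)$ to precision $n+\bigO(\log n)$ are logspace-computable with oracle access. Each bit requires recomputing all $n$ values $a_i$, which is $n$ queries. Reading off $\bigO(n)$ relevant bits therefore gives the claimed $\bigO(n^2)$ queries in total; any further recomputation is arranged so that no residue query is needed beyond these.

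\textbf{Step 4 (main obstacle): recover bit $j$ of $x$.} Recovering bit $j$ of $x = M\cdot \mathrm{frac}(S)$ requires iterated multiplication of the $n$ primes, or equivalently dividing by $M$ with enough precision. This is precisely the deep part of Hesse--Allender--Barrington: they show that iterated integer multiplication, and division, are in uniform $\mathsf{TC}^0 \subseteq \mathsf{L}$. I would not reprove this but cite it as a black box. That black box is a fixed logspace procedure on the precomputed sequence and involves no residues, so it uses no queries.

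\textbf{Putting it together.} Compose the logspace transducers from Steps 2--4 by the usual recomputation technique. Space remains $\bigO(\log n)$ and time remains polynomial, since a composition of logspace procedures runs in polynomial time. Oracle queries occur only in Steps 2 and 3, giving the $\bigO(n^2)$ total query bound.
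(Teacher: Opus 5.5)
Your top-level move is the same as the paper's, which gives no proof and simply cites \cite{HesseAB02}: run a logspace CRR-to-binary procedure and turn every read of a residue into a query. That does give $\bigO(\log n)$ space and polynomial time. Steps 1 and 2 are also fine. However, the black-box replacement bounds the number of queries only by the running time. The $\bigO(n^2)$ bound therefore rests on your Steps 3--4, and there the argument fails in three places.

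(i) The precision is wrong. With $M=\prod_{i\le n}p_i$ you have $\log_2 M=\Theta(n\log n)$, and $\mathrm{frac}(S)=x/M<2^n/M$. So about $\log_2 M-n$ bits after the binary point are zero, and truncating at $n+\bigO(\log n)$ bits returns $0$. Recovering $x$ from $M\cdot\mathrm{frac}(S)$ needs precision about $\log_2 M$.

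(ii) The per-bit query count is wrong. A bit of $S$ obtained by column sums depends on the carry out of \emph{all} less significant columns. Each column needs every $a_i$, and $n$ values $a_i$ do not fit in $\bigO(\log n)$ space. So one bit costs $n$ queries per lower column, not $n$ in total.

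(iii) Step 4 is not query-free. Under composition by recomputation, every bit of $\mathrm{frac}(S)$ read by the multiplication by $M$ and the final rounding triggers a fresh run of Step 3 with all its queries. Your sentence that recomputation ``is arranged so that no residue query is needed'' is exactly what needs proof. As long as $\mathrm{frac}(S)$ is the intermediate object, the count is not $\bigO(n^2)$.

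The fix is to never expand $\mathrm{frac}(S)$.
\begin{itemize}
\item Write $\sum_i a_i\,(M/p_i)=x+rM$ with $r=\lfloor S\rfloor<n$.
\item Since $S\in[r,\,r+2^n/M)$ and $2^n/M<1/2$ for $n\ge 3$, summing upper approximations of the $a_i/p_i$ to $\log n+2$ bits determines $r$ exactly. This uses $n$ queries, and you store $r$.
\item The binary expansions of $M$ and $M/p_i$ are iterated products of small primes that do not depend on $x$. This is the only place the iterated-multiplication black box is needed, and it costs no queries.
\item Bit $j$ of $x=\sum_i a_i(M/p_i)-rM$ depends only on columns $0,\dots,j$, because carries and borrows move only upward. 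So sweep $c=0,\dots,j$ while keeping an $\bigO(\log n)$-bit carry and a borrow.
\item In column $c$, for each $i$, query $x\bmod p_i$ once to get $a_i$. Then compute bit $c$ of $a_i\cdot(M/p_i)$ by a short inner sweep over black-box bits of $M/p_i$. Bits of $rM$ are handled the same way, with no queries.
\end{itemize}
This uses $n+n(j+1)=\bigO(n^2)$ queries and polynomial time. Running the sweep to $c=n-1$ outputs all bits in order within the same bound, which matches the remark the paper makes right after the proposition.
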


\noindent
We remark that the procedure would work as long as $x< \prod_{i=1}^n p_i$ which would provide $\bigO(\log n)$-factor savings in time and the number of queries.
This is however irrelevant for us.
Moreover the procedure can output all the bits of the binary representation of $x$ in the usual order using the same space, time and number of queries.
This could provide time savings of $\poly(n)$-factor for our edit distance and LCS algorithms. 

\section{Pushing Flow Through Walks}
\label{sec:push}

In this section we will utilize the short paths (or in our case, walks) decomposition from \cite{BBRS98} as a generic setup for the flow-based algorithm of \cite{cook_pyne_26_efficient}. The latter assigned initial values to every node in the graph from the catalytic tape, and then ``pushed'' these values forward through the graph along every path via addition, using cancellation to collect up only terms from relevant (i.e., $s$--$t$) paths. To improve their catalytic space usage, we need to follow the color class framework of~\cite{BBRS98} and use recursion restricted to subsections of the graph.

\begin{definition}[Walk matrix]\label{def:path_flow}
    Let $H = (V,E)$ be a graph on $m \in \mathbb{N}$ vertices, and for convenience assume $m$ is a power of 2. Fix a field $\F = \mathbb{F}_p$ for some prime $p \leq \poly(m)$, and let $\mu: E \rightarrow \F$ be a weighting function on the edges of $H$. Given a walk $W$ in $H$, we say the weight of $W$ is the product of all edge weights in $W$ with multiplicity, i.e. $\mu(W) = \prod_{e \in W} \mu(e)^{d_e}$ where $d_e$ is the number of times $e$ occurs in $W$.
    
    For any $\ell \in [\log m]$, and any $\vin,\vout \in V$, define $\Gamma_{\ell}(\vin,\vout)$ to be the collection of all walks of length exactly $2^{\ell}$ from $\vin$ to $\vout$ in $H$. We define the \emph{walk matrix} $\calW_{\ell} \in \F^{V \times V}$ by
    $$\calW_{\ell}[\vin,\vout] = \sum_{W \in \Gamma_{\ell}(\vin,\vout)} \mu(W)$$
\end{definition}

While we cannot store the walk matrix, we simplify our task in two ways: 1) we focus on a collection of input and output nodes, $\bigvin$ and $\bigvout$ respectively, of sufficiently small size to be stored; and 2) while we still cannot store the matrix $\calW_{\ell}$ restricted to $\bigvin \times \bigvout$, we instead provide an input vector $\tauin \in \F^{\bigvin}$ and compute the output of $\tauin \cdot \calW_{\ell} $ in the entries $\bigvout$.

\begin{remark}
    For convenience, given sets $\bigvin, \bigvout \subseteq V$ and a vector $\tauin \in \F^{\bigvin}$, we write $\tauin \cdot \calW_{\ell} $ to mean the vector resulting from the product of matrix $\calW_{\ell}$ with the (row) vector indexed by $V$ which is $\tauin$ for all $\vin \in \bigvin$  and 0 everywhere else. If we add this to a vector $\tauout \in \F^{\bigvout}$, we mean that we restrict this vector to the entries indexed by $\vout \in \bigvout$ and do addition entry-wise by corresponding $\vout$.\footnote{Note that this definition is \textit{not} the same as taking the product with submatrix $(\calW_{\ell})_{\bigvin, \bigvout}$; we will allow our walks to go via any vertices as long as they start in $\bigvin$ and end in $\bigvout$.} Thus we have
    $$(\tauin \cdot \calW_{\ell})[\vout] = \sum_{\vin \in \bigvin} \tauin[\vin] \cdot \sum_{W \in \Gamma_{\ell}(\vin,\vout)} \mu(W)$$
\end{remark}

Our $\bigvin$ and $\bigvout$ sets will be chosen via breaking $V$ into easily computable blocks of equal size, often referred to as color classes.

\begin{definition}\label{def:colors}
    Let $H$ be a graph on $m \in \mathbb{N}$ vertices, and let $C \in [m]$ such that $C$ and $m$ are powers of 2. We partition $V$ into sets $\{V_c\}_{c \in [C]}$, which we call \emph{color classes}, such that $|V_c| = m/C$ for all $c$, by letting the first $\log C$ bits of any vertex $v$'s label in $[m]$ indicate the color class of $v$.
\end{definition}
While the key property for our space and time bounds is the even distribution of nodes into classes, note that computing the color class and label within the color class of $v$ can be done in time $\log C$ and $\log m/C$ respectively.

We now reach our main propagation subroutine which will drive all algorithms in this paper. Unlike \cite{BBRS98} itself, we will require calculating each side of the recursion multiple times in order to get the cancellations necessary for \cite{cook_pyne_26_efficient}; however, by setting the parameters correctly we maintain an equivalent asymptotic runtime.

\begin{lemma} \label{lem:main_subroutine}
    Let $m \in \mathbb{N}$, let $\F = \mathbb{F}_p$ for some prime $p = \poly(m)$, and let $H = (V,E)$ be a graph on $m$ vertices with edge weight function $\mu$. Let $\calW$ be defined with respect to $H$ as per \Cref{def:path_flow} and let $C$ and $\{V_i\}_{i \in [C]}$ be as per \Cref{def:colors}.
    
    Assume that we have a catalytic subroutine $P_0$ which takes in $\cin, \cout \in [C]$ as well as sections $\bigRin$, $\bigRout$, $\bigRmid \in \F^{m/C}$ from the catalytic tape, and adds $\bigRin \cdot \calW_0 $ to $\bigRout$.
    
    Then there exists a catalytic subroutine $P$ which takes in $\ell \in [\log m]$, $\cin, \cout \in [C]$, and sections $\bigRin, \bigRout, \bigRmid \in \F^{m/C}$ from the catalytic tape, and whose effect is to add $\bigRin \cdot \calW_{\ell}$ to $\bigRout$.
    
    If $P_0$ runs in time $T$ and uses free space $S$, then $P$ runs in time $(4C)^{\ell} \cdot T$ and uses free space $\ell \log 4C + S$; it uses only the catalytic memory $\bigRin, \bigRout, \bigRmid$ and the memory required to run $P_0$.
\end{lemma}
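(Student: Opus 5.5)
We argue by induction on $\ell$, with $P_0$ as the base case, and build $P$ for level $\ell$ from the level-$(\ell-1)$ subroutine (call it $P_{\ell-1}$, with inverse $P_{\ell-1}^{-1}$) using the Ben-Or--Cleve cancellation pattern described in the overview. The key identity is $\calW_\ell = \calW_{\ell-1}\cdot\calW_{\ell-1}$: a walk of length $2^\ell$ splits uniquely at its midpoint into two walks of length $2^{\ell-1}$, and $\mu$ is multiplicative over concatenation. Splitting the midpoint by color class gives
\[
(\bigRin\cdot\calW_\ell)[\vout]=\sum_{\cmid\in[C]}\sum_{w\in V_{\cmid}}(\bigRin\cdot\calW_{\ell-1})[w]\cdot\calW_{\ell-1}[w,\vout].
\]
So it suffices, for each $\cmid$, to add to $\bigRout$ the vector obtained by pushing $\bigRin\cdot\calW_{\ell-1}$, restricted to $V_{\cmid}$, through $\calW_{\ell-1}$ into $V_{\cout}$.

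For a fixed $\cmid$, we use $\bigRmid$ as the scratch register for class $V_{\cmid}$. Write $\tauin,\taumid$ for the current contents of $\bigRin,\bigRmid$. The four steps are as follows.
\begin{enumerate}
\item Run $P_{\ell-1}^{-1}$ with input $(\cmid,\bigRmid)$ and output $(\cout,\bigRout)$, which subtracts $\taumid\cdot\calW_{\ell-1}$ from $\bigRout$.
\item Run $P_{\ell-1}$ with input $(\cin,\bigRin)$ and output $(\cmid,\bigRmid)$, so that $\bigRmid=\taumid+\tauin\cdot\calW_{\ell-1}$.
\item Run $P_{\ell-1}$ from $(\cmid,\bigRmid)$ to $(\cout,\bigRout)$, which adds $(\taumid+\tauin\calW_{\ell-1})\calW_{\ell-1}$ to $\bigRout$.
\item Run $P_{\ell-1}^{-1}$ from $(\cin,\bigRin)$ to $(\cmid,\bigRmid)$, restoring $\bigRmid=\taumid$.
\end{enumerate}
By linearity the net effect is to add $(\tauin\calW_{\ell-1})|_{V_{\cmid}}\cdot\calW_{\ell-1}$ to $\bigRout$, leaving $\bigRin$ and $\bigRmid$ unchanged. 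Summing over $\cmid\in[C]$ gives exactly $\bigRin\cdot\calW_\ell$. The inverse $P^{-1}$ is the same loop with every call replaced by its inverse, executed in reverse order.

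The main obstacle is the registers used by the recursive calls. Each call to $P_{\ell-1}$ itself needs a middle register, but the lemma only allows the catalytic memory $\bigRin,\bigRout,\bigRmid$ plus what $P_0$ uses. My first attempt is to let each recursive call use, as its own middle register, whichever of the three registers is not currently serving as its input or output. In step~2, for example, the child would use $\bigRout$ as its middle register. This is sound only if the child's net effect on its middle register is zero and its correctness holds for arbitrary initial contents, and both hold by induction: the child restores its middle register, and the Ben-Or--Cleve argument never assumes specific initial values. The point I would check most carefully is that within the child, the registers playing the roles of input, output and middle are pairwise distinct at every call. Rotating three registers guarantees this. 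Also, $\bigRin$ is used only as an input inside every call, so its value is never read while it has been modified.

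For the complexity, each level makes $4C$ calls to the level below, so by induction the running time is $(4C)^\ell\cdot T$, ignoring lower-order bookkeeping that $T$ absorbs. The free space needed is a loop index over $C$ classes and a step counter over the four steps, about $\log 4C$ bits per level, plus the base procedure's $S$. The total is $\ell\log 4C+S$, since only one activation per level is live at any time.
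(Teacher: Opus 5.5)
Your proposal is correct and follows the paper's proof essentially step for step. It uses the same four-call pattern $[mid\to out]^-,[in\to mid]^+,[mid\to out]^+,[in\to mid]^-$ for each $\cmid\in[C]$, the same register rotation (the paper's $[mid\to out]$ calls use $\bigRin$ as the child's middle register and its $[in\to mid]$ calls use $\bigRout$), and the same induction giving time $(4C)^\ell T$ and free space $\ell\log 4C+S$.

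One small wording fix: $\bigRin$ is not used only as an input, since in steps 1 and 3 it serves as the child's scratch register. This is harmless, by exactly the argument you already give: each child restores its middle register and is correct for arbitrary initial contents.
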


\begin{proof}
    For each $\ell$, we will construct a program $P_{\ell}$ which is equivalent to $P$ with input $\ell$; we do this inductively on $\ell$ from 0 up to the maximum value of $\log m$. The base case is taken care of by program $P_0$, which runs in time $T = (4C)^0 \cdot T$ as expected.

    Now consider the case of $\ell \geq 1$. Fix some $\cmid \in [C]$, and let $[mid \rightarrow out]^+$ denote the program $P_{\ell-1}(\cmid,\cout;\bigRmid,\bigRout,\bigRin)$, let $[mid \rightarrow out]^-$ denote its inverse,\footnote{Recall that $P_0$ has an inverse program by assumption of it being a catalytic subroutine, while $(P_{\ell})^{-1}$ can be obtained by running the above program backwards and inverting each line.} and let $[in \rightarrow mid]^+$ and $[in \rightarrow mid]^-$ be $P_{\ell-1}(\cin,\cmid;\bigRin,\bigRmid,\bigRout)$ and its inverse as well. Our program $P_{\ell}$ is as follows:
    \begin{itemize}
        \item for all $\cmid \in [C]$:
        \begin{enumerate}
            \item $[mid \rightarrow out]^-$
            \item $[in \rightarrow mid]^+$
            \item $[mid \rightarrow out]^+$
            \item $[in \rightarrow mid]^-$
        \end{enumerate}
    \end{itemize}
    First we show this program fits our time and space restrictions. By induction, each loop iteration takes time $4C \cdot ((4C)^{\ell - 1} \cdot T)$, which gives time $(4C)^{\ell} \cdot T$ in total. We use no additional catalytic memory and only require $(\ell - 1)(\log 4C) + S$ plus an additional $\log 4C$ bits of memory to track $\cmid$ and the inner instruction count, for $\ell \log 4C + S$ free space in total.\footnote{The inputs to the current recursive program, i.e. the values of $\ell$, $\cin$, and $\cout$, as well as the current addressing of $\bigRin$, $\bigRout$, and $\bigRmid$, are all tracked implicitly as follows. Since $P$ has access to $\ell$, and we are currently storing a stack of $\ell' \leq \ell$ strings of length $\log 4C$, we know that we are currently running $P_{\ell-\ell'}$ and the most recent string gives us $\cin$ and $\cout$. Furthermore, the extra two bits from each level tell us our trace of which call we are on at each level, which determine some rotation of each of the three registers.}
    
    We now prove the correctness of our program. Note that for any $(\vin, \vout) \in V_{\cin} \times V_{\cout}$, every walk of length exactly $2^{\ell}$ must pass through some midpoint $\vmid$ at distance $2^{\ell-1}$ from both; furthermore, any walk from $\vin$ to $\vmid$ may be paired with any walk from $\vmid$ to $\vout$ to form a distinct walk from $\vin$ to $\vout$. Thus we have the following decomposition:
    $$\Gamma_{\ell}(\vin,\vout) = \displaystyle\bigcup_{\cmid \in [C]} \left[\displaystyle\bigcup_{\vmid \in V_{\cmid}} \Gamma_{\ell-1}(\vin,\vmid) \otimes \Gamma_{\ell-1}(\vmid,\vout)\right]$$
    where both unions are disjoint. By extension, for any $(\vin, \vout) \in V_{\cin} \times V_{\cout}$,
    $$\sum_{W \in \Gamma_{\ell}(\vin,\vout)}\mu(W) = \sum_{\cmid \in [C]} \left[\sum_{\vmid \in V_{\cmid}}\left(\sum_{W \in \Gamma_{\ell-1}(\vin,\vmid)}\mu(W)\right) \cdot \left(\sum_{W \in \Gamma_{\ell-1}(\vmid,\vout)}\mu(W)\right)\right]$$
    Given this fact, we can analyze our algorithm for the first value of $\cmid \in [C]$. Let $\tauin$, $\tauout$, and $\taumid$ be the initial values in $\bigRin$, $\bigRout$, and $\bigRmid$ respectively. After our first instruction, $\bigRout$ has values
    $$\bigRout[\vout] = \tauout[\vout] - \left[\sum_{\vmid \in V_{\cmid}} \taumid[\vmid] \cdot \left(\sum_{W \in \Gamma_{\ell-1}(\vmid,\vout)}\mu(W)\right)\right]$$
    After our second instruction, $\bigRmid$ has values
    $$\bigRmid[\vmid] = \taumid[\vmid] + \left[\sum_{\vin \in V_{\cin}} \tauin[\vin] \cdot \left( \sum_{W \in \Gamma_{\ell-1}(\vin,\vmid)}\mu(W)\right)\right]$$
    Thus after our third instruction, $\bigRout$ has values
    \begin{equation*}
        \begin{split}
            \bigRout[\vout] &= \tauout[\vout] - \left[\sum_{\vmid \in V_{\cmid}} \taumid[\vmid] \cdot \left(\sum_{W \in \Gamma_{\ell-1}(\vmid,\vout)}\mu(W)\right)\right] \\
            &+ \sum_{\vmid \in V_{\cmid}} \left[ \taumid[\vmid] + \sum_{\vin \in V_{\cin}} \tauin[\vin] 
            \cdot \left(\sum_{W \in \Gamma_{\ell-1}(\vin,\vmid)}\mu(W)\right) \right]
            \cdot \left(\sum_{W \in \Gamma_{\ell-1}(\vmid,\vout)}\mu(W) \right) \\
            &= \tauout[\vout] + \displaystyle\sum_{\vmid \in V_{\cmid}} \sum_{\vin \in V_{\cin}} \tauin[\vin] \cdot
            \left(\sum_{W \in \Gamma_{\ell-1}(\vin,\vmid)}\mu(W)\right) \cdot \left(\sum_{W \in \Gamma_{\ell-1}(\vmid,\vout)}\mu(W)\right) \\
        \end{split}
    \end{equation*}
    Our fourth instruction is the reverse of the second, and so we are left with $\bigRmid$ in its initial state $\taumid$, while $\bigRin$ was never altered; as required, this guarantees that only $\bigRout$ is changed.
    
    Since each iteration of the for loop works regardless of $\tauout$, i.e. the initial value of $\bigRout$ going into the loop, the same logic applies to every $\cmid \in [C]$, accumulating the final summand for each. Thus the net result of the program is that for $\vout \in \bigvout$,
    \begin{equation*}
        \begin{split}
        \bigRout[\vout] &= \tauout[\vout] + \sum_{\cmid \in [C]} \sum_{\vmid \in V_{\cmid}} \sum_{\vin \in V_{\cin}} \tauin[\vin] \cdot \left(\sum_{W \in \Gamma_{\ell-1}(\vin,\vmid)}\mu(W)\right) \cdot \left(\sum_{W \in \Gamma_{\ell-1}(\vmid,\vout)}\mu(W)\right)\\
        &= \tauout[\vout] + \sum_{\vin \in V_{\cin}} \tauin[\vin] \cdot \sum_{\cmid \in [C]} \sum_{\vmid \in V_{\cmid}} 
        \left(\sum_{W \in \Gamma_{\ell-1}(\vin,\vmid)}\mu(W)\right) \cdot \left(\sum_{W \in \Gamma_{\ell-1}(\vmid,\vout)}\mu(W)\right) \\
        & =\tauout[\vout] + \sum_{\vin \in V_{\cin}} \tauin[\vin] \cdot \left(\sum_{W \in \Gamma_{\ell}(\vin,\vout)}\mu(W)\right) \\
        \end{split}
    \end{equation*}
    and so we add $\bigRin \cdot \calW_{\ell}$ to $\bigRout$ as required.
\end{proof}
\section{STCONN via Graph Decomposition}
\label{sec:graph_decomp_bbrs}
In this section we prove \cref{thm:main_stconn}, i.e. deciding connectivity in an $n$-vertex graph. To do this, we start by discussing important parameterizations of \cref{lem:main_subroutine} and what they say about the types of graphs and walks we can capture with them.

In order to run in polynomial time we need to choose $C$ and $\ell$ such that $(4C)^\ell \leq \poly(n)$, while for catalytic space $\frac{n}{2^{\Omega(\sqrt{\log n})}}$ we need to store $\bigRin,\bigRout,\bigRmid$ using $(m/C) \cdot \bigO(\log n)$ bits, where $m$ is the number of vertices of the graph in question. There are two extreme cases of interest:
\begin{itemize}
    \item when $C = 1$, we can tolerate any $2^\ell$-length walk for $\ell \leq \log m$, which in particular can cover every vertex in the graph in question; with respect to this graph, however, we only get small catalytic space if $m \leq \frac{n}{2^{\Omega(\sqrt{\log n})}}$, i.e. working on graphs with many fewer vertices than $n$.
    \item when $\ell \leq \sqrt{\log n}$ and $C \leq 2^{\bigO(\sqrt{\log n})}$, we can only handle short walks, but in exchange we use sufficiently little catalytic storage even when working on large graphs, such as $G$ or even larger graphs with $\bigO(n)$ vertices.
\end{itemize}
Thus we will have two phases of the algorithm as in~\cite{BBRS98}: 1) a \emph{long walks} algorithm which works on a vertex-sparsified version of $G$ whose edges represent walks of length $\lambda$ in $G$; and 2) a \emph{short walks} algorithm which runs on all of $G$ but only for walks of length $\lambda$. For this we will use both of the extreme settings of \cref{lem:main_subroutine} as discussed at the end of \cref{sec:push}; we discuss these in turn.

\subsection{Long Walks via Graph Sparsification}
In order to handle long walks in a graph, we need to reduce the number of vertices significantly while still not losing connectivity information about $G$. We do this through a randomized construction of a sufficiently large subset of vertices $U$, such that if any two vertices are connected in $G$, they are connected by taking short walks between the nodes in $U$. The following theorem will be proven in \cref{sec:hitting_sets} and may be of independent interest:

\begin{restatable}{thm}{uconstruct}
\label{thm:u_construct}
    There exists an algorithm $\calU$ such that for every directed graph $G = (V,E)$ on $n$ vertices, value $\lambda = \omega(\log(n))$, seed $\sigma \in \{0,1\}^{(6+\epsilon) \log (n)}$, and constant $\eps > 0$, it satisfies the following properties:
    \begin{itemize}
        \item For every $\sigma$, there exists a multiset $U_\sigma \subset V$ of size $\bigO \left( \frac{n \log(n)}{\lambda} \right)$, such that on input $i \in [|U_\sigma|]$ in binary, $\calU(1^n,\lambda,\sigma,\eps,i)$ outputs the $i^{\text{th}}$-vertex of $U_\sigma$ in time $\polylog{n}$ and space $\bigO(\log(n))$.
        \item With probability at least 0.99 over a uniformly random choice of $\sigma$, for every $u,v \in V$, if $u$ is connected to $v$, then there exists an $\ell$ and a sequence of vertices $\Tilde{\rho}_{u,v} = (u, s_1, s_2, \ldots, s_\ell, v)$, such that 1) $s_i \in U_\sigma$ for every $i \in [\ell]$, and 2) for each consecutive pair in $\Tilde{\rho}_{u,v}$, there exists a path in $G$ of length at most $\lambda$ connecting them.
    \end{itemize}
\end{restatable}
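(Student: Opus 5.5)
The plan is to build $U_\sigma$ as the union of the images of $K=\Theta(\log n)$ pairwise-independent hash functions $h_1,\dots,h_K\in\calH_{m,n}$ with $m=\Theta(n/\lambda)$. The functions are chosen by a length-$(K-1)$ walk on an explicit constant-degree expander $D$ whose vertex set is $\Bits^{2\log n}$, i.e.\ the seed space of $\calH_{m,n}$ from \cref{lem:pwi_hash_props}. The seed $\sigma$ consists of the starting vertex ($2\log n$ bits) plus the $K-1$ edge labels ($\bigO(1)$ bits each). Tuning the degree and spectral gap of $D$ against $K$ should give $(6+\eps)\log n$ bits total. The multiset is $U_\sigma=\{h_k(x): k\in[K], x\in[m]\}$, of size $Km=\bigO(n\log n/\lambda)$.

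The first bullet is then routine. Given $i$, write $i=(k,x)$. Walk $k$ steps on $D$ from the starting vertex using the labels in $\sigma$; each neighbor computation of a standard explicit expander takes $\polylog n$ time and $\bigO(\log n)$ space. This yields the description of $h_k$, and we output $h_k(x)$ via \cref{lem:pwi_hash_props}.

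For the second bullet, we reduce to a hitting property. It suffices to show that, with probability $\ge 0.99$, every set $S\subseteq V$ with $|S|\ge\lambda/2$ among a suitable family intersects $U_\sigma$. The natural family is, for each connected pair $(u,v)$, the set of the first $\lambda/2$ internal vertices of a fixed shortest $u$--$v$ path. A shortest path has distinct vertices, and there are at most $n^2$ such sets. Given the hitting property, take any connected $u,v$ and a shortest path between them. Cut it into consecutive blocks of $\lambda/2$ vertices; each block contains a vertex of $U_\sigma$. Hence consecutive chosen vertices, together with the endpoints, lie within distance $\le\lambda$ along the path, which gives $\Tilde\rho_{u,v}$. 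The final block, of length $<\lambda/2$ to $v$, is fine, since $v$ itself is an endpoint. Subpaths of shortest paths are shortest, so the relevant blocks are exactly the fixed sets for pairs $(a,b)$ on the path, and a union bound over $n^2$ sets suffices.

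The hitting probability for a single $h$ and a fixed $S$ with $|S|=s=\lambda/2$ comes from second moments. Let $X=|\{x\in[m]: h(x)\in S\}|$. Then $\mathbb E X=ms/n$, which we make $\ge 2$ by choosing $m=4n/\lambda$, and pairwise independence gives $\mathrm{Var}\,X\le \mathbb E X$. Chebyshev then yields $\Pr[X=0]\le 1/\mathbb E X\le 1/2$. So a random $h$ misses $S$ with probability at most $1/2$; call such $h$ ``bad for $S$''.

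The main obstacle is amplification with few random bits. With independent $h_k$ the failure probability is $2^{-K}\le n^{-3}$ for $K=3\log n$, and a union bound over $n^2$ sets finishes. With an expander walk, we instead invoke the expander hitting-set lemma \cite{AKS87_derand}: if the bad set has density $\mu\le 1/2$ and $D$ has second eigenvalue $\gamma$, the walk stays inside the bad set with probability at most $(\mu+\gamma)^{K-1}$. With $\gamma$ a small constant this is $n^{-2-\delta}$ for $K=c\log n$. The delicate part is the parameter accounting to reach exactly $(6+\eps)\log n$: the per-step decay base versus the $\log(\text{degree})$ bits per step. To get decay close to $1/2$ or better, I would first lower the single-hash miss probability by taking $m$ a larger constant times $n/\lambda$, making $\mu$ small. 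I would then pick the degree so that the total seed, $2\log n+K\log d$, lands at $(6+\eps)\log n$. The hypothesis $\lambda=\omega(\log n)$ enters in keeping $Km=\bigO(n\log n/\lambda)$ sublinear and in handling the multiset repetition.
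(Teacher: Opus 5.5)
Your construction and probabilistic core match the paper's: union of images of pairwise-independent hashes with $m=\Theta(n/\lambda)$ chosen along an expander walk, Chebyshev for a single hash, the walk bound of \cite{AKS87_derand}, and a union bound over $n^2$ pre-fixed sets. The gap is the combinatorial step of the second bullet, which does not go through as written.

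The hitting guarantee covers only the $n^2$ sets you fixed in advance, namely $S_{a,b}$, the first $\lambda/2$ internal vertices of \emph{the} chosen shortest path $\pi_{a,b}$. A block $\{q_{j+1},\ldots,q_{j+\lambda/2}\}$ of an arbitrary shortest $u$--$v$ path is the prefix of \emph{some} shortest path from $q_j$ to $v$. It need not be the prefix of $\pi_{q_j,v}$: shortest paths are not unique, and nothing makes your fixed family subpath-closed. So ``subpaths of shortest paths are shortest, hence the blocks are exactly the fixed sets'' is false. At best the first block of $\pi_{u,v}$ itself is covered, and the union bound says nothing about the other blocks.

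The paper avoids this by attaching the guarantee to \emph{endpoint pairs} rather than to segments. Cut a $u$--$v$ path at $q_0=u,q_{\lambda/2},q_{\lambda},\ldots$. Consecutive cut points are joined by a path of length $\lambda/2$, so the path fixed for that pair in \cref{lem:expander_rw_good} (which may differ from the segment) contains some $s_i\in U_\sigma$. Writing $d$ for directed distance in $G$, we get $d(s_i,s_{i+1})\le d(s_i,q_{i\lambda/2})+d(q_{i\lambda/2},s_{i+1})\le\lambda$, because both hits lie on length-$\lambda/2$ paths through the shared cut point. Alternatively, keep your family and re-anchor greedily: from the current $s_i$ with $d(s_i,v)>\lambda/2$, pick $s_{i+1}\in S_{s_i,v}\cap U_\sigma$. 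Each hop is at most $\lambda/2$ and $d(\cdot,v)$ strictly decreases.

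You also leave the $(6+\eps)\log n$ accounting open, and ``pick the degree'' does not settle it. You need $2\log n$ bits for the start vertex plus $K\approx(2+\delta)\log n/\log(1/(\mu+\alpha))$ steps of $\log\Delta$ bits each. The total can be pushed to $(6+\eps)\log n$ only if $\log\Delta/\log(1/\alpha)$ can be made arbitrarily close to $2$, i.e.\ $\alpha=\bigO(\Delta^{-1/2})$. A family with, say, $\alpha\approx\Delta^{-1/4}$ would give roughly $10\log n$.

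The missing ingredient is the Ramanujan bound of \cref{lem:lps_expander}. The paper takes $\Delta=1/\beta^2+1$, so $\alpha\le 2\beta$, and makes the bad density at most $\beta$ via $m=2n/(\beta\lambda)$. This gives per-pair failure at most $(3\beta)^K$, and it sets $K=(2+\eps/8)\log n/\log(1/(3\beta))$. Since $\log(1/\beta^2+1)/\log(1/(3\beta))\to 2$ as $\beta\to 0$, the walk costs at most $(2+\eps/8)(2+\eps/4)\log n$ bits, for a total of at most $(6+\eps)\log n$.
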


In order to utilize any multiset $U \coloneqq U_\sigma$ given by \cref{thm:u_construct}, we will need to have access to connectivity at distance at most $2^{\Theta(\sqrt{\log n})}$ in $G$. Note that \cref{lem:main_subroutine} counts walks of length \emph{exactly} $2^{\ell}$ for the given $\ell$, but this can be easily rectified by adding self-loops, at the cost of obtaining a value which is not the exact count of short walks in the original graph.

\begin{definition}\label{def:u_graph}
    Given graph $G = (V,E)$ on $n \in \mathbb{N}$ vertices, let $G'$ be the graph obtained by adding self-loops edges to all vertices, i.e. $V' = V$ and $E' = E \cup \{(v,v)\}_{v \in V'}$.
    
    We construct $H = (V_U,E_U)$ to be the graph whose vertex set $V_U$ is $U$ with multiplicity plus the vertices $s$ and $t$ (for convenience we assume $|V_U|$ is a power of two, and otherwise we add dummy vertices to make it such) and whose edge set $E_U$ is all $e = (\vin,\vout) \in V_U^2$ such that there exists a walk of length $\lambda$ from $\vin$ to $\vout$ in $G'$. We define the weight function $\mu(e)$ for all $e \in E_U$ to be the number of walks of length $\lambda$ from $\vin$ to $\vout$ in $G'$.\footnote{We alert the reader to the fact that for any $\mu(u,v)$ is \emph{not} the number of $u$--$v$ paths in $G$ of length at most $\lambda$; it is at least as large as this value and can be much greater. This will not cause an issue to our analysis, since all we need is a non-zero lower bound for connected vertices as well as a cap on the total size of any weight, the latter of which we address at the end of this section.}
\end{definition}

We can now compute our long walks algorithm, which will be our global algorithm modulo a few cleanup steps: 
\begin{lemma} \label{lem:long}
    Assume we have access to graph $G = (V,E)$ on $n \in \mathbb{N}$ vertices, prime $p = \poly(n)$, and seed $\sigma$. Let $\F := \mathbb{F}_p$, define $\lambda := 2^{\sqrt{(\epsilon/2.1)\log n}}$, and let $U \coloneqq U_\sigma$ be as given in \cref{thm:u_construct}. Define walk matrix $\calW$ with respect to graph $H = (V_U,E_U)$ and weight function $\mu$ as defined in \cref{def:u_graph}.

    Assume we have access to a catalytic subroutine $\shortpath(\bigvin, \bigvout; \bigRin, \bigRout)$, where $\bigvin, \bigvout \subseteq V_U$ and $\bigRin \in \F^{\bigvin}, \bigRout \in \F^{\bigvout}$ are any sections of the catalytic tape, whose effect is to add $\bigRin \cdot \calW_0$ to $\bigRout$.
    
    Then there exists a catalytic subroutine $\longpath(\bigvin, \bigvout; \bigRin, \bigRout)$, where $\bigvin, \bigvout \subseteq V_U$ and $\bigRin \in \F^{\bigvin}, \bigRout \in \F^{\bigvout}$ are any sections of the catalytic tape, whose effect is to add $\bigRin \cdot \calW_{\ell}$ to $\bigRout$, where $\ell := \log |V_U|$.

    If $\shortpath$ uses time $T$, free space $S$, and catalytic space $R$, then $\longpath$ uses time $T \cdot \bigO(n^2)$, free space $S + \bigO(\log n)$, and catalytic memory $|\bigvin| + |\bigvout| + 3|V_U| \cdot \bigO(\log p)$, namely the vectors $\bigRin, \bigRout$ plus three additional catalytic vectors $\bigRone, \bigRtwo, \bigRthree \in \F^{V_U}$.
\end{lemma}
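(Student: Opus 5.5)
The plan is to instantiate \cref{lem:main_subroutine} on the graph $H=(V_U,E_U)$ with a single color class, $C=1$, so that $V_1=V_U$ and $m=|V_U|$. Its base program $P_0$ will be built from the assumed $\shortpath$ routine. With $C=1$, \cref{lem:main_subroutine} gives a program $P_\ell$ on three registers $\bigRin,\bigRout,\bigRmid\in\F^{V_U}$ that adds $\bigRin\cdot\calW_\ell$ to $\bigRout$. For $\ell=\log|V_U|$ it runs in time $4^{\ell}\cdot T=|V_U|^2\cdot T$. Since $|V_U|=\bigO(n\log n/\lambda)+2\le n$ for large $n$, this is $\bigO(n^2)\cdot T$. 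Its free space is $2\ell+S=S+\bigO(\log n)$. The base case is immediate: $P_0(\cdot,\cdot;\vv{a},\vv{b},\vv{c})$ is $\shortpath(V_U,V_U;\vv{a},\vv{b})$, which adds $\vv{a}\cdot\calW_0$ to $\vv{b}$ (the third register is unused) and is a catalytic subroutine by hypothesis.

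The remaining work is the interface between arbitrary $\bigvin,\bigvout\subseteq V_U$ with their given registers and the full-length vectors $\bigRone,\bigRtwo,\bigRthree\in\F^{V_U}$ on which $P_\ell$ operates. The problem is that $\bigRone$ holds unknown catalytic content $\tauone$, not $\bigRin$ padded with zeros. I would use the standard cancellation trick in five steps.
\begin{enumerate}
\item Run $P_\ell(\bigRone,\bigRtwo,\bigRthree)$ in reverse, i.e.\ its inverse, so that $\bigRtwo$ becomes $\tautwo-\tauone\cdot\calW_\ell$.
\item For each $v\in\bigvin$, add $\bigRin[v]$ into $\bigRone[v]$. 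Then $\bigRone$ equals $\tauone$ plus the zero-padded $\bigRin$.
\item Run $P_\ell$ forward. By linearity, $\bigRtwo$ becomes $\tautwo+\bigRin\cdot\calW_\ell$, with the $\tauone$ terms cancelling.
\item For each $v\in\bigvin$, subtract $\bigRin[v]$ from $\bigRone[v]$, restoring $\tauone$.
\item To move the output into $\bigRout$: for each $v\in\bigvout$, add $\bigRtwo[v]$ into $\bigRout[v]$; then run step 1 through step 4 in reverse with inverted signs, restoring $\bigRtwo$ to $\tautwo$; then for each $v\in\bigvout$, subtract $\bigRtwo[v]$ from $\bigRout[v]$.
\end{enumerate}
Step 5 needs a careful ordering so that the net effect on $\bigRout$ is exactly $\bigRin\cdot\calW_\ell$. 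A cleaner way to do the readout is to first subtract $\bigRtwo|_{\bigvout}$ from $\bigRout$, which at that moment holds $\tautwo$, then run steps 1--4, then add $\bigRtwo|_{\bigvout}$, which now holds $\tautwo+\bigRin\cdot\calW_\ell$, and finally undo steps 1--4. The net change to $\bigRout$ is then exactly $\bigRin\cdot\calW_\ell$ restricted to $\bigvout$, as \cref{lem:main_subroutine}'s remark requires. Every step is an additive, invertible operation, so the whole procedure is a catalytic subroutine, and its inverse is obtained by running it backwards with negated updates.

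Bookkeeping should be light. Each $P_\ell$ call costs time $\bigO(n^2)T$ and there are a constant number of them, while the additions over $\bigvin$ and $\bigvout$ cost $\bigO(n\,\polylog n)$. Indexing into $U$ uses $\calU$ from \cref{thm:u_construct} in $\polylog n$ time and $\bigO(\log n)$ space, and is absorbed into the bounds. The catalytic memory is $\bigRin,\bigRout$ together with $\bigRone,\bigRtwo,\bigRthree$ of $3|V_U|\cdot\bigO(\log p)$ bits, plus whatever $\shortpath$ uses.

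The main obstacle I expect is the correctness of this interfacing. One subtlety is that the multiset $U$ may contain repeated copies of a vertex, which must be treated as distinct coordinates of $V_U$; another is the dummy padding vertices, which have no incident edges and therefore contribute nothing. I would also check that $\shortpath$ being invoked with $\bigvin=\bigvout=V_U$ is allowed by its hypothesis, which holds since it accepts any subsets.
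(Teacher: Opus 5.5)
Your proposal is correct and follows the paper's proof. The paper likewise instantiates \cref{lem:main_subroutine} on $H$ with $C=1$, uses $\shortpath(V_U,V_U;\cdot,\cdot)$ as $P_0$ with $\bigRone,\bigRtwo,\bigRthree$ as the internal registers, and interfaces with $\bigRin,\bigRout$ by additive cancellation using four calls to $P$ or $P^{-1}$. The only difference is cosmetic: you cancel $\tauone\cdot\calW_\ell$ inside $\bigRtwo$ and read out once, whereas the paper reads $-\tauone\cdot\calW_\ell$ and then $(\tauone+\tauin)\cdot\calW_\ell$ into $\bigRout$ in two separate windows, and both of your step-5 variants give net effect exactly $\bigRin\cdot\calW_\ell$ on $\bigRout$.
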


\begin{proof}
    We have two tasks to accomplish. First, we will put aside our input and use \cref{lem:main_subroutine} to show how to compute the walk matrix for all of $U$; this will be fairly direct from the lemma using $\shortpath$ as our base case. Second, we will show how to translate from connectivity across all of $U$ to the actual sets $\bigvin$ and $\bigvout$, as well as switching our $\tau$ values from the internal memory of \cref{lem:main_subroutine} to the vectors $\bigRin$ and $\bigRout$; this will be handled by looping and cancellations.

    Consider \cref{lem:main_subroutine} with graph $H$, setting $C = 1$ and $\ell = \log \vert U \vert$. Putting aside $\bigRin$ and $\bigRout$ for the moment, we will use catalytic registers $\bigRone$, $\bigRtwo$, and $\bigRthree$ as its internal vectors $\bigRin$, $\bigRout$, and $\bigRmid$ respectively, each of length $m/C = |V_U|/1$.
    
    Our base case $P_0$ will be given by executing $\shortpath$ with $\bigvin = \bigvout = U$ and using the current input and output vectors as $\bigRin$ and $\bigRout$ respectively, plus auxiliary free space $S$ and catalytic space $R$. By the correctness of $\shortpath$, this exactly adds the weight $\vv{r_i} \cdot \calW_0$ to $\vv{r_j}$ for whichever $i,j \in \{in, out, mid\}$ it is invoked on; this is the same as adding $\mu(\vin,\vout)$ along each edge $(\vin,\vout) \in V_U^2$ as required.

    Thus \cref{lem:main_subroutine} gives us a program $P$ (and its inverse $P^{-1}$) which adds $\bigRin \cdot \calW_{\ell}$ to $\bigRout$ as required, doing so in time
    $$(T + \poly \log n) \cdot (4C)^{\log |V_U|} \leq T \cdot \left(\frac{n}{2^{\sqrt{\log n}}}\right)^2 + \left(\frac{n}{2^{\sqrt{\log n}}}\right)^2 \cdot \poly \log n \leq T \cdot n^2$$
    where the $\poly \log n$ terms comes from \cref{thm:u_construct} to run $\calU$ when calling $P_0$. Our space usage will be $S + \log |V_U| \cdot \log 4C$ from \cref{lem:main_subroutine} plus $\bigO(\log n)$ to compute the vertices of $U$ (from \Cref{thm:u_construct}), for a total of $S + \bigO(\log n)$. Lastly our catalytic space will be $R$ plus the vectors $\bigRone$, $\bigRtwo$, and $\bigRthree$, and by construction and the guarantee that \cref{lem:short} gives a catalytic subroutine, all catalytic memory is reset except $\bigRtwo$.

    In order to move to our original input and output registers, we will now create interfacing between $\bigRin$ and $\bigRone$ as well as between $\bigRtwo$ and $\bigRout$. Our final program is the following (in our in-line analysis we suppress indexing and let $\tau_a$ represent the initial value in its corresponding $r_a$, e.g. $\bigRin = \tauin$):
    \begin{enumerate}
        \item $\bigRout[\vout] \minuseq \bigRtwo[\vout] \quad \forall \vout \in \bigvout$ 
            \codecomment{$\bigRout = \tauout - \tautwo$}
        \item $P^{-1}$
            \codecomment{$\bigRtwo = \tautwo - \tauone \cdot \calW_{\ell}$}
        \item $\bigRout[\vout] \pluseq \bigRtwo[\vout] \quad \forall \vout \in \bigvout$ 
            \codecomment{$\bigRout = \tauout - \tautwo + (\tautwo - \tauone \cdot \calW_{\ell}) = \tauout - \tauone \cdot \calW_\ell$}
        \item $P$
            \codecomment{$\bigRtwo = \tautwo$}
        \item $\bigRone[\vin] \pluseq \bigRin[\vin] \quad \forall \vin \in \bigvin$
            \codecomment{$\bigRone = \tauone + \tauin$}
        \item $\bigRout[\vout] \minuseq \bigRtwo[\vout] \quad \forall \vout \in \bigvout$
            \codecomment{$\bigRout = \tauout - \tauone \cdot \calW_\ell - \tautwo$}
        \item $P$
            \codecomment{$\bigRtwo = \tautwo + (\tauone + \tauin) \cdot \calW_{\ell}$}
        \item $\bigRout[\vout] \pluseq \bigRtwo[\vout] \quad \forall \vout \in \bigvout$
            \codecomment{$\bigRout = \tauout - \tauone \cdot \calW_\ell - \tautwo + (\tautwo + (\tauone + \tauin) \cdot \calW_{\ell})$}\\
            \makebox{ } \codecomment{$= \tauout + \tauin \cdot \calW_{\ell}$}
        \item $P^{-1}$
            \codecomment{$\bigRtwo = \tautwo$}
        \item $\bigRone[\vin] \minuseq \bigRin[\vin] \quad \forall \vin \in \bigvin$
            \codecomment{$\bigRone = \tauone$}
    \end{enumerate}
    Thus our algorithm adds $\bigRin \cdot \calW_\ell$ to $\bigRout$ as required, running in time $4 \cdot (T \cdot n^2) + 6 \cdot \poly \log p \leq \bigO(T \cdot n^2)$.
\end{proof}

\subsection{Short Walks via Color Classes}
Our goal now is to compute $\shortpath$. By the definition of the weights of $H$, we have that computing $\calW_0$ with respect $H$ is equivalent to computing $\calW_{\log \lambda}$ with respect to $G'$ plus accounting for multiplicities; thus we utilize \cref{lem:main_subroutine} with this goal in mind.

\begin{lemma} \label{lem:short}
    Assume we have access to graph $G = (V,E)$ on $n \in \mathbb{N}$ vertices, prime $p = \poly(n)$, and seed $\sigma$. Fix $\lambda := 2^{\sqrt{(\epsilon/2.1) \log n}}$, define graphs $G'$ and $H$ as per \cref{def:u_graph}, and define $\F := \mathbb{F}_p$ and $C := \lambda$.
    
    Then there exists a catalytic subroutine $\shortpath(\bigvin, \bigvout; \bigRin, \bigRout)$, where $\bigvin, \bigvout \subseteq V_U$ are any sets of vertices of $H$ and $\bigRin \in \F^{\bigvin}, \bigRout \in \F^{\bigvout}$ are any sections of the catalytic tape, whose effect is to add $\bigRin \cdot \calW_0$ to $\bigRout$, where $\calW$ is defined with respect to $H$.
        
    $\shortpath$ runs in time $\bigO(|E| \cdot n^{\epsilon})$, uses free space $\bigO(\log n)$, and uses catalytic memory $|\bigvin| + |\bigvout| + \frac{3n}{2^{\sqrt{(\epsilon/2.1)\log n}}} \cdot \bigO(\log p)$, namely the vectors $\bigRin, \bigRout$ plus three additional catalytic vectors $\bigRone, \bigRtwo, \bigRthree \in \F^{n/C}$.
\end{lemma}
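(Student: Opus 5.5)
We instantiate \cref{lem:main_subroutine} on the graph $G'$ (which has $m = n$ vertices, after padding $n$ to a power of two), with $C = \lambda$ color classes and recursion depth $\ell^* := \log \lambda = \sqrt{(\epsilon/2.1)\log n}$. By definition of $H$, the entry $\calW_0^H[\vin,\vout]$ equals the number of walks of length exactly $\lambda = 2^{\ell^*}$ from $\vin$ to $\vout$ in $G'$ (with unit edge weights). This is precisely $\calW^{G'}_{\ell^*}[\vin,\vout]$. So the task is to compute, for vertices of $H$ given as labels of vertices of $G$ (possibly repeated), $\bigRout \pluseq \bigRin \cdot \calW^{G'}_{\ell^*}$, and then interface with the catalytic sections $\bigRin,\bigRout$.

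\textbf{Base case.} Take $P_0(\cin,\cout;\vv{r_a},\vv{r_b},\cdot)$: for every edge $(u,v) \in E'$ with $u \in V_{\cin}$ and $v \in V_{\cout}$, do $\vv{r_b}[v] \pluseq \vv{r_a}[u]$. Its inverse replaces addition by subtraction. Scanning the edge list (plus self-loops) costs time $\bigO(|E| + n)\cdot\polylog{n}$ and free space $\bigO(\log n)$. Determining color classes costs only $\log C$ by \cref{def:colors}.

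\textbf{Recursion.} \cref{lem:main_subroutine} yields $P_{\ell^*}(\cin,\cout)$, which adds $\vv{r_1}\cdot\calW_{\ell^*}$, restricted to class $\cin$ and class $\cout$, into $\vv{r_2}$. It uses catalytic vectors $\vv{r_1},\vv{r_2},\vv{r_3}\in\F^{n/C}$. Its time is $(4\lambda)^{\log\lambda}\cdot T = 2^{(2+\log\lambda)\log\lambda}\cdot T = n^{\epsilon/2.1 + o(1)}\cdot \bigO(|E|)\cdot\polylog{n} \le \bigO(|E| n^{\epsilon})$. Its free space is $\log\lambda\cdot\log 4\lambda = \bigO(\log n)$.

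\textbf{Interfacing.} The wrapper loops over all pairs $(\cin,\cout)\in[C]^2$, of which there are $\lambda^2 = n^{o(1)}$ (absorbed in the time bound). For each pair we mimic the ten-step cancellation program from the proof of \cref{lem:long}, which proceeds as follows.
\begin{itemize}
\item Subtract $\vv{r_2}[v]$ from $\bigRout[\vout]$ for every $\vout\in\bigvout$ whose underlying vertex $v$ lies in $V_{\cout}$. We iterate over $\bigvout$ and compute each underlying vertex via $\calU$ in $\polylog{n}$ time.
\item Run $P^{-1}$.
\item Add back $\vv{r_2}[v]$ into those same entries of $\bigRout$.
\item Run $P$.
\item Add $\bigRin[\vin]$ into $\vv{r_1}[u]$ for every $\vin\in\bigvin$ whose underlying vertex $u$ lies in $V_{\cin}$.
\item Repeat the subtract, $P$, add sequence on $\bigRout$.
\item Run $P^{-1}$ and undo the addition into $\vv{r_1}$.
\end{itemize}
The net effect is $\bigRout[\vout]\pluseq\sum_{\vin}\bigRin[\vin]\calW^{G'}_{\ell^*}[u,v]$. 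Summing over all class pairs gives exactly $\bigRin\cdot\calW^H_0$, and every auxiliary register is restored.

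Because $\vv{r_1}$ is indexed by $G$-vertices rather than $H$-vertices, multiset duplicates in $\bigvin$ are handled automatically: several $\bigRin$ entries sum into the same $\vv{r_1}[u]$. The outputs are likewise read out separately into each duplicate. This linearity is the one point that needs care.

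The main obstacle I expect is exactly this multiset and indexing translation. A second point to check is that iterating over all of $\bigvin$ and $\bigvout$ per class pair costs $\bigO(|V_U|\lambda^2\polylog{n})$ extra time, which is dominated by $|E| n^{\epsilon}$. Correctness of the recursive part then follows directly from \cref{lem:main_subroutine}.
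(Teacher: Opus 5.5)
Your proposal is correct and matches the paper's proof essentially step for step. You instantiate \cref{lem:main_subroutine} on $G'$ with unit weights, $C=\lambda$ and depth $\log\lambda$ over an edge-scanning base case. You then wrap it in the ten-step cancellation program of \cref{lem:long}, looped over all $(\cin,\cout)\in[C]^2$. Copies of a vertex in $\bigvin$ are summed into a single entry of $\bigRone$, and copies in $\bigvout$ each receive the output, by linearity. (As in the paper, writing $|E|$ where the base case actually scans $|E|+n$ edges tacitly assumes $|E|=\Omega(n)$, which is harmless.)
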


\begin{proof}
    Redefine $\calW$ to be with respect to $G'$ for the rest of this proof. We will do the same two-step procedure as before, first computing the walk matrix between each $(\cin,\cout) \in [C]^2$ pair, and second moving the relevant answers to the input and output vectors. The former will be very straightforward from \cref{lem:main_subroutine}, while the latter mostly follows the interfacing of \cref{lem:long}; however, we will need to additionally handle the changeover from $V_U$ (with multiplicity) to $V'$.
    
    Consider \cref{lem:main_subroutine} on graph $G'$ using weight function $\mu(e) = 1$, taking $C$ as given and $\ell := \log \lambda$. Again putting aside $\bigRin$ and $\bigRout$ for the moment, we will use catalytic registers $\bigRone$, $\bigRtwo$, and $\bigRthree$ as its internal vectors $\bigRin$, $\bigRout$, and $\bigRmid$ respectively, each of length $n/C$.
    
    We define a base case $P_0$, which will be checking connectivity in $G'$. Since $\ell = 0$, the set $\Gamma_0(\vin,\vout)$ contains the singleton edge $(\vin,\vout)$ of weight 1 if it is in $E'$ and is empty otherwise. Thus our program does the following:
    $$\bigRout[\vout] \pluseq \bigRin[\vin] \qquad \forall (\vin,\vout) \in E \cap (V_{\cin} \times V_{\cout}) $$
    which can be computed in time $2m\log C + |E'| \cdot \poly \log |\F|$ and space $2 \log(m/C) + \log C + \bigO(\log |\F|)$ by \Cref{prop:fieldwork}.
    
    Applying \cref{lem:main_subroutine}, we get that for any $\cin,\cout \in [C]$ we have an algorithm $P(\cin,\cout)$ (as well as its inverse) which successfully adds $\bigRone \cdot \calW_{\ell}$ to $\bigRtwo$, and which takes time
    $$\begin{array}{rcl}
        (4C)^{\ell} (2m \log C + |E'| \cdot \poly \log |\F|) & \leq & \bigO(|E'| \cdot (\cdot 2^{\sqrt{(\epsilon/2.1) \log m}})^{\sqrt{(\epsilon/2.1) \log m}})\\
        & &\quad \cdot 2^{2 \cdot \sqrt{(\epsilon/2.1) \log m}} \cdot \poly \log m\\
        & = & \bigO(|E| \cdot n^{\epsilon/1.01})
    \end{array}$$
    and uses free space
    $$\ell \log 4C + 3 \log(n/C) + \log C + \bigO(|\log \F|) \leq (\sqrt{(\epsilon/2.1) \log n})^2 + \bigO(\log n) = \bigO(\log n)$$
    as desired.

    Now we repeat the same interfacing procedure as in \cref{lem:long}. To bridge the notation gap between $V_U$ and $V'$, for any sets $\bigvin \subseteq V_U$ and $V_{\cin} \subseteq V'$ and a vertex $\vin \in \bigvin \cap V_{\cin}$, we let $\bigRin[\vin]$ mean the set of all entries $\bigRin[u_1] \ldots \bigRin[u_k]$ where $u_1 \ldots u_k \in V_U$ are all vertices whose base vertex from $V'$ is $\vin$, while $\bigRone[\vin]$ is the sole entry indexed by $\vin$; we do the same for $\bigvout \cap V_{\cout}$.
    
    Our algorithm $\shortpath$ will directly invoke $P(\cin,\cout)$ over each pair of color classes and ignore all answers except those from vertices $\bigvin \cap V_{\cin}$ to vertices $\bigvout \cap V_{\cout}$, which will precisely cover each pair of vertices $(\vin,\vout) \in \bigvin \times \bigvout$ once.
    \begin{itemize}
        \item for all $\cin, \cout \in [C]$:
        \begin{enumerate}
            \item $\bigRout[\vout] \minuseq \bigRtwo[\vout] \quad \forall \vout \in \bigvout \cap V_{\cout}$
            \item $P^{-1}(\cin,\cout)$
            \item $\bigRout[\vout] \pluseq \bigRtwo[\vout] \quad \forall \vout \in \bigvout \cap V_{\cout}$
            \item $P(\cin,\cout)$
            \item $\bigRone[\vin] \pluseq \bigRin[\vin] \quad \forall \vin \in \bigvin \cap V_{\cin}$
            \item $\bigRout[\vout] \minuseq \bigRtwo[\vout] \quad \forall \vout \in \bigvout \cap V_{\cout}$
            \item $P(\cin,\cout)$
            \item $\bigRout[\vout] \pluseq \bigRtwo[\vout] \quad \forall \vout \in \bigvout \cap V_{\cout}$
            \item $P^{-1}(\cin,\cout)$
            \item $\bigRone[\vin] \minuseq \bigRin[\vin] \quad \forall \vin \in \bigvin \cap V_{\cin}$
        \end{enumerate}
    \end{itemize}
    The above program has the same analysis as in \cref{lem:long}, and so we only need confirm what happens to vertices with multiplicity. Consider a set of vertices $u_1 \ldots u_k \in V_U$ which are copies of the same vertex $\vin \in V'$, and consider some $\vout$. Restricting our view to $(\vin,\vout)$ and letting $\vv{\tau}$ represent the initial values in their respective $\vv{r}$ as before, since we add all $\tauin[u_i]$ to $\tauone[\vin]$, the value added to $\bigRout[\vout]$ is
    $$((\tauin[u_1] + \ldots + \tauin[u_k]) \cdot \calW_{\ell})[\vout] = (\tauin[u_1] \cdot \calW_{\ell})[\vout] + \ldots + (\tauin[u_k] \cdot \calW_{\ell})[\vout]$$
    which is exactly as required for computing $\calW_0$ over graph $H$.
    Thus looping over all $(\cin,\cout) \in [C]^2$, our algorithm successfully adds $\bigRin \cdot \calW_\ell$ to $\bigRout$ and resets all other catalytic memory, running in time
    $$C^2 \cdot [4 \cdot \bigO(|E| \cdot n^{\epsilon/1.01}) + 6 \cdot |V_U| \cdot \bigO(\log |U|) \cdot \bigO(\log p)] \leq \bigO(|E| \cdot n^{\epsilon})$$
    as desired.
\end{proof}

\subsection{Final Algorithm}
Our $\longpath$ algorithm, using $\shortpath$ as a subroutine, gives us everything we need to compute $\stconn$, with the remaining issues being to choose $U$ and $p$ so as to ensure correctness and efficiency.
\begin{proof}[Proof of \cref{thm:main_stconn}]
    For the moment assume we have access to prime $p = \poly(n)$ and seed $\sigma$. Define $\F = \F_p$, define $\lambda := 2^{\sqrt{(\epsilon/2.1)\log n}}$, and let $U \coloneqq U_\sigma$, and subsequently $H = (V_U,E_U)$ be as given in \cref{thm:u_construct} and \cref{def:u_graph} respectively.

    Define $\bigvin := \{s\}$ and $\bigvout := \{t\}$, and we will have $\bigRin, \bigRout \in \F$ come from the free work tape instead of the catalytic tape, being initialized to $1$ and $0$ respectively. Thus running $\longpath$ gives a final value of
    $$0 + 1 \cdot \sum_{W \in \Gamma_{\log |V_U|}(s,t)} \mu(W)$$
    where $\Gamma$ is taken over $H$.\footnote{Note that because each vertex in $G'$ has self-loops and hence walks of length $\lambda$ to itself, the vertices of $H$ also each have self-loops of the appropriate weight.} Moving to $G'$, every walk in $G'$ has weight 1 since all edges have weight 1, and so our final value will be
    $$\sum_{W \in \Gamma_{\log |V_U|}(s,t)} \mu(W) = \sum_{W \in \Gamma_n^*(s,t)} 1 = |\Gamma_n^*(s,t)|$$
    where $\Gamma_n^*(s,t)$ is the set of walks from $s$ to $t$ in $G'$ of length $n$ which pass through nodes in $U$ at most every $\lambda$ steps.
    
    Assuming we have our modulus $p$ and our seed $\sigma$ for \cref{thm:u_construct} on the free work tape, putting \cref{lem:long} and \cref{lem:short} together our total runtime is
    $$\bigO(|E| \cdot n^{\epsilon}) \cdot \bigO(n^2) = \bigO(|E| \cdot n^{2+\epsilon})$$
    our free space usage is
    $$\bigO(\log p + \log |U| + \log n) = \bigO(\log n)$$
    and our catalytic space usage is
    $$6 \cdot \left(\frac{n}{2^{\sqrt{(\epsilon/2.1) \log n}}}\right) \cdot \bigO(\log p) = \frac{n}{2^{\Omega(\sqrt{\log n})}}$$
    Because \cref{lem:long} does not change any values of the catalytic tape besides the output registers, which we have put on the free work tape, our catalytic memory is successfully returned to its original configuration.

    Our deterministic algorithm simply loops over all descriptions of $p$ and $\sigma$, and accepts iff any choice gives a non-zero answer, while our randomized algorithm chooses $p$ and $\sigma$ uniformly and thus incurs no time overhead. Clearly if there is no $s$--$t$ path, no choice of $p$ and $\sigma$ returns a non-zero answer, and so assume otherwise. By \cref{thm:u_construct}, $U$ is such that if there is any $s$--$t$ path in $G$, then $\Gamma^*_n(s,t)$ will be non-empty with probability 0.99. Conversely, since every walk in $\Gamma^*_n(s,t)$ is an $s$--$t$ walk in $G'$ of length $n$, we have
    $$|\Gamma^*_n(s,t)| \leq n^n = 2^{n \log n}$$
    since at every step there are at most $|G'| = n$ choices of the next vertex for the walk. Thus by \cref{prop:rand_prime}, $|\Gamma_n^*(s,t)|$ will be non-zero modulo $p$ with probability 0.99 over a random choice of modulus $p \in [n \log^2 n, 3n \log^2 n]$ (which can be picked at random using a standard procedure of trial-and-error in time $\polylog{n}$). In total, over a random choice of $p$ and $U$ we have a 0.98 chance of accepting whenever $s$ and $t$ are connected.
\end{proof}

\section{Constructing Representative Multisets for Directed Graphs}
\label{sec:hitting_sets}
In this section we prove \cref{thm:u_construct}. We start by recalling the statement.
\uconstruct*
\noindent We call multisets $U_\sigma$ that satisfy Item (2) as a $\lambda/2$-\textit{representative multiset}.

Our construction of representative multisets will rely on the classical and well-used notion of \emph{expander graphs}. An undirected (multigraph) $D$ is an $(N,\Delta,\alpha)$-\textsf{expander graph} if it is a $\Delta$-regular connected graph on $N$ vertices, and the second-highest eigenvalue of its normalized adjacency matrix is bounded by $\alpha \in [0,1]$ (in absolute value). Moreover, the family $\{D_N\}$ is said to have a (fully) \textsf{explicit} construction, if there exists a deterministic algorithm that takes a vertex $v \in \Bits^{\log(N)}$, and a number $i \in [\Delta]$ as inputs (in binary), and computes the $i^{\text{th}}$ neighbor of $v$ in time $\polylog{N}$ and space $\bigO(\log N)$. We use the following explicit expander construction.
\begin{lemma}[\cite{LPS88,Morg94}]
\label{lem:lps_expander}
    For every prime $p$ and every positive integer $k$, there exists an infinite family of explicit graphs $\{D_N\}$, such that each $D_N$ is an $(N, \Delta, \alpha)$-expander for $\Delta = p^k + 1$ and $\alpha \leq \frac{2\sqrt{\Delta-1}}{\Delta}$. 
\end{lemma}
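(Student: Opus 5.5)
This lemma is imported from the literature, so the plan is to recall how the cited constructions give each of the three required properties rather than re-derive the spectral bound. We split by the base prime: for odd $p$ and $k=1$ we use the Lubotzky--Phillips--Sarnak graphs~\cite{LPS88}, and for general prime powers $p^k$ (including $p=2$) we use Morgenstern's function-field analogue~\cite{Morg94}. In both cases $D_N$ is a Cayley graph of $\mathrm{PSL}_2(\F_q)$ or $\mathrm{PGL}_2(\F_q)$ with respect to a symmetric generating set $S$ of size $p^k+1$. For LPS, $S$ comes from the integral quaternions of norm $p$. For Morgenstern, $S$ comes from the norm-$\gamma$ elements of a quaternion algebra over $\F_{p^k}(t)$, reduced modulo an irreducible polynomial. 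Regularity, $\Delta = |S| = p^k+1$, and undirectedness ($S = S^{-1}$) are immediate from the definition. Connectivity, i.e.\ that $S$ generates the group when $q$ is chosen appropriately (for instance with the right quadratic-residue condition), is part of the cited constructions, and the family is infinite because $q$ ranges over infinitely many primes or irreducible polynomials.

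The spectral bound $\alpha \le 2\sqrt{\Delta-1}/\Delta$, i.e.\ the Ramanujan property, is the deep step and the main obstacle; we would simply cite it. The route is as follows. First, identify the adjacency operator of $D_N$ with a Hecke operator acting on a space of automorphic forms on the quaternion algebra. Second, use Jacquet--Langlands to transfer the nontrivial eigenvalues to Hecke eigenvalues of weight-$2$ cusp forms, or of Drinfeld-type forms in the function-field case. Third, bound these eigenvalues by the Ramanujan--Petersson conjecture: Deligne's theorem in the number-field case, and Drinfeld's theorem in the function-field case. This yields $|\lambda| \le 2\sqrt{p^k}$ for every nontrivial eigenvalue. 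Normalizing by $\Delta$ then gives the stated $\alpha$.

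For full explicitness, the computation is direct. A vertex is a $2\times 2$ matrix over $\F_q$, or over $\F_{p^k}[t]/(g)$ in the Morgenstern case, written with $\bigO(\log N)$ bits. The $i$-th generator in $S$ is computed by enumerating the finitely many solutions of the norm equation in a fixed order; since $\Delta$ is constant, this costs $\bigO(1)$ time. The $i$-th neighbor is then one matrix product followed by normalization modulo scalars. By \cref{prop:fieldwork} (or its polynomial-ring analogue) this takes time $\polylog{N}$ and space $\bigO(\log N)$.

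Finally, we must check that the available sizes suffice for our use. The family only exists for $N = \Theta(q^3)$, so when the construction is later applied with vertex set $\Bits^{2\log n}$ we would pick the smallest admissible $q$ with $N \ge n^2$. By Bertrand-type density of suitable primes, this loses only a constant factor in $N$. We then embed the hash-function seeds into a subset of $V(D_N)$ of constant density. The loss from this embedding is absorbed into the expander-walk hitting bound, which tolerates constant-density target sets.
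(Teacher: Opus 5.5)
The paper gives no proof of this lemma; it is imported directly from \cite{LPS88,Morg94}. So recalling what those constructions provide is the same approach, and your outline is right: quaternion generators, Cayley graphs of $\mathrm{PSL}_2$ or $\mathrm{PGL}_2$, the Ramanujan bound via Deligne and Drinfeld, and explicitness via one matrix product.

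\textbf{One step needs fixing.} You let $D_N$ be a Cayley graph of either $\mathrm{PSL}_2$ or $\mathrm{PGL}_2$ and conclude $\alpha\le 2\sqrt{\Delta-1}/\Delta$ from the bound on nontrivial eigenvalues. The $\mathrm{PGL}_2$ members are bipartite (in LPS, the case $\left(\frac{p}{q}\right)=-1$; Morgenstern has the analogous split). So $-\Delta$ is an eigenvalue, and the Ramanujan property only bounds eigenvalues other than $\pm\Delta$. The $\alpha$ needed for \cref{lem:exp_rw} is a bound on every nontrivial eigenvalue in absolute value, including the smallest one, so for these graphs $\alpha=1$. You must restrict to the non-bipartite $\mathrm{PSL}_2$ subfamily, which is still infinite. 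The quadratic-residue condition you mention is exactly what separates the two cases: it governs bipartiteness, not connectivity. A smaller inaccuracy: LPS needs $p\equiv q\equiv 1 \pmod 4$, not merely $p$ odd. This costs nothing, since Morgenstern covers every prime power, including $k=1$.

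\textbf{Your final paragraph.} It lies outside the lemma, which only promises infinitely many $N$, but it does catch a real mismatch: the paper uses an expander on exactly $|\calH_{m,n}|$ vertices. However, your patch does not preserve the paper's constants. Suppose the seeds occupy only a $\delta$-fraction of $V(D_N)$. Walk vertices outside that fraction yield no hash function and must count as bad. The bad set then has density about $1-\delta$, so the per-step factor in \cref{lem:exp_rw} becomes roughly $1-\delta$ instead of $3\beta$. Then $K$ grows by a large constant factor, and the $(6+\eps)\log n$ seed length is lost; only $\bigO(\log n)$ survives.

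The standard fix is a near-uniform surjection $v\mapsto v \bmod |\calH_{m,n}|$ from $V(D_N)$, with $N\ge |\calH_{m,n}|$. Each seed then has at most $\lceil N/|\calH_{m,n}|\rceil$ preimages. A bad set of density $\beta$ therefore pulls back to density at most $\beta(1+|\calH_{m,n}|/N)\le 2\beta$, which is absorbed by shrinking $\beta$. When $N=\bigO(|\calH_{m,n}|)$, this costs only $\bigO(1)$ extra seed bits.
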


\noindent This is followed by the definition of an expander random walk.
\begin{definition}[Expander random walk]
    \label{def:exp_walk}
    For any $(N,\Delta,\alpha)$-expander $D = (V_D,E_D)$, an \textbf{expander random walk} $(X_1, \dots, X_K)$ of length $K-1$, is one where $X_1$ is the uniform distribution over $[N]$, and for every $i \in [K-1]$, $X_i$ is the uniform distribution over the $\Delta$-sized neighborhood of the $(i-1)^{\text{th}}$-vertex in the random walk.
\end{definition} 

For any $\mathcal{B} \subset V_D$, its density is defined as $\frac{\vert \mathcal{B} \vert}{V_D}$. Then, an expander random walk of length $K$ over $[N]$ behaves similar to uniformly sampling $K$ elements from $[N]$, in the following sense.
\begin{lemma}[\cite{AKS87_derand} (see also \cite{AS16_book})]
    \label{lem:exp_rw}
    Let $D$ be an $(N,\Delta,\alpha)$-expander, and let $\mathcal{B} \subset V_D$ of density $\beta$, for some $\beta > 0$. Then, for every $K$, the probability that an expander random walk $(X_1, \dots, X_K)$ of length $(K-1)$ in $D$ never leaves $\mathcal{B}$, is at most $\beta (\beta + \alpha \cdot (1-\beta))^{K-1}$.
\end{lemma}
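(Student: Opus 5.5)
We will prove the bound spectrally, following the argument of \cite{AKS87_derand}/\cite{AS16_book}. Let $A$ be the normalized adjacency matrix of $D$, which is symmetric and doubly stochastic, and let $P$ be the diagonal $0/1$ projection onto the coordinates in $\mathcal{B}$. Let $u=\frac1N\mathbf 1$ be the uniform distribution. The probability that the walk $(X_1,\dots,X_K)$ stays in $\mathcal{B}$ is then
\[
\Pr[\forall i,\ X_i\in\mathcal{B}] \;=\; \mathbf 1^{\top}(PA)^{K-1}Pu \;=\; \mathbf 1^{\top}P\,(PAP)^{K-1}\,Pu ,
\]
where the second equality uses $P^2=P$. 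By Cauchy--Schwarz this is at most $\|P\mathbf 1\|_2\cdot\|PAP\|^{K-1}\cdot\|Pu\|_2$. Here $\|P\mathbf 1\|_2=\sqrt{\beta N}$ and $\|Pu\|_2=\sqrt{\beta N}/N$, so the prefactor is exactly $\beta$. It therefore suffices to show the operator-norm bound $\|PAP\|\le \beta+\alpha(1-\beta)$.

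Since $PAP$ is symmetric, its norm equals $\sup|\langle y,Ax\rangle|$ over unit vectors $x,y$ supported on $\mathcal{B}$. Decompose $x=x_\parallel+x_\perp$, with $x_\parallel=\frac{\langle x,\mathbf 1\rangle}{N}\mathbf 1$, and decompose $y$ the same way. Because $A\mathbf 1=\mathbf 1$ and $A$ preserves $\mathbf 1^{\perp}$, the cross terms vanish, giving $\langle y,Ax\rangle=\langle y_\parallel,x_\parallel\rangle+\langle y_\perp,Ax_\perp\rangle$. Hence
\[
|\langle y,Ax\rangle|\le ab+\alpha\sqrt{1-a^2}\sqrt{1-b^2},
\]
where $a=\|x_\parallel\|$ and $b=\|y_\parallel\|$. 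The key use of the support condition comes next. Since $x$ is supported on $\mathcal{B}$, Cauchy--Schwarz gives $|\langle x,\mathbf 1\rangle|\le\sqrt{\beta N}\,\|x\|$, so $a\le\sqrt\beta$, and likewise $b\le\sqrt\beta$.

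It remains to maximize $f(a,b)=ab+\alpha\sqrt{(1-a^2)(1-b^2)}$ over $a,b\in[0,\sqrt\beta]$. I expect this to be the only mildly delicate step, since the triangle inequality alone would give the weaker bound $\beta+\alpha$. Write $a=\cos\theta$ and $b=\cos\phi$. Then $f=\cos\theta\cos\phi+\alpha\sin\theta\sin\phi$. By Cauchy--Schwarz on the vectors $(\cos\theta,\sqrt\alpha\sin\theta)$ and $(\cos\phi,\sqrt\alpha\sin\phi)$, we get $f\le\max\bigl(g(a),g(b)\bigr)$, where $g(a)=a^2+\alpha(1-a^2)$. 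Because $\alpha\le 1$, the function $g$ is nondecreasing in $a$, so $f\le g(\sqrt\beta)=\beta+\alpha(1-\beta)$.

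Plugging this bound into the first display gives $\Pr\le\beta(\beta+\alpha(1-\beta))^{K-1}$, as claimed.
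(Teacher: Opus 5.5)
Your proof is correct and complete. The paper gives no proof of its own for this lemma (it is cited from \cite{AKS87_derand} and \cite{AS16_book}), so the natural comparison is with the standard argument behind that citation, which shares your skeleton. That skeleton is: write the probability as $\mathbf{1}^{\top}P(PAP)^{K-1}Pu$, pull out the prefactor $\|P\mathbf{1}\|_2\|Pu\|_2=\beta$, and bound $\|PAP\|$ by splitting vectors along $\mathbf{1}$.

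The real difference is in how $\|PAP\|$ is bounded. You take the supremum of the bilinear form $\langle y,Ax\rangle$ over independent $x,y$. That is what produces the two-variable problem $\max\, ab+\alpha\sqrt{(1-a^2)(1-b^2)}$ and forces your trigonometric Cauchy--Schwarz step. The usual route uses that $PAP$ is symmetric with nonnegative entries. Symmetry lets you take $y=x$, and nonnegativity lets you drop the absolute value, since $|\langle x,PAPx\rangle|\le\langle |x|,PAP|x|\rangle$ with $|x|$ the entrywise absolute value. Then for every $x$ supported on $\mathcal{B}$,
\[
\langle x,Ax\rangle=\frac{\langle x,\mathbf{1}\rangle^2}{N}+\langle x_\perp,Ax_\perp\rangle\le(1-\alpha)\frac{\langle x,\mathbf{1}\rangle^2}{N}+\alpha\|x\|^2\le\bigl(\beta+\alpha(1-\beta)\bigr)\|x\|^2 .
\]
This is exactly your bound in the case $a=b$, and the optimization you flagged as delicate disappears.

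This route also needs only the one-sided condition $\langle z,Az\rangle\le\alpha\|z\|^2$ for $z\perp\mathbf{1}$. Your estimate $|\langle y_\perp,Ax_\perp\rangle|\le\alpha\|x_\perp\|\|y_\perp\|$ instead needs every nontrivial eigenvalue to lie in $[-\alpha,\alpha]$. That two-sided condition is the standard reading of the paper's definition of an $(N,\Delta,\alpha)$-expander, so your proof is valid as written. The quadratic-form version is simply shorter and slightly more general. Yours has the minor advantage of using nothing beyond Cauchy--Schwarz and the invariance of $\mathbf{1}^{\perp}$, never the sign structure of $PAP$.
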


Next, we have the following lemma about a pairwise independent hash function family $\calH_{m,n}$. We say that a hash function $h$ \textsf{hits} a set $B \subseteq [n]$, if there exists an $i \in [m]$, such that $h(i) \in B$, i.e., if $\mathsf{Im}(h)$ intersects $B$. We prove that a randomly chosen $h$ from $\calH_{m,n}$, where $m$ is $\bigO(n/\lambda)$, hits any $\lambda$-sized subset of $[n]$ with large probability. 
\begin{lemma}
    \label{lem:hit_small_subsets}
    For any constant $0 < \beta < 1/2$, let $m = \frac{n(1/\beta)}{\lambda}$, and let $\calH_{m,n}$ be a pairwise independent hash function family from $[m]$ to $[n]$. Then, for any $B \subset [n]$ of size $\lambda$ we have,
    \begin{equation*}
        \Pr_{h \sim \calH_{m,n}} [h \text{ hits } B] \geq 1 - \beta.
    \end{equation*}
\end{lemma}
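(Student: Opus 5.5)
We apply the second-moment method (Chebyshev) to the number of hits. Fix $B \subset [n]$ with $|B| = \lambda$, and define
\[
X = \sum_{i \in [m]} X_i, \qquad X_i = \mathbf{1}[h(i) \in B].
\]
Then $h$ hits $B$ exactly when $X > 0$, so it suffices to show $\Pr[X = 0] \le \beta$.

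\textbf{Expectation.} By the consequence of \cref{def:pwi_hash} noted after it, $\Pr[h(i)=y] = 1/n$ for every $i$ and $y$. Hence $\Pr[X_i = 1] = \lambda/n$, and
\[
\mathbb{E}[X] = m\lambda/n = 1/\beta.
\]

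\textbf{Variance.} Pairwise independence makes the indicators $X_i$ pairwise independent. For distinct $i,j$, sum the joint probability of \cref{def:pwi_hash} over $y_1,y_2 \in B$ to get $\Pr[X_i = X_j = 1] = (\lambda/n)^2$. Therefore all covariances vanish, and
\[
\mathrm{Var}[X] = \sum_i \mathrm{Var}[X_i] \le \sum_i \mathbb{E}[X_i] = \mathbb{E}[X] = 1/\beta.
\]

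\textbf{Chebyshev.} The event $X = 0$ forces $|X - \mathbb{E}X| \ge \mathbb{E}X$, so
\[
\Pr[X=0] \le \frac{\mathrm{Var}[X]}{(\mathbb{E}X)^2} \le \frac{1/\beta}{1/\beta^2} = \beta.
\]
This gives $\Pr[h \text{ hits } B] \ge 1-\beta$, as claimed. The hypothesis $\beta < 1/2$ is not needed for this bound.

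\textbf{Main obstacle.} The only step needing care is the joint-probability computation that justifies pairwise independence of the indicators. It is immediate from \cref{def:pwi_hash}, as long as we use the exact $1/n^2$ equality rather than an inequality. A minor issue is integrality of $m = n/(\beta\lambda)$: if it is not an integer, round it up. This only increases $\mathbb{E}[X]$ while $\mathrm{Var}[X] \le \mathbb{E}[X]$ still holds, so the bound $\mathrm{Var}/(\mathbb{E})^2 \le 1/\mathbb{E}[X] \le \beta$ continues to hold.
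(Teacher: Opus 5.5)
Your proposal is correct and matches the paper's proof essentially line for line: indicators $X_i$, $\mathrm{E}[X]=m\lambda/n$, vanishing covariances from pairwise independence giving $\mathrm{Var}(X)\le \mathrm{E}[X]$, and Chebyshev yielding $\Pr[X=0]\le 1/\mathrm{E}[X]=\beta$. Your two side remarks, that rounding $m$ up preserves the bound and that $\beta<1/2$ is not needed, are valid small additions the paper leaves implicit.
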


\begin{proof}
    Let $h$ be a random hash function sampled from $\calH_{m,n}$. For every $i \in [m]$, define the indicator random variable $X_i$ that is $1$ iff $h(i) \in B$. Note that the the random variables $X_1, \dots, X_m$ are pairwise independent from the definition of $\calH_{m,n}$. 
    
    Let $X = \sum_{i=1}^m X_i$, be the random variable that counts the size of $\mathsf{Im}(h) \cap B$. Observe that $$\text{E}[X] = \sum_{i=1}^m \text{E}[X_i] = \sum_{i=1}^m \Pr[h(i) \in B] = \frac{m\lambda}{n},$$ and from the pairwise independence of the $X_1, \dots, X_m$, $$\text{Var}(X) = \sum_{i=1}^m \text{Var}(X_i) = \frac{m\lambda}{n} \left( 1 - \frac{\lambda}{n} \right) = \text{E}[X]\left( 1 - \frac{\lambda}{n} \right).$$
    Upper bounding the probability that $h$ does not hit $B$, i.e., $X = 0$, by $\beta$, we prove the lemma.
     \begin{equation*}
        \begin{split}
            \Pr_{h \sim \calH_{m,n}} [X = 0] &\leq \Pr_{h \sim \calH_{m,n}} \left[ \vert X - \text{E}[X] \vert \geq \text{E}[X] \right] \leq \frac{\text{Var}(X)}{(\text{E}[X])^2} \\
            &= \frac{\text{E}[X] \left( 1 - \lambda/n \right)}{(\text{E}[X])^2} \leq \frac{1}{\text{E}[X]} \\
            &\leq \frac{n}{m\lambda} = \beta. \\
        \end{split}
    \end{equation*}
    The second inequality follows from Chebyshev's inequality\footnote{For any random variable $X$ with finite non-zero variance, and for any $a > 0$, $\Pr_X \left[ \vert X - \text{E}[X] \vert \geq a \right] \leq \frac{Var(X)}{a^2}$.} \cite{vadhan_psuedo_2012}, and the final probability calculation comes from our setting of $m = \frac{n(1/\beta)}{\lambda}$.
\end{proof}

We now show a probabilistic construction of representative multisets. Let $D$ be an explicit $(N,D,\alpha)$-expander, where $N = \{0,1\}^{2\log(n)}$, and each vertex describes a function in $\calH_{m,n}$. Picking up the hash functions obtained from the vertices of a random walk of length $K = \bigO(\log(n))$ on $D$, we prove that for every pair of vertices $u,v$ connected by a path of length $\lambda/2$, at least of one of the hash functions hits some walk visiting $\lambda/2$ distinct vertices between $u$ and $v$, with high probability. Finally, this property implies that the union of their images is a $\lambda/2$-representative multiset, with high probability.\footnote{In fact, notice that the two properties are equivalent; for $u$ and $v$ connected by a path of length $\lambda/2$, item(2) implies item (1) trivially.}

\begin{proof}[Proof of \cref{thm:u_construct}]
Let $0 < \beta < 1/2$ be a small constant that we set later in the proof. For a fixed $\lambda = \omega(\log(n))$ and $\eps > 0$, let $m = \frac{2n(1/\beta)}{\lambda}$, and $\calH_{m,n}$ be a pairwise independent hash function family from $[m]$ to $[n]$, given by \cref{lem:pwi_hash_props}.

Let  $D = (V_D,E_D)$ be an $(N,\Delta,\alpha)$-expander, where $N = \vert \calH_{m,n} \vert$, and $\Delta$ is chosen such that $\alpha \leq 2\beta$. Specifically, we use the explicit construction from \cref{lem:lps_expander}, by setting $\Delta = \frac{1}{\beta^2} + 1$ (assume that $1/\beta$ is a prime power), which in turn sets $\alpha \leq \frac{2\sqrt{\Delta-1}}{\Delta} \leq 2\beta$.
 
For any seed $\sigma$, the multiset $U_\sigma$ is constructed as follows: for $K = \left( 2+\frac{\eps}{8} \right)\log(n)/\log \left( \frac{1}{3\beta} \right)$, use $\sigma$ to specify the vertices $(X_1, \dots, X_K)$ as a walk of length $K-1$ on the expander $D$ (as described in \cref{def:exp_walk}). Interpreting each $X_i$ as the description of a hash function $h_i$ from $\calH_{m,n}$, the set $U_\sigma$ is defined as $\bigcup_{i=1}^K \mathsf{Im}(h_i)$, i.e., the union of the images of all $h_i$'s. Clearly, the size of $U_\sigma$ is $\bigO \left( \frac{n\log(n)}{\lambda}\right)$.

\vspace{0.1in}
We next show that a seed length of $(6+\eps)\log(n)$ for $\sigma$, suffices to construct $U_\sigma$. Firstly, observe from \cref{lem:pwi_hash_props} that the description length of any $h \in \calH_{m,n}$ is $\log(N) = \max\{\log(m),\log(n)\}+\log(n) = 2\log(n)$ bits, since $m < n$ for a constant $\beta$. The vertices $h_1, \dots, h_K$ on a walk in $D$ can be specified using $\log(N) + K \cdot \log(\Delta)$, which is equal to $2\log(n) + \frac{\log(n)(2+\eps/8)}{\log \left( \frac{1}{3\beta} \right)}\cdot\log \left( \frac{1}{\beta^2}+1\right)$. Since $\lim_{\beta \rightarrow 0^+} \frac{\log\left( \frac{1}{\beta^2}+1\right)}{\log \left( \frac{1}{3\beta} \right)} = 2$ we can set $\beta>0$ to be a small enough constant so that $\frac{\log\left( \frac{1}{\beta^2}+1\right)}{\log \left( \frac{1}{3\beta} \right)} < (2+\eps/4)$. Thus, the walk on $D$ can be described using at most $2\log(n) + ((2+\eps/8)(2+\eps/4))\log(n)$ bits, which is at most $(6+\eps) \cdot \log(n)$, and a seed length of $(6+\eps)\log(n)$ is sufficient.

\vspace{0.1in}
Next, we prove that with high probability over the choice $\sigma$ (or a random walk of length $K$ on $D$), for \textit{every} pair $u,v \in V$ that is connected by a path from $u$ to $v$ of length $\lambda/2$, there exists at least one walk $\rho_{u,v}$ that visits $(\lambda/2)-1$ distinct vertices, and is hit by $U_\sigma$ (i.e., one of the hash functions that form $U_\sigma$).
\begin{lemma}
\label{lem:expander_rw_good}
For a small enough constant $0 < \beta < 1/2$, let $D = (V_D,E_D)$ be an $(N,\Delta,\alpha)$-expander, where $N = \vert \calH_{m,n} \vert$, and $\Delta = \frac{1}{\beta^2} + 1$. 

Then, for any $G = (V,E)$, any $\eps > 0$, for $K = \frac{\left( 2+\frac{\eps}{8} \right)\log(n)}{\log \left( \frac{1}{3\beta} \right)}$, we have
$$\Pr_{\sigma \sim \{0,1\}^{(6+\eps)\log(n)}} \left[ U_\sigma \text{ is a } \lambda/2 \text{-representative set for } G \right] \geq 1 - \frac{1}{n^{\eps/8}}$$
\end{lemma}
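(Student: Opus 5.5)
Fix a pair $u,v\in V$ connected by a path of length $\lambda/2$, and fix one such path; let $B_{u,v}\subseteq V$ be a set of $\lambda/2$ vertices lying on a walk from $u$ to $v$ that visits $(\lambda/2)-1$ distinct vertices, so $B_{u,v}$ is a fixed target of size about $\lambda/2$. The plan is to bound, for this fixed pair, the probability that none of $h_1,\dots,h_K$ hits $B_{u,v}$, and then take a union bound over the at most $n^2$ pairs $(u,v)$. Since $K$ is chosen as $(2+\eps/8)\log n/\log(1/(3\beta))$, a per-pair failure probability of $n^{-(2+\eps/8)}$ is exactly what we should aim for, giving total failure at most $n^2\cdot n^{-(2+\eps/8)}=n^{-\eps/8}$.

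For the per-pair bound, I apply \cref{lem:hit_small_subsets} with target size $\lambda/2$. Here $m=\frac{2n(1/\beta)}{\lambda}=\frac{n(1/\beta)}{\lambda/2}$, so the lemma (with $\lambda/2$ in place of $\lambda$) says a uniformly random $h\in\calH_{m,n}$ fails to hit $B_{u,v}$ with probability at most $\beta$. Let $\mathcal{B}_{u,v}\subseteq V_D$ be the set of hash functions (expander vertices) that do not hit $B_{u,v}$; its density is some $\beta'\le\beta$. The event that $U_\sigma$ misses $B_{u,v}$ is exactly the event that the expander walk $(X_1,\dots,X_K)$ stays inside $\mathcal{B}_{u,v}$. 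By \cref{lem:exp_rw}, this has probability at most $\beta'(\beta'+\alpha(1-\beta'))^{K-1}\le(\beta+\alpha)^{K}$ roughly, and since $\alpha\le 2\beta$ by the choice $\Delta=1/\beta^2+1$ and \cref{lem:lps_expander}, this is at most $(3\beta)^{K}$ (using monotonicity of $x\mapsto x+\alpha(1-x)$ in $x$ for $\alpha\le 1$, and absorbing the leading $\beta'\le 3\beta$). Plugging in $K$ gives $(3\beta)^K = 2^{-K\log(1/(3\beta))}=n^{-(2+\eps/8)}$.

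Finally I check that hitting every such $B_{u,v}$ yields the representative property: if every $u,v$ at distance $\lambda/2$ have a vertex of $U_\sigma$ on a connecting walk of length below $\lambda/2$, then any $u$--$v$ path can be cut into consecutive segments of length $\lambda/2$, each containing a $U_\sigma$ vertex, so consecutive chosen $U_\sigma$ vertices (and the endpoints) are within distance $\lambda$; the last short segment of length under $\lambda/2$ is handled directly. The main obstacle I expect is the bookkeeping of the set $B_{u,v}$: one must fix it as a set of exactly (or at least) $\lambda/2$ distinct vertices, including dealing with the case where the shortest connection is shorter than $\lambda/2$, so that \cref{lem:hit_small_subsets} applies; I would handle this by taking a shortest $u$--$v$ path when the distance is at least $\lambda/2$ and its first $\lambda/2$ vertices, which are automatically distinct, and treating shorter distances as trivially satisfying the representative condition. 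Note also that the probability is over $\sigma$ of length $(6+\eps)\log n$, which was shown above to suffice to encode the walk, so the distribution of $(X_1,\dots,X_K)$ is exactly the expander random walk of \cref{def:exp_walk}.
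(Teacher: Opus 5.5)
Your proposal is correct and follows the paper's own proof step for step. You fix a $\lambda/2$-vertex set on a $u$--$v$ path and use \cref{lem:hit_small_subsets} with $m = 2n/(\beta\lambda)$ to bound the density of non-hitting hash functions by $\beta$. You then apply \cref{lem:exp_rw} with $\alpha \le 2\beta$ to get $(3\beta)^K = n^{-(2+\eps/8)}$, and take a union bound over at most $n^2$ pairs. Your extra care is sound: you handle the density $\beta' \le \beta$ via monotonicity, and you choose shortest paths so that segment endpoints are at distance exactly $\lambda/2$. Your final chaining paragraph is what the paper proves separately, after the lemma, as Item (2) of \cref{thm:u_construct}.
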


\begin{proof}
    Let $\lambda' = \lambda/2$. Pick any $u,v \in V$, for which there exists a $\lambda'$-length path from $u$ to $v$, and fix some walk $\rho_{u,v}$ that visits distinct vertices $p_1, \dots, p_{\lambda'-1}$ from $u$ to $v$. Define the set $P_{u,v} = \{u,p_1, \dots, p_{\lambda'-1}\}$.

    Let $\mathcal{B}$ be the subset of $V_D$ in the expander graph $D$ that contains hash functions in $\calH_{m,n}$ that do not hit $P_{u,v}$. Applying \cref{lem:hit_small_subsets} for subsets of size $\lambda'$ and $m = \frac{n(1/\beta)}{\lambda'}$, we see that $\mathcal{B}$ has density at most $\beta$. In addition, observe that from the expander construction in \cref{lem:lps_expander}, $\alpha \leq \frac{2\sqrt{\Delta-1}}{\Delta} \leq \frac{2}{\sqrt{\Delta}} \leq 2\beta$.

    From the property of an expander random walk given by \cref{lem:exp_rw}, we have
    \begin{equation*}
    \begin{split}
        \Pr_{(h_1, \dots, h_K) \sim (X_1, \dots, X_K)} &\left[ \forall i \in [K], \forall x \in [m], h_i(x) \notin P_{u,v} \right]  \\ 
        &\leq\beta \cdot (\beta + \alpha \cdot (1 - \beta ))^{K-1} \\
        &\leq \beta \cdot (\beta + 2\beta \cdot (1 - \beta ))^{K-1} \leq (3\beta)^{K}\\
    \end{split}
    \end{equation*}
    In the analysis above, we use $\sigma \in \{0,1\}^{(6+\eps)\log(n)}$ interchangeably with the $K$ hash functions it visits in the random walk over $D$.
    
    Choosing $K = \frac{\left( 2+\frac{\eps}{8} \right)\log(n)}{\log \left( \frac{1}{3\beta} \right)}$, the probability that $P_{u,v}$ intersects with $U_\sigma$ is at most $\frac{1}{n^{2+\eps/8}}$. Next, we apply a union bound over all pairs of vertices $u,v \in V$, for which $u$ is connected to $v$ by a $\lambda'$-length path. Since there at most $n^2$ such pairs, with probability at least $1 - \frac{1}{n^{\eps/8}}$ over the choice of $\sigma$, for each such $u,v$, there exists a walk $\rho_{u,v}$ that visits $\lambda'-1$ distinct vertices, whose vertex set is hit by $U_\sigma$, and thus, with this probability, $U_\sigma$ is a $\lambda/2$-representative set for $G$.
\end{proof}

To prove Item (1), we store the seed $\sigma$ on the work tape of size $\bigO(\log(n))$. On input $i \in [\vert U_\sigma \vert]$ in binary, $\calU$ first computes $a = \lfloor i/m \rfloor$ and $b = i\pmod m$, followed by which it outputs $h_a(b)$. From \cref{prop:fieldwork}, both operations can be performed in $\polylog{n}$ time and $\bigO(\log(n))$ space. For the latter computation, $\calU$ obtains $h_1$ from the first $2\log(n)$ bits in $\sigma$, and starts a walk on $D$, through the remaining bits on $\sigma$. Using the explicitness of $D$ from \cref{lem:lps_expander} to obtain the $a^{\text{th}}$ hash function from the random walk, $h_a(b)$ can now be computed using the algorithm from \cref{lem:pwi_hash_props}. From these lemmas, each of these operations can be done in $\polylog{n}$ time and $\bigO(\log(n))$ space, and thus Item (1) follows.

\vspace{0.1in}
Finally, we prove Item (2) of the theorem. Pick any $u,v \in V$ that is connected by a path $\rho_{u,v} = (u, p_1, \dots, p_{\ell'},v)$. Now, consider a partition of $\rho_{u,v}$ into subsequences, where $\rho^1$ is the subpath of $\rho_{u,v}$ of length $\lambda/2$ from $u$ to $q_{\lambda/2}$, and each subsequent $\rho^i$ is the subpath of $\rho_{u,v}$ from $q_{(i-1)\lambda/2}$ to $q_{i\lambda/2}$ visiting the next $\lambda/2$ vertices that appear after $q_{(i-1)\lambda/2}$. Finally, $\rho^{\ell+1}$ contains at most $\lambda/2$ vertices that appear from $q_{\ell \lambda/2}$ to $v$ in $\rho_{u,v}$.

With high probability over $\sigma$, $U_\sigma$ is a $\lambda/2$-representative set, and so, for each $i$, there exists a walk $\Tilde{\rho}^i$ from $q_{(i-1)\lambda/2}$ to $q_{i\lambda/2}$ that visits $(\lambda/2)-1$ distinct vertices (which could even be $\rho^i$), whose vertex set $P^i$ (that includes $u$) is hit by $U_\sigma$. For each $i$, pick an $s_i \in U_\sigma$ that intersects $P^i$. 

Now, consider the sequence $(u, s_1, \dots, s_{\ell}, v)$. For each pair $(s_i,s_{i+1})$, there exists a walk from $s_i$ to $q_{i\lambda/2}$ in $\Tilde{\rho}^i$ that visits at most $(\lambda/2)-1$ distinct vertices, which in turn implies a path of length at most $\lambda/2$ between them. Similarly, we obtain a path from $q_{i\lambda}$ to $s_{i+1}$ in $\Tilde{\rho}^{i+1}$ of length at most $\lambda/2$. Together, there exists a path of length at most $\lambda$ from $s_i$ to $s_{i+1}$. In addition, observe that the path from $u$ to $s_1$ has length at most $\lambda/2$, and the one from $s_{\ell\lambda/2}$ to $v$ has length at most $\lambda$, as the final set $P^{\ell+1}$ might not be large enough to be hit by $U_\sigma$.
\end{proof}

\section{Edit Distance, Longest Common Subsequence and Discrete Fr\'{e}chet Distance}

In this section, we focus on computing edit distance, longest common subsequence and discrete Fr\'{e}chet distance. We prove the next two theorems.

\begin{theorem}\label{thm-edit}
    The length of the longest common subsequence and edit distance of two strings of length $n$ can be computed in time $\poly(n)$, using $\bigO(\log n)$ workspace and $\frac{n}{2^{\Omega(\sqrt{\log n})}}$ catalytic space.
\end{theorem}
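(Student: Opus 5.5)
We will prove \cref{thm-edit} by reducing both problems to a weighted-path computation on an implicit grid-like graph. We then evaluate this weighted sum modulo many small primes using a catalytic grid-reachability routine built from \cref{lem:main_subroutine}, and finally decode the answer bit by bit via \cref{prop-CRR}.

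\textbf{The graph and weights.} For strings $x,y$ of length $n$, consider the DAG on $\{0,\dots,n\}^2$ with edges $(i,j)\to(i+1,j)$, $(i,j)\to(i,j+1)$ and $(i,j)\to(i+1,j+1)$. The additive weights are the usual ones.
\begin{itemize}
\item For $\ed$, horizontal and vertical edges cost $1$; a diagonal edge costs $0$ if $x_{i+1}=y_{j+1}$ and $1$ otherwise.
\item For $\lcs$, only matching diagonals carry weight $1$ and all other edges carry $0$.
\end{itemize}
We convert to multiplicative weights by mapping an existing edge of additive weight $w\in\{0,1\}$ to $16^{nw}$, and a missing edge to $0$. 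Let $Z=\sum_{P}\prod_{e\in P}\mu(e)$ over all $(0,0)$--$(n,n)$ paths. There are at most $3^{2n}<16^n$ paths, so the coefficient of $16^{nw}$ (the number of paths of additive weight $w$) fits inside a block of $4n$ bits without carry. Hence $\ed$ is the index of the lowest nonzero $4n$-bit block of $Z$, and $\lcs$ is the index of the highest one. Since $Z<2^{\bigO(n^2)}$, \cref{prop-CRR} with $N=\bigO(n^2)$ primes gives, in $\bigO(\log n)$ space and polynomial time, oracle access to any bit of $Z$ given $Z\bmod p_i$. We scan all block indices with a logspace counter. Each oracle query becomes a run of the weighted reachability routine over $\F_{p_i}$, with $p_i=\poly(n)$.

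\textbf{Computing $Z \bmod p$ catalytically.} The edge-weight oracle is logspace computable, so it suffices to build a catalytic subroutine that adds $\bigRin\cdot(\text{weighted walk matrix})$ to $\bigRout$ on the grid. We then extract $Z$ by the add-$1$-at-$s$ trick: run forward and record $r_1$, undo, increment the register of $(0,0)$, run and record $r_2$, undo, decrement, and output $r_2-r_1$. Alternatively, we keep $\bigRin,\bigRout$ on the work tape as in the proof of \cref{thm:main_stconn}.

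The grid has $\Theta(n^2)$ vertices, so we cannot use $n$-length vectors indexed by all vertices. Instead we layer by antidiagonals $i+j=k$, so that every edge advances $k$ by $1$ or $2$. We add self-loops of weight $1$ only conceptually, by padding diagonal steps with an intermediate vertex, so that every source-to-sink path has length exactly $2n$ and all walks are genuine paths.

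The outer level follows \cite{KiyomiHO21} together with \cref{lem:main_subroutine}. We split each antidiagonal into $\lambda=2^{\Theta(\sqrt{\log n})}$ color classes of $n/\lambda$ consecutive cells, and recurse on the middle antidiagonal with four calls per middle class, down to depth $\log\lambda$. At that point the two layers are $n/\lambda$ apart. By locality of grid edges, a class can only reach the at most three neighbouring classes at that distance, so each base instance lives in a strip with $\bigO(n/\lambda)$ vertices per antidiagonal and $n/\lambda$ layers.

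The inner level applies the same four-call Ben-Or--Cleve recursion with a single class of size $\bigO(n/\lambda)$ over the $\log(n/\lambda)$ layer-halvings. Its base case pushes values across one layer edge by edge, multiplying by $\mu(e)$ via \cref{prop:fieldwork}. We allocate fresh catalytic vectors of length $\bigO(n/\lambda)$ per recursion level. This gives $\bigO(\log n)$ levels times $\bigO(n/\lambda)\cdot\bigO(\log p)$ bits, which is $n/2^{\Omega(\sqrt{\log n})}$ catalytic space, and free space of $\bigO(1)$ bits per level plus class indices, i.e.\ $\bigO(\log n)$. The running time is $(4\lambda)^{\log\lambda}\cdot 4^{\bigO(\log n)}\cdot\poly(n)=\poly(n)$. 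Correctness of each level is exactly the cancellation argument in the proof of \cref{lem:main_subroutine}, with the middle sets now being antidiagonal segments and not arbitrary midpoints.

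\textbf{Main obstacle.} The delicate part is the inner level. The sub-grid between two layers must be handled so that input and output vectors indexed by the current strip interface correctly with the outer level's registers. This needs the same subtract/apply/add/undo interfacing as in \cref{lem:long}, and every vertex on a path must lie inside the strip that the locality argument promises. I would first verify that restricting to the three adjacent classes loses no paths, using the bound $|\Delta(i-j)|\le$ distance. I would then check that the vertex-count-per-layer bound holds at every inner recursion level. Everything else is the already-proven algebra composed with \cref{prop-CRR}, whose polynomially many queries each cost polynomial time, so the total time remains $\poly(n)$.
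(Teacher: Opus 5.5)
Your proposal is correct and follows essentially the same route as the paper: the rotated, diagonal-split embedding of the edit-distance/LCS grid into $\layergrid_{2n,2n}$ with weights $16^{n\cdot w}$, the $\outeralg$/$\inneralg$ recursion (color classes down to layer distance $n/\lambda$, then a single strip of $\bigO(n/\lambda)$ vertices per layer), the add-one-and-subtract trick modulo each small prime, and bit extraction via \cref{prop-CRR}. The only real difference is the outer-to-inner interface. The paper swaps the class-sized registers into the strip-sized registers and uses a masking weight oracle (\cref{lem:oracle_transform_subgraph}) to zero the edges leaving or entering the wrong classes, whereas your \cref{lem:long}-style cancellation interfacing achieves the same effect at the cost of a constant factor more inner calls.
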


\begin{theorem}\label{thm-frechet}
    Discrete Fr\'{e}chet distance of two point sequences of length $n$ with distance oracle whose values are in the range $\{0,\ldots,\poly(n)\}$ can be computed in time $\bigO(n^{4+\varepsilon})$, using $\bigO(\log n)$ workspace and $\frac{n}{2^{\Omega(\sqrt{\log n})}}$ catalytic space.
\end{theorem}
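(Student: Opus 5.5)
The statement to prove is \cref{thm-frechet}. The plan is to reduce discrete Fr\'{e}chet distance to weighted reachability on a grid-like graph. I would first establish a catalytic procedure $\layergrid$ that computes weighted path sums on grids. It takes a prime $p=\poly(n)$, an edge-weight oracle $\weightoracle$ with values in $\F_p$, and two vertices $u,v$ of the grid graph on $\{0,\ldots,n\}^2$, whose edges go from $(i,j)$ to $(i+1,j)$, $(i,j+1)$ and $(i+1,j+1)$. It outputs $\sum_{W}\mu(W) \bmod p$ over all $u$--$v$ paths $W$. It should use $\bigO(\log n)$ free space and $\frac{n}{2^{\Omega(\sqrt{\log n})}}$ catalytic space.

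\textbf{Constructing $\layergrid$.} I would layer the grid by anti-diagonals $i+j$, so every path between two layers has bounded length. I would then reuse the recursion of \cref{lem:main_subroutine}, but recurse on a \emph{middle layer} rather than on walk length.

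The outer level splits each anti-diagonal into $\lambda=2^{\Theta(\sqrt{\log n})}$ contiguous color classes of size $n/\lambda$. It recurses to depth $\log\lambda$, making four calls per middle color class, as in the proof of \cref{lem:main_subroutine}. The time cost is $(4\lambda)^{\log\lambda}=n^{\bigO(\eps)}$ for suitable constants. The base case is a pair of layers at distance $n/\lambda$.

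In the base case I would use locality: a class can reach only its own class or the neighboring classes on the target layer. So an inner algorithm works on a band of at most $3n/\lambda$ vertices per layer. It recurses on middle layers with a single color class, using the same four-call cancellation scheme, down to adjacent layers. There it iterates over the $\bigO(n/\lambda)$ edges and adds $\mu(e)\cdot r_{in}$ into $r_{out}$.

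Each recursion level uses fresh catalytic vectors of length $\bigO(n/\lambda)$. The inner depth is $\bigO(\log n)$, giving $\bigO(\log n)\cdot\frac{n}{\lambda}\cdot\bigO(\log p)$ catalytic space, which is still $n/2^{\Omega(\sqrt{\log n})}$. Free space is a constant number of bits per inner level plus $\bigO(\log\lambda)$ bits per outer level, hence $\bigO(\log n)$.

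To extract the single value from $u$ to $v$, I would use the difference trick. Run the procedure forward and read $r_1$ at $v$, then undo it. Add $1$ at $u$, run again and read $r_2$, then undo and subtract $1$. The answer is $r_2-r_1$.

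\textbf{Timing.} Let $c$ be the constant branching per inner level. The inner algorithm costs about $c^{\log(n/\lambda)}\cdot(n/\lambda)$, which must be kept near $n^{2}$. This needs care: with four calls per level, $4^{\log n}=n^2$, times $n/\lambda$ edges in the base case. Combining this with the outer factor $n^{\eps}$, the extraction, and the per-prime overhead should give $\bigO(n^{4+\eps})$ overall. I expect this accounting to be the main obstacle. The bound probably needs the base case to cost $\bigO(1)$ per edge per leaf, and the number of leaves to be bounded by about $n^2\cdot n^{\eps}$.

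\textbf{Reduction to Fr\'{e}chet.} For a threshold $m$, I would set $\mu((i,j),(k,\ell))=1$ if $d(p_k,q_\ell)\le m$ and $0$ otherwise, and also require $d(p_1,q_1)\le m$. The number of surviving $(1,1)$--$(n,n)$ paths is at most $3^{2n}$. It is nonzero iff $\discreteFrechet(P,Q)\le m$. By the Chinese remainder theorem, it is nonzero iff it is nonzero modulo one of the first $4n$ primes, since their product exceeds $3^{2n}$. Each of these primes is $\bigO(n\log n)$ and can be found in logarithmic space.

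So for each $m$ I would run $\layergrid$ for each of the $4n$ primes. I would binary search over $m\in\{0,\ldots,\poly(n)\}$, which takes $\bigO(\log n)$ rounds. Correctness follows from monotonicity of feasibility in $m$. The cost is $\bigO(n\log n)$ invocations, so each invocation must take roughly $n^{3+\eps}$ time. Catalytic memory is restored after each invocation, so reuse across invocations is safe.
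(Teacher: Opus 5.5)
Your overall architecture matches the paper's: an outer color-class recursion of depth $\log\lambda$, an inner single-class recursion on a band, the difference trick, $0/1$ weights, the first $4n$ primes, and binary search over $m$. The gap is that the graph you recurse on is not layered, so the middle-layer recursion is unsound.

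If the layers are the anti-diagonals $i+j$ of the grid with edges to $(i+1,j)$, $(i,j+1)$, $(i+1,j+1)$, then every diagonal edge jumps from layer $i+j$ to layer $i+j+2$. A path can therefore cross the chosen middle layer without visiting any of its vertices. The identity $\Gamma(\vin,\vout)=\bigcup_{\vmid}\Gamma(\vin,\vmid)\otimes\Gamma(\vmid,\vout)$ over middle-layer vertices $\vmid$ is then false, and it is exactly what makes the four-call cancellation of \cref{lem:main_subroutine} produce the right sum. Worse, your base case only processes edges between adjacent layers, so diagonal edges are never counted at all.

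What you would compute is the number of feasible alignments using only horizontal and vertical steps. That does not decide whether $\discreteFrechet(P,Q)\le m$. For $P=Q=(a,b)$ with $d(a,b)>m$, the alignment $(1,1),(2,2)$ is feasible, but every alignment avoiding the diagonal step passes through $(1,2)$ or $(2,1)$.

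The paper's fix is to rotate the grid by $45^\circ$ and subdivide each diagonal edge with an auxiliary vertex whose outgoing edge has weight $1$. This embeds everything into $\layergrid_{2n,2n}$, where every edge joins consecutive layers and changes the in-layer index by at most one. Then every path between two layers meets each intermediate layer in exactly one vertex, and both the middle-layer recursion and your locality argument go through. An alternative would be to treat the middle-layer vertices together with the diagonal edges straddling that layer as the "midpoints" of the decomposition, but that has to be built into the recursion explicitly.

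A smaller point you leave open is how the base case restricts to flow from class $\cin$ to class $\cout$. Running the inner algorithm on band-sized registers computes flow from the whole input band to the whole output band. The arbitrary catalytic contents of the other input-band vertices would then leak into the class-$\cout$ entries, and other output entries would be left modified. The paper handles this by swapping $\bigRin,\bigRout$ into the matching parts of the band registers. It then uses the masking oracle of \cref{lem:oracle_transform_subgraph}, which zeroes edges leaving non-$\cin$ vertices of the first layer and edges entering non-$\cout$ vertices of the last layer.

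Your timing worry resolves as you guessed. The cost per prime is $(4\lambda)^{\log\lambda}\cdot 4^{\log(n/\lambda)}\cdot\bigO(n/\lambda)\cdot\polylog{n}=\bigO(n^{3+\eps})$. Multiplying by the $4n$ primes and $\bigO(\log n)$ rounds of binary search gives $\bigO(n^{4+\eps})$ after adjusting $\eps$.
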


All of these results follow from the same procedure for computing (weighted) reachability on a directed grid graph. This procedure is presented in the next section.

\subsection{Grid graphs reachability}

For the remainder of this section, we consider only fields of the form $\F_p = \Z/p\Z$ for some prime $p$.

For the purpose of our algorithm, we extend the grid-like graphs used for computing the metrics to be layered, denoted by $\layergrid_{n, n}$. 
Formally, for $n\in\N$ we define $\layergrid_{n, n}$ as a directed graph such that
\begin{align*}
    V(\layergrid_{n, n})=&\,\{(i,j)\colon 0\leq i\leq n, 0\leq j\leq n\},\\
    E(\layergrid_{n, n})=&\,\{((i,j), (i+1,j))\colon 0\leq i\leq n-1, 0\leq j\leq n\}\,\cup\\
    &\,\{((i,j), (i+1,j+1))\colon 0\leq i\leq n-1, 0\leq j\leq n-1\}\,\cup\\
    &\,\{((i,j), (i+1,j-1))\colon 0\leq i\leq n-1, 1\leq j\leq n\}.
\end{align*}

Given a directed acyclic layered graph $G=(V,E)$ with weighted edges $\mu: V\times V\rightarrow\N$, we recall that the weight of an $s,t$-walk $W=(s=v_0,v_1,\ldots,v_k=t)$ is $\mu(W) = \prod_{i=0}^{k-1} \mu(v_i,v_{i+1})$.
In fact, as we consider only layered graphs, and in particular $\layergrid_{n,n}$, all walks from any $u$ to any $v$ are paths.
The restriction to paths will not cause any problems in the following algorithms, as in layered graphs, all paths between two vertices have the same length and recursively splitting the problem with respect to the layers decomposes the problem into disjoint subproblems.

We thus define the total path weight\footnote{One can also view this as $\tauin\calW_\ell$ for a particular $\ell$ in Definition~\ref{def:path_flow} with $\tauin$ set to be the unit vector corresponding to $s$.} from $s$ to $t$ as $\calW_{s,t}=\sum_{P\, s,t\text{-path}}\mu(P)$.
In line with the flow-based algorithm of~\cite{cook_pyne_26_efficient}, our goal is to catalytically compute the total path weight from $(0,0)$ to $(n,n)$ in a weighted $\layergrid$.
We make use of two catalytic subroutines, which we call $\inneralg$ and $\outeralg$.
To save space, we split each layer of the graph into a few color classes (which will be intervals in each of the layers), and then compute the total path weights recursively on the layers within the color classes up to the base case, where we will use only a single color class on which we will recurse slightly differently.
The former recursive computation is the $\outeralg$ procedure, the latter is the $\inneralg$ procedure. We first start with describing the $\inneralg$ procedure.
\begin{lemma}\label[lemma]{lem:layered_inner}
    There exists a catalytic subroutine $\inneralg^{\ell}(\bigRin, \bigRout; \lin, \lout, \lmid; \itop, h)$, where $\bigRin, \bigRout \in \F^{h}$ are any sections of the catalytic space and $\lin, \lout, \lmid$ are three $\log(n)$-bit free space variables, such that for $\lout-\lin \leq 2^\ell$, given a weighted graph $\layergrid_{n,n}$, where the weights are given by an oracle $\weightoracle$ that on input $e$ outputs the weight of the edge $e$ modulo $p$, if the initial contents of $\bigRin$ and $\bigRout$ were $\tauin$ and $\tauout$ respectively, then $\bigRin$ remains unchanged, and $\bigRout$ becomes $\tauout + \tauin \cdot \calW$, where the corresponding parts of the layers $\lin, \lout$ are the vertices with indices $\{\itop, \ldots,\itop+h-1\}$.

    This procedure uses $4^{\ell-1}\cdot \bigO(h)$ weight oracle queries, $4^{\ell}\cdot \bigO(h\cdot \polylog{|\F|}) + 5\cdot \sum_{i=1}^{\ell}4^{i}\leq 4^\ell \cdot (\bigO(h\cdot \polylog{|\F|}) + 20)$ time, $(\ell-1)\cdot h\cdot \log(|\F|)$ catalytic space (not including the input registers), and $(\ell-1)\cdot 4 + \bigO(\log |\F| + \log n)$ free space.
\end{lemma}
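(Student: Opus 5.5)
The plan is an induction on $\ell$ that mirrors the proof of \cref{lem:main_subroutine} with a single color class ($C=1$). The one new ingredient is that we split on a middle \emph{layer} rather than summing over middle color classes. The subroutine acts on the strip of rows $\{\itop,\ldots,\itop+h-1\}$ between layers $\lin$ and $\lout$. Since $\layergrid_{n,n}$ is layered, every $(\lin,\cdot)$--$(\lout,\cdot)$ path crosses the layer $\lmid=\lin+\lfloor(\lout-\lin)/2\rfloor$ exactly once. Hence the total path weight matrix factors exactly as $\calW_{\lin,\lout}=\calW_{\lin,\lmid}\cdot\calW_{\lmid,\lout}$, and there is no sum over middle classes.

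\textbf{Base case.} For $\lout-\lin\le 1$ (the case $\ell=1$; layers at distance 0 are handled trivially by adding $\bigRin$ to $\bigRout$), we iterate over the $\bigO(h)$ edges of the form $((\lin,i),(\lout,j))$ with $i,j$ in the index window. For each such edge we query $\weightoracle$ and perform $\bigRout[j]\pluseq\mu(e)\cdot\bigRin[i]$ using \cref{prop:fieldwork}. The inverse subroutine does the same with subtraction. This step uses $\bigO(h)$ queries, $\bigO(h\polylog{|\F|})$ time, no extra catalytic space, and $\bigO(\log|\F|+\log n)$ free space.

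\textbf{Inductive step.} For $\ell\ge 2$, we allocate a fresh catalytic register $\bigRmid\in\F^h$ for this recursion level. We then run the four-step Ben-Or--Cleve sequence from \cref{lem:main_subroutine}, using $\inneralg^{\ell-1}$ on the pairs $(\lmid,\lout)$ and $(\lin,\lmid)$:
\begin{enumerate}
\item $[mid\to out]^-$,
\item $[in\to mid]^+$,
\item $[mid\to out]^+$,
\item $[in\to mid]^-$.
\end{enumerate}
Both halves have length at most $2^{\ell-1}$. The same telescoping computation as in \cref{lem:main_subroutine} shows that $\bigRout$ ends at $\tauout+\tauin\cdot\calW_{\lin,\lmid}\calW_{\lmid,\lout}=\tauout+\tauin\cdot\calW$, while $\bigRmid$ and $\bigRin$ are restored. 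The layer variables $\lin,\lout,\lmid$ are recomputed on return: after a call on $(\lin,\lmid)$, the caller recovers its own $\lout$ from the stored 2-bit instruction counter and the recurrence, or alternatively re-derives it by a top-down pass. Each recursion level therefore needs only its instruction counter, i.e. $4$ free bits (the paper's bound), plus the shared $\bigO(\log|\F|+\log n)$ arithmetic workspace.

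\textbf{Restricting to the window.} Paths might appear to leave the rows $\{\itop,\ldots,\itop+h-1\}$. We simply define $\calW$ as the weight of paths that stay inside the window, and the base case only uses edges inside it. The factorization through the middle layer still holds for paths confined to the strip, so no issue arises. The caller (the $\outeralg$) is responsible for choosing $h$ large enough that this restriction loses nothing.

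\textbf{Accounting.} The resource bounds follow from the recurrences. For queries, $Q(\ell)=4Q(\ell-1)$ with $Q(1)=\bigO(h)$, giving $4^{\ell-1}\bigO(h)$. For time, $T(\ell)=4T(\ell-1)+5\cdot 4$ (the constant overhead covers instruction bookkeeping and computing $\lmid$), which unrolls to the stated sum. For catalytic space, one register of $h\log|\F|$ bits per level over $\ell-1$ levels gives the stated bound. The main obstacle I expect is justifying that the layer indices can be maintained in $\bigO(\log n)$ free space plus only $4$ bits per level, rather than $\bigO(\log n)$ bits per level. I would first try recomputing them: $\lin$ and $\lout$ at depth $d$ are determined by the top-level values together with the sequence of stored 2-bit counters, which record which half was taken at each level. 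They can therefore be rederived in $\bigO(\ell)$ time per step, and the proof must check that this fits within the $5\cdot4^i$ overhead term.
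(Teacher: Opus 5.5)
Your recursion is the paper's: four Ben-Or--Cleve calls through a fresh mid-layer vector, an adjacent-layer base case over the $\bigO(h)$ window edges, and correctness by the telescoping of \cref{lem:main_subroutine}. You differ only in the layer bookkeeping you flag as the obstacle, which is the one thing the paper's proof adds beyond \cref{lem:main_subroutine}.

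The paper shares just three $\log n$-bit variables across all levels and rotates their roles. The call on $(\lmid,\lout)$ receives $\lin$ as its scratch midpoint variable, and the call on $(\lin,\lmid)$ receives $\lout$. An induction shows every call returns with its own two endpoint variables intact. The caller then repairs the single clobbered variable by $\lin=2\lmid-\lout$ or $\lout=2\lmid-\lin$ in $\bigO(1)$ time, and these five constant-time layer instructions per node are the source of the $5\cdot 4^i$ term.

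Your first option (recovering $\lout$ ``from the stored 2-bit instruction counter and the recurrence'') is exactly this repair step. However, it only works once you state that invariant and the argument rotation, which guarantee that $\lin$ and $\lmid$ survive the subcall.

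Your fallback is to recompute the current pair top-down from fixed top-level values and the counter stack. This is also sound, and it even works with floor midpoints for non-power-of-two distances; the affine repair cannot do that without a parity bit per level. The price is $\bigO(\ell)$ arithmetic per recomputation, so it does not fit the literal $5\cdot 4^i$ term. The extra $\bigO(\ell\cdot 4^{\ell})$ is absorbed into $4^\ell\cdot\bigO(h\polylog{|\F|})$ whenever $\ell=\bigO(h\polylog{|\F|})$, e.g.\ when $h\ge 2^\ell$ as in the $\outeralg$ application.

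There is also an indexing slip. You label the adjacent-layer case $\ell=1$, but your inductive step at $\ell=2$ produces halves of length $2$ that no case covers. You can fix this in either of two ways:
\begin{itemize}
\item Start at $\ell=0$ as the paper does. This costs a factor $4$ in queries and one more mid-vector.
\item Keep $\ell=1$ as the base, but let it handle $\lout-\lin\le 2$ by directly enumerating the at most $9h$ two-edge paths inside the window. This yields exactly the stated $4^{\ell-1}\cdot\bigO(h)$ queries and $(\ell-1)\cdot h\log|\F|$ catalytic space.
\end{itemize}
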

\begin{proof}
    For simplicity of exposition, we assume that $n, h, \lout-\lin$ are all powers of two. (This can be removed at the cost of a few bits of free space per level of recursion.)
    
    We define the procedure recursively.
    For $\ell=0$, we want to compute the flow path for two neighboring layers.
    We can do this computation as follows: for each vertex $(\lin,i)$ in the input layer and each of its at most three neighbors $(\lout,j)$ in the output layer, we add $\bigRout[j] \pluseq \bigRin[i]\cdot \mu((\lin,i), (\lout,j))$ using a single call to the oracle $\weightoracle$, free space $\bigO(\log |\F|)$ (which can be reused for each pair) and time $\bigO(\polylog{|\F|})$ per pair (the time comes from accessing the entry of the vector according to our methodology in the appendix). The catalytic space used is already provided as input, so we do not use any more.
    Thus, as there are $\bigO(h)$ edges we must iterate over, we use time $\bigO(h\cdot \polylog{|\F|})$, $\bigO(h)$ calls to the oracle and free space $\bigO(\log |\F| + \log n)$, where the second term is the counter required for us to iterate over all the edges.
    All of catalytic space used is provided as part of the input.

    We use $\bigvin, \bigvmid, \bigvout$ to refer to the sets of vertices corresponding to the respective layered color classes.
    We also note that as $\inneralg^{\ell-1}$ is a catalytic subroutine, we can also use its inverse\footnote{As in previous statements, we may run the same program backwards while reverting the instructions dealing with catalytic space. We also need to take care of the parameters $\lin, \lout, \lmid$, but we can use identical ideas from the original algorithm to do so.}.
    We are now ready to define $\inneralg^{\ell}(\bigRin, \bigRout; \lin, \lout, \lmid; \itop, h)$ as follows:
    Taking the first so far unused allocated catalytic register vector $\bigRmid \in \F^{h}$, we perform the following:

    \begin{enumerate}
        \item $\lmid = \frac{\lin+\lout}{2}$
        \item $\inneralg^{\ell-1}(\bigRmid, \bigRout; \lmid, \lout, \lin; \itop, h)^{-1}$\codecomment{$\bigRout[\vout]\minuseq \sum\limits_{\vmid\in \bigvmid}\bigRmid[\vmid]\cdot \calW_{\vmid,\vout}\quad\forall \vout$}
        \item $\lin = 2\cdot \lmid-\lout$
        \item $\inneralg^{\ell-1}(\bigRin, \bigRmid; \lin, \lmid, \lout; \itop, h)$\codecomment{$\bigRmid[\vmid]\pluseq \sum\limits_{\vin\in \bigvin}\bigRin[\vin]\cdot \calW_{\vin,\vmid}\quad\forall \vmid$}
        \item $\lout = 2\cdot \lmid-\lin$
        \item $\inneralg^{\ell-1}(\bigRmid, \bigRout; \lmid, \lout, \lin; \itop, h)$\codecomment{$\bigRout[\vout]\pluseq \sum\limits_{\vmid\in \bigvmid}\bigRmid[\vmid]\cdot \calW_{\vmid,\vout}\quad\forall \vout$}
        \item $\lin = 2\cdot \lmid-\lout$
        \item $\inneralg^{\ell-1}(\bigRin, \bigRmid; \lin, \lmid, \lout; \itop, h)^{-1}$\codecomment{$\bigRmid[\vmid]\minuseq \sum\limits_{\vin\in \bigvin}\bigRin[\vin]\cdot \calW_{\vin,\vmid}\quad\forall \vmid$}
        \item $\lout = 2\cdot \lmid-\lin$
    \end{enumerate}

    By induction, we require $4\cdot (4^{\ell-1}\cdot \bigO(h\cdot \polylog{|\F|}) + 5\cdot \sum_{i=1}^{\ell-1}4^{i}) + 5 = 4^{\ell}\cdot \bigO(h\cdot \polylog{|\F|}) + 5\cdot \sum_{i=1}^{\ell}4^{i}$ time.
    
    We only require 4 bits of new free space within the recursion for the instruction counter. Otherwise, we only use the previously stored $\ell$s, therefore the free space required remains $\bigO(\log n)$ for the input parameters and by induction, $4 + (\ell-2)\cdot 4 + \bigO(\log |\F| + \log n)=(\ell-1)\cdot 4 + \bigO(\log |\F| + \log n)$.
    We use a new part of the catalytic space, so using induction, we use $h\cdot \log(|\F|) + (\ell-2)\cdot h\cdot \log(|\F|)$, and thus the total catalytic space used is $(\ell-1)\cdot h\cdot \log(|\F|)$ plus the input registers.
    Finally, the number of weight oracle queries is again by induction $4\cdot 4^{\ell-2}\cdot \bigO(h)=4^{\ell-1}\cdot \bigO(h)$.

    The correctness of the computation adding the weighted path-flow follows almost identically to the proof of~\Cref{lem:main_subroutine}.
    We therefore only focus on ensuring that the layer registers $\lin, \lout, \lmid$ have correct values so that we recurse correctly while using only $\bigO(\log n)$ space.
    Our invariant will be that after returning from a recursive call $\inneralg^{\ell}(\bigRin, \bigRout; \lin, \lout, \lmid; \itop, h)$, the two variables $\lin$ and $\lout$ will have their values identical to the values at the start of the procedure.
    In the case when $\ell=0$, we never change these values, so the invariant holds.
    
    Next, we consider $\ell\geq 1$.
    The first operation only computes the index of the middle layer.
    After every even operation, we use $\lmid$ and one of $\lin, \lout$, as the in/out layers for the recursive call, while the other $\ell_{b}$ is used as the mid-layer.
    Therefore, the induction hypothesis guarantees that $\lmid$ and one of the other two will be returned with the values set as they were when the subroutine was called.
    In the following instructions, we then reconstruct the value that may have been changed so that all three variables have the same values as at the start of the procedure.
\end{proof}

We also note that we do not optimize the number of catalytic vectors.
Indeed, one can only use three vectors and swap them around similarly to $\lin, \lout, \lmid$.

\begin{definition}[Layered color classes]
    Let $C$ be a parameter describing the number of color classes.
    Consider $V(\layergrid_{n,n})=\{(i,j)\colon 0\leq i\leq n, 0\leq j\leq n\}$.
    We define the \emph{layer} of each vertex as its first coordinate.
    We further define $\chi: V \rightarrow [C]$ and $\delta: V \rightarrow [n/C]$ as $\chi((i,j)) = \lceil (j+1)/(n/C)\rceil$ and $\delta((i,j)) = j \bmod (n/C)$ as the color class and index within the color class respectively.

    We note that if two vertices share the second coordinate, their color classes are always identical.
\end{definition}
\begin{figure}
\centering
\begin{tikzpicture}[
    >=Stealth,
    vertex/.style={
        circle,
        fill=black,
        inner sep=0pt,
        minimum size=4pt
    },
    label/.style={
        font=\ttfamily\large,
        draw=none
    },
    horizontal/.style={
        -{Stealth[length=2mm]},
        very thick,
        shorten >=3pt,
        shorten <=3pt
    },
    vertical/.style={
        -{Stealth[length=2mm]},
        very thick,
        shorten >=3pt,
        shorten <=3pt
    },
    diagonal/.style={
        -{Stealth[length=2mm]},
        ultra thick,
        shorten >=3pt,
        shorten <=3pt
    }
]

\foreach \i in {0,...,5} {
    \foreach \j in {0,1} {
        \node[vertex,color=red, minimum size=4.5pt] (n-\i-\j) at (1.25*\i, -1-1*\j) {};
    }
    \foreach \j in {2,3} {
        \node[vertex, rectangle, color=green!80!black] (n-\i-\j) at (1.25*\i, -1-1*\j) {};
    }
    \foreach \j in {4,5} {
        \node[vertex,regular polygon, regular polygon sides=3, minimum size=6pt, color=blue] (n-\i-\j) at (1.25*\i, -1-1*\j) {};
    }
}

\foreach \i in {0,...,4} {
    \foreach \j in {0,...,5} {
        \draw[horizontal] (n-\i-\j) -- (n-\the\numexpr\i+1\relax-\the\numexpr\j\relax);
    }
    \foreach \j in {0,...,4} {
        \draw[diagonal] (n-\i-\j) -- (n-\the\numexpr\i+1\relax-\the\numexpr\j+1\relax);
    }
    \foreach \j in {1,...,5} {
        \draw[diagonal] (n-\i-\j) -- (n-\the\numexpr\i+1\relax-\the\numexpr\j-1\relax);
    }
}

\end{tikzpicture}
\caption{An example of $\layergrid_{5,5}$ with three color classes.}
\label{fig:colourclasses}
\end{figure}

In the base case of the outer algorithm, we consider only a small subgraph of the grid graph.
However, as paths between two vertices of the subgraph do not have to be contained in the subgraph itself (see Figure~\ref{fig:path_outside}), we must consider a slightly larger subgraph, in particular the single color class becomes larger by a multiplicative factor.
In order to ensure that we really only add the paths between the original smaller color classes, we set the weight of all edges from the first and last layer of the graph that are not incident to vertices of the smaller color class.
This new oracle is then used for the inner algorithm and also for the final algorithm to ensure we count only relevant paths.

\begin{figure}
    \centering
    \begin{tikzpicture}[
    >=Stealth,
    vertex/.style={
        circle,
        fill=black,
        inner sep=0pt,
        minimum size=4pt
    },
    label/.style={
        font=\ttfamily\large,
        draw=none
    },
    horizontal/.style={
        -{Stealth[length=2mm]},
        shorten >=3pt,
        shorten <=3pt
    },
    vertical/.style={
        -{Stealth[length=2mm]},
        shorten >=3pt,
        shorten <=3pt
    },
    diagonal/.style={
        -{Stealth[length=2mm]},
        thick,
        shorten >=3pt,
        shorten <=3pt
    },
    highlighted/.style={
        -{Stealth[length=8pt, width=8pt]},
        line width=3.5pt,
        color=red!80!black,
        shorten >=3pt,
        shorten <=3pt
    },
    highlightedvertex/.style={
        circle,
        fill=red!80!black,
        inner sep=0pt,
        minimum size=6pt
    },
]

\foreach \i in {0,...,5} {
    \foreach \j in {0,...,4} {
        \node[vertex] (n-\i-\j) at (1*\i, -1-1*\j) {};
    }
}

\foreach \i in {0,...,4} {
    \foreach \j in {0,...,4} {
        \draw[horizontal] (n-\i-\j) -- (n-\the\numexpr\i+1\relax-\the\numexpr\j\relax);
    }
    \foreach \j in {0,...,3} {
        \draw[diagonal] (n-\i-\j) -- (n-\the\numexpr\i+1\relax-\the\numexpr\j+1\relax);
    }
    \foreach \j in {1,...,4} {
        \draw[diagonal] (n-\i-\j) -- (n-\the\numexpr\i+1\relax-\the\numexpr\j-1\relax);
    }
}

\draw[dashed, rounded corners=5pt]
    (0.6,-2.6) rectangle (4.4,-4.4);
    
    \draw[highlighted] (n-1-2) -- (n-2-1);
    \draw[highlighted] (n-2-1) -- (n-3-2);
    \draw[highlighted] (n-3-2) -- (n-4-2);
    \node[highlightedvertex] at (1, -3) {};
    \node[highlightedvertex] at (4, -3) {};

\end{tikzpicture}
    \caption{The larger red vertices are a part of the induced rectangle, but the bold red path connecting them leaves the induced rectangle.}
    \label{fig:path_outside}
\end{figure}

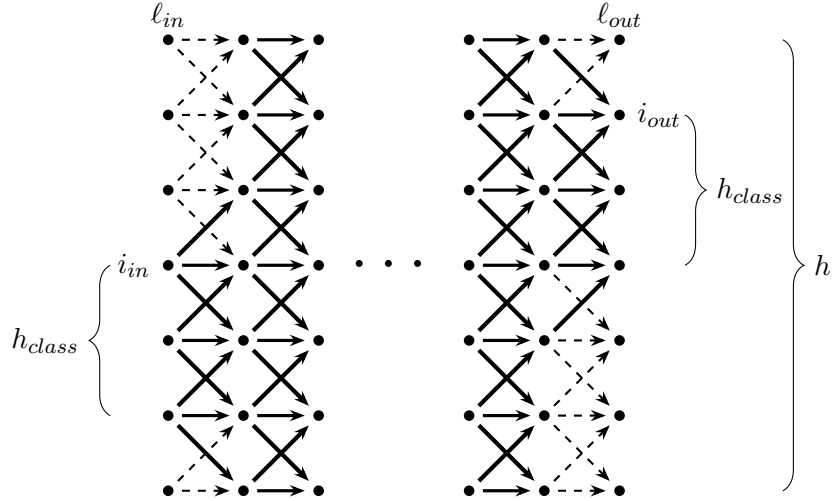
\begin{figure}
    \centering
    \begin{tikzpicture}[
    >=Stealth,
    vertex/.style={
        circle,
        fill=black,
        inner sep=0pt,
        minimum size=4pt
    },
    label/.style={
        font=\ttfamily\large,
        draw=none
    },
    horizontal/.style={
        -{Stealth[length=2mm]},
        very thick,
        shorten >=3pt,
        shorten <=3pt
    },
    vertical/.style={
        -{Stealth[length=2mm]},
        very thick,
        shorten >=3pt,
        shorten <=3pt
    },
    diagonal/.style={
        -{Stealth[length=2mm]},
        ultra thick,
        shorten >=3pt,
        shorten <=3pt
    }
]

\foreach \i in {0,1,2,4,5,6} {
    \foreach \j in {0,...,6} {
        \node[vertex] (n-\i-\j) at (1*\i, -1-1*\j) {};
    }
}

\foreach \i in {1,4} {
    \foreach \j in {0,...,6} {
        \draw[horizontal] (n-\i-\j) -- (n-\the\numexpr\i+1\relax-\the\numexpr\j\relax);
    }
    \foreach \j in {0,...,5} {
        \draw[diagonal] (n-\i-\j) -- (n-\the\numexpr\i+1\relax-\the\numexpr\j+1\relax);
    }
    \foreach \j in {1,...,6} {
        \draw[diagonal] (n-\i-\j) -- (n-\the\numexpr\i+1\relax-\the\numexpr\j-1\relax);
    }
}

\foreach \i in {0} {
    \foreach \j in {0,...,2,6} {
        \draw[horizontal, thick, dashed] (n-\i-\j) -- (n-\the\numexpr\i+1\relax-\the\numexpr\j\relax);
    }
    \foreach \j in {0,...,2} {
        \draw[diagonal, thick, dashed] (n-\i-\j) -- (n-\the\numexpr\i+1\relax-\the\numexpr\j+1\relax);
    }
    \foreach \j in {1,...,2,6} {
        \draw[diagonal, thick, dashed] (n-\i-\j) -- (n-\the\numexpr\i+1\relax-\the\numexpr\j-1\relax);
    }
    \foreach \j in {3,4,5} {
        \draw[horizontal] (n-\i-\j) -- (n-\the\numexpr\i+1\relax-\the\numexpr\j\relax);
    }
    \foreach \j in {3,4,5} {
        \draw[diagonal] (n-\i-\j) -- (n-\the\numexpr\i+1\relax-\the\numexpr\j+1\relax);
    }
    \foreach \j in {3,4,5} {
        \draw[diagonal] (n-\i-\j) -- (n-\the\numexpr\i+1\relax-\the\numexpr\j-1\relax);
    }
}

\foreach \i in {5} {
    \foreach \j in {0,4,5,6} {
        \draw[horizontal, thick, dashed] (n-\i-\j) -- (n-\the\numexpr\i+1\relax-\the\numexpr\j\relax);
    }
    \foreach \j in {3,4,5} {
        \draw[diagonal, thick, dashed] (n-\i-\j) -- (n-\the\numexpr\i+1\relax-\the\numexpr\j+1\relax);
    }
    \foreach \j in {1,5,6} {
        \draw[diagonal, thick, dashed] (n-\i-\j) -- (n-\the\numexpr\i+1\relax-\the\numexpr\j-1\relax);
    }
    \foreach \j in {1,2,3} {
        \draw[horizontal] (n-\i-\j) -- (n-\the\numexpr\i+1\relax-\the\numexpr\j\relax);
    }
    \foreach \j in {0,1,2} {
        \draw[diagonal] (n-\i-\j) -- (n-\the\numexpr\i+1\relax-\the\numexpr\j+1\relax);
    }
    \foreach \j in {2,3,4} {
        \draw[diagonal] (n-\i-\j) -- (n-\the\numexpr\i+1\relax-\the\numexpr\j-1\relax);
    }
}

\node at (3,-4) {\Huge $\cdots$};

\draw[
  decorate,
  decoration={brace, amplitude=8pt, raise=2pt}
]
  (-0.7,-6) -- (-0.7,-4)
  node[midway, left=10pt] {$\hclass$};

\draw[
  decorate,
  decoration={brace, amplitude=8pt, raise=2pt}
]
  (6.8,-2) -- (6.8,-4)
  node[midway, right=10pt] {$\hclass$};

\draw[
  decorate,
  decoration={brace, amplitude=8pt, raise=2pt}
]
  (8.1,-1) -- (8.1,-7)
  node[midway, right=10pt] {$h$};

\node[above=1pt] at (n-0-0) {$\lin$};
\node[above=1pt] at (n-6-0) {$\lout$};
\node[left=3pt] at (n-0-3) {$\iin$};
\node[right=3pt] at (n-6-1) {$\iout$};

\end{tikzpicture}
    \caption{The behavior of the masking oracle in~\Cref{lem:oracle_transform_subgraph}. The dashed edges have their edge weights set to $0$ by the mask.}
    \label{fig:masking_oracle}
\end{figure}

\begin{lemma}[Weight Oracle for $\outeralg$]\label[lemma]{lem:oracle_transform_subgraph}
    Let $\weightoracle$ be a weight oracle for the graph $\layergrid_{n,n}$ with edge weights $\mu: E(\layergrid_{n,n})\rightarrow \N$.
    Consider an induced subgraph $H=H(\lin, \lout, \itop, h, \iin, \iout, \hclass)$ of $\layergrid_{n,n}$ such that $V(H)=\{(i,j): \lin\leq i \leq \lout, \itop\leq j< \itop+h\}$ with edge weights $\mu':V(H)\times V(H)\rightarrow \N$ such that
    \[\mu'((i,j), (i+1,k))=\begin{cases}
        0 & \text{if } (i =\lin \land (j < \iin \lor \iin + \hclass \leq j))\,\lor\\
          & \phantom{\text{if }} (i+1 =\lout \land (k < \iout \lor \iout + \hclass \leq k)),\\
        \mu((i,j), (i+1,k)) & \text{otherwise.}
    \end{cases}\]
    Then there exists a weight oracle $\weightoracle'$ for the graph $H(\lin, \lout, \itop, h, \iin, \iout, \hclass)$ and edge weights $\mu'$ such that $\weightoracle'$ requires only additive $\bigO(\log(n))$ space and $\bigO(\log(n))$ time on top of the requirements for $\weightoracle$.
\end{lemma}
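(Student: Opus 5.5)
The plan is to construct $\weightoracle'$ explicitly as a wrapper around $\weightoracle$. The wrapper stores the parameters $\lin, \lout, \itop, h, \iin, \iout, \hclass$, each an $\bigO(\log n)$-bit number; these either sit on the free tape or are already available there as inputs of the calling procedure. On a query for an edge $e=((i,j),(i+1,k))$ of $H$, the wrapper does the following:
\begin{enumerate}
\item Check that both endpoints lie in $V(H)$ and that $e$ is an edge of $\layergrid_{n,n}$, i.e.\ $|k-j|\le 1$. This amounts to a constant number of comparisons and subtractions on $\bigO(\log n)$-bit integers.
\item Evaluate the two masking conditions from the case distinction defining $\mu'$. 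The first is $i=\lin$ together with $j\notin[\iin,\iin+\hclass)$. The second is $i+1=\lout$ together with $k\notin[\iout,\iout+\hclass)$. This again costs a constant number of comparisons and additions.
\item If either condition holds, output $0$. Otherwise forward $e$ to $\weightoracle$ and return its answer unchanged.
\end{enumerate}

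Correctness is immediate, since the three steps are a literal transcription of the definition of $\mu'$. On an unmasked edge the wrapper returns exactly $\mu(e)$, or $\mu(e)\bmod p$ if $\weightoracle$ works modulo $p$. On a masked edge it returns $0$. Note also that $\lin+1=\lout$ is possible; in that case both conditions are tested on the same edge, and the disjunction handles this correctly.

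For resources, the extra free space consists of the stored parameters plus $\bigO(1)$ temporary $\bigO(\log n)$-bit values for the comparisons, so $\bigO(\log n)$ in total. The scratch space used by $\weightoracle$ can be reused afterwards. The extra time is a constant number of arithmetic operations on $\bigO(\log n)$-bit numbers. In the paper's Random Access Turing machine model, these cost $\bigO(\log n)$ time when done bitwise, or $\bigO(1)$ word operations.

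The only point needing care is this time bound. Additions and comparisons of $\bigO(\log n)$-bit integers are linear in the bit length, so the overhead stays at $\bigO(\log n)$; no multiplication is needed. If the parameters are given implicitly, for instance $\itop$ derived from a color class index, then recovering them may require a multiplication by a power of two. That is just a shift, so the bound still holds.
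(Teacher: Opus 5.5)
Your proposal is correct and matches the paper's proof: a wrapper that tests the two masking conditions with a constant number of comparisons and additions on $\bigO(\log n)$-bit values, then either returns $0$ or defers to $\weightoracle$. Your extra validity check in step 1 is harmless but not needed.
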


\begin{proof}
    The algorithm exactly follows the formula defining $\mu'$ -- we first check the conditions and then we either return zero immediately, or we defer to the original oracle.
    In either case, we only do constant number of comparisons and additions, and we need a constant number of variables for those, yielding our logarithmic space and time additive overhead.
\end{proof}

\begin{lemma}\label[lemma]{lem:layered_outer}
    There exists a catalytic subroutine $\outeralg^{\ell}(\bigRin, \bigRout; \lin, \lout, \lmid; \cin, \cout)$, where $\bigRin, \bigRout \in \F^{n/C}$ are any sections of the catalytic space and $\lin, \lout, \lmid$ are three $\log(n)$-bit free space variables, such that for $\lout-\lin \leq 2^\ell\cdot n/C$ and given a weighted graph $\layergrid_{n,n}$ with $C$ color classes, where the weights are given by an oracle $\weightoracle$ that on input $e$ outputs the weight of the edge $e$ modulo $p$, if the initial contents of $\bigRin$ and $\bigRout$ were $\tauin$ and $\tauout$ respectively, then $\bigRin$ remains unchanged, and $\bigRout$ becomes $\tauout + \tauin\cdot \calW$ with respect to the color class $\cin$ in layer $\lin$ and color class $\cout$ in layer $\lout$.

    This procedure uses free space $\ell\cdot(\log C + 4) + \bigO(\log |\F| + \log n)$, catalytic space $\ell \cdot \frac{n}{C}\cdot \log |\F| + \bigO(n/C \cdot \log(n/C)\cdot \log(|\F|))$, time $(4C)^{\ell}\cdot \bigO(n^3) + 5\cdot \sum_{i=0}^{\ell-1}(4C)^{i}$ and $(4C)^{\ell}\cdot \bigO((\frac{n}{C})^3)$ oracle queries.
\end{lemma}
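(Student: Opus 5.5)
We prove this by induction on $\ell$, mirroring \Cref{lem:main_subroutine} and \Cref{lem:layered_inner}: the recursive step splits over middle color classes, and the base case hands off to $\inneralg$ on a masked subgrid.

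\textbf{Recursive step ($\ell\ge 1$).} We set $\lmid=(\lin+\lout)/2$ and take a fresh catalytic vector $\bigRmid\in\F^{n/C}$. For each $\cmid\in[C]$ we run the four-call sequence
\[
[\mathrm{mid}\to\mathrm{out}]^-,\quad [\mathrm{in}\to\mathrm{mid}]^+,\quad [\mathrm{mid}\to\mathrm{out}]^+,\quad [\mathrm{in}\to\mathrm{mid}]^-,
\]
using $\outeralg^{\ell-1}$ with the appropriate color classes. Between calls we restore the layer variables via $\lin=2\lmid-\lout$ and $\lout=2\lmid-\lin$, exactly as in \Cref{lem:layered_inner}. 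Correctness is the telescoping computation of \Cref{lem:main_subroutine}. The one extra fact needed is the layered decomposition: every path from layer $\lin$ to layer $\lout$ crosses layer $\lmid$ in exactly one vertex, and that vertex lies in exactly one color class $\cmid$. Hence the path sets factor disjointly over $\cmid$ and over $\vmid\in V_{\cmid}$.

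\textbf{Resources.} Each level adds $\log C$ bits for $\cmid$ and $4$ bits for the instruction counter, giving free space $\ell(\log C+4)$. Each level adds one catalytic vector of $\frac nC\log|\F|$ bits. Time and query counts multiply by $4C$ per level, and the $+5$ per level accounts for the layer bookkeeping.

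\textbf{Base case ($\ell=0$, so $\lout-\lin\le n/C$).} A path moves by at most one row per layer. So if $|\cout-\cin|\ge 2$, no path connects $V_{\cin}$ in layer $\lin$ to $V_{\cout}$ in layer $\lout$, and we do nothing. Otherwise, every relevant path stays within the rows of color classes $\cin-1,\cin,\cin+1$, an interval of height $h\le 3n/C$ (clipped at the grid boundary). We then invoke $\inneralg^{\log(\lout-\lin)}$ on the subgrid $H(\lin,\lout,\itop,h,\iin,\iout,n/C)$. It uses the masking oracle $\weightoracle'$ of \Cref{lem:oracle_transform_subgraph}, which zeroes first-layer edges leaving rows outside $\cin$ and last-layer edges entering rows outside $\cout$. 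This ensures that only paths from $V_{\cin}$ to $V_{\cout}$ contribute.

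\textbf{Main obstacle: interfacing.} $\inneralg$ works on length-$h$ vectors, but our inputs are length $n/C$. We therefore apply the ten-step interfacing routine of \Cref{lem:long}, with two auxiliary catalytic vectors $\bigRone,\bigRtwo\in\F^h$:
\begin{itemize}
\item add $\bigRin$ into the $\cin$-rows of $\bigRone$;
\item read the $\cout$-rows of $\bigRtwo$ into $\bigRout$ by the subtract/run/add pattern with $\inneralg$ and its inverse;
\item undo the addition to $\bigRone$.
\end{itemize}
By linearity this adds exactly $\tauin\cdot\calW$ to $\bigRout$. The masking also guarantees that the garbage initial contents of $\bigRone$ in rows outside $\cin$ contribute nothing: paths starting there use a zero-weight first edge, and anything landing outside the $\cout$ rows is never read. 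Some care is needed that this cancellation holds when $\lout-\lin$ is not a power of two; we would pad with dummy weight-1 self-loop layers, or simply require powers of two as in \Cref{lem:layered_inner}.

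\textbf{Resources for the base case.} \Cref{lem:layered_inner} with $h=O(n/C)$ and inner depth $\log(n/C)$ gives:
\begin{itemize}
\item catalytic space $O(\frac nC\log\frac nC\log|\F|)$;
\item $4^{\log(n/C)}\cdot O(n/C)=O((n/C)^3)$ oracle queries, with time at most $O(n^3)$ (including the $O(\log n)$ per-query overhead of $\weightoracle'$);
\item $O(\log n)$ free space, which is reused across levels.
\end{itemize}
Multiplying by $(4C)^\ell$ gives the stated totals.
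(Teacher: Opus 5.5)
\paragraph{Verdict.} Your proof is correct. The recursive step, the disjoint decomposition over $\cmid$, the base-case reduction to $\inneralg$ on a masked subgrid, and the resource accounting all match the paper. The one genuine difference is how you connect the length-$n/C$ registers to the length-$h$ registers of $\inneralg$.

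\paragraph{The paper's route: a swap.} The paper does not use the cancellation routine of \Cref{lem:long}. It swaps $\bigRin,\bigRout$ into the $\cin$- and $\cout$-slices of two larger catalytic vectors $\bigRinprime,\bigRoutprime$, runs $\inneralg$ once with the masking oracle $\weightoracle'$, and swaps back. In that version the mask does essential work:
\begin{itemize}
\item the zeroed first-layer edges make the uncontrolled catalytic contents in the non-$\cin$ rows of $\bigRinprime$ contribute nothing;
\item the zeroed last-layer edges ensure the non-$\cout$ rows of $\bigRoutprime$ are never touched, so swapping back restores everything.
\end{itemize}

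\paragraph{Your route: linear cancellation.} The subtract/run/add pattern cancels $\tauone\cdot\calW$ by linearity. This removes the garbage in every row of $\bigRone$, including the $\cin$-rows themselves, which masking could never remove. The mask is therefore redundant in your argument, though harmless.

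The price is four runs of $\inneralg$ plus a few vector additions, instead of one run. That constant factor is absorbed into the stated $\bigO(n^3)$ time and $\bigO((n/C)^3)$ query bounds.

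One caution: your three-bullet summary, read literally, omits the first pass that captures $-\tauone\cdot\calW$. Without that pass, $\bigRout$ would receive $(\tauone|_{\cin}+\tauin)\cdot\calW$. You need the full ten-step routine you cite, not the shortened version.

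\paragraph{Subgrid height.} Your claim that classes $\cin-1,\cin,\cin+1$ (height $3n/C$) already contain every relevant path is correct. The paper uses height $4n/C$ only to keep $h$ a power of two.
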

\begin{proof}
    As in~\cref{lem:layered_inner}, we assume that $n, h, \lout-\lin$ are all powers of two. (This can be removed at the cost of a few bits of free space per level of recursion.)

    We prove this using induction.
    For $\outeralg^{0}(\bigRin, \bigRout; \lin, \lout, \lmid; \cin, \cout)$, we want to apply \Cref{lem:layered_inner} using $\inneralg^{\log(n/C)}(\bigRinprime, \bigRoutprime; \lin, \lout, \lmid; \itop, h)$.
    
    First, we note that if $|\cin-\cout|\geq 2$, we do not need to do anything, as having two color classes whose indices differ by at least two means that any path between the two vertices must change height by one per each step, and there are only $n/C$ possible steps, while there are at least $n/C+2$ steps needed, and thus no path between these layers can exist. Therefore, in such case, the subroutine returns immediately with no changes.
    
    If $|\cin-\cout|\leq 1$, then we need to consider the subgrid containing the all paths from the left color class $\cin$ to the right color class $\cout$.
    However, we might need to consider more than the two subgrids containing these two color classes to cover all possible paths, as some path could jump above or below one of the color classes and then return back into the subgrid as shown in Figure~\ref{fig:path_outside}.
    To cover the subgrid above, we set $\itop:=\max(\min(\cin,\cout)-1,0)\cdot n/C$ and $h:=4n/C$ to preserve our assumption on powers of two.
    Using the notation from \Cref{lem:oracle_transform_subgraph}, we now focus on the subgraph $H(\lin, \lout, \itop, h, \cin\cdot n/C, \cout\cdot n/C, n/C)$.
    The description of the subgraph comprises the seven $\log(n)$-bit integers.
    We pass these as parameters together with the oracle $\weightoracle$ to the oracle from~\Cref{lem:oracle_transform_subgraph}, and so we obtain an oracle $\weightoracle'$ with only additive logarithmic overhead.
    We then allocate the two catalytic register vectors of size $4n/C$ and \emph{swap} the currently used $\bigRin, \bigRout$ with their corresponding parts in $\bigRinprime, \bigRoutprime$.
    Then, we run $\inneralg^{\log(n/C)}(\bigRinprime, \bigRoutprime; \lin, \lout, \lmid; \itop, h)$ with the oracle $\weightoracle'$.
    After finishing the computation, we revert the swap of $\bigRin, \bigRout$ with their corresponding parts in $\bigRinprime, \bigRoutprime$.

    The values added to $\bigRout$ are correct by the correctness of~\Cref{lem:layered_inner}.
    The rest of the $\bigRinprime, \bigRoutprime$ remains the same after the computation, as $\bigRinprime$ does not change, and the only region with changes in $\bigRoutprime$ is the region that is swapped with $\bigRout$, and thus after the final swap, $\bigRoutprime$ remains unchanged after all operations.

    Analyzing time, we need $4^{\log(n/C)} \cdot (\bigO(\frac{n}{C}\cdot \polylog{|\F|}) + 20) =\bigO((\frac{n}{C})^3\cdot \polylog{|\F|})$ steps.
    The required free space is $(\log(n/C)-1)\cdot 4 + \bigO(\log |\F| + \log n)=\bigO(\log |\F| + \log n)$, and we use $(\ell-1)\cdot h\cdot \log(|\F|)=\bigO(n/C \cdot \log(n/C)\cdot \log(|\F|))$ catalytic space.
    Finally, we need $4^{\ell-1}\cdot \bigO(h) = \bigO((\frac{n}{C})^3)$ weight oracle queries.

    We are now ready to define $\outeralg^{\ell}(\bigRin, \bigRout; \lin, \lout, \lmid; \cin, \cout)$ for $\ell\geq 1$.
    We use the first currently unused catalytic vector $\bigRmid \in \F^{n/C}$, and then we perform the following for each of the color classes $c_{mid}\in [C]$:
    \begin{enumerate}
        \item $\lmid = \frac{\lin+\lout}{2}$
        \item $\outeralg^{\ell-1}(\bigRmid, \bigRout; \lmid, \lout, \lin; c_{mid}, \cout)^{-1}$ \\ \makebox[1in]{} \codecomment{$\bigRout[\vout]\minuseq \sum\limits_{\vmid\in \bigvmid}\bigRmid[\vmid]\cdot \calW_{\vmid,\vout}\forall \vout$}
        \item $\lin = 2\cdot \lmid-\lout$
        \item $\outeralg^{\ell-1}(\bigRin, \bigRmid; \lin, \lmid, \lout; \cin, c_{mid})$\codecomment{$\bigRmid[\vmid]\pluseq \sum\limits_{\vin\in \bigvin}\bigRin[\vin]\cdot \calW_{\vin,\vmid}\forall \vmid$}
        \item $\lout = 2\cdot \lmid-\lin$
        \item $\outeralg^{\ell-1}(\bigRmid, \bigRout; \lmid, \lout, \lin; c_{mid}, \cout)$ \\ \makebox[1in]{ } \codecomment{$\bigRout[\vout]\pluseq \sum\limits_{\vmid\in \bigvmid}\bigRmid[\vmid]\cdot \calW_{\vmid,\vout}\forall \vout$}
        \item $\lin = 2\cdot \lmid-\lout$
        \item $\outeralg^{\ell-1}(\bigRin, \bigRmid; \lin, \lmid, \lout; \cin, c_{mid})^{-1}$\codecomment{$\bigRmid[\vmid]\minuseq \sum\limits_{\vin\in \bigvin}\bigRin[\vin]\cdot \calW_{\vin,\vmid}\forall \vmid$}
        \item $\lout = 2\cdot \lmid-\lin$
    \end{enumerate}

    The total time spent is by induction $4C \cdot ((4C)^{\ell-1}\cdot \bigO(n^3) + 5\cdot \sum_{i=1}^{\ell-2}(4C)^{i}) + 5= (4C)^{\ell}\cdot \bigO(n^3) + 5\cdot \sum_{i=0}^{\ell-1}(4C)^{i}$, while we use $4C \cdot ((4C)^{\ell-1}\cdot \bigO((\frac{n}{C})^3))=(4C)^{\ell}\cdot \bigO((\frac{n}{C})^3)$ weight oracle queries.
    
    Regarding free space used, we only need a variable for iterating over all $C$ color classes which requires $\log C$ space, and 4 bits for instruction count.
    Thus, using induction, we require space $\log C + 4 + (\ell-1)\cdot(\log C + 4) + \bigO(\log |\F| + \log n) = \ell\cdot(\log C + 4) + \bigO(\log |\F| + \log n)$.
    For catalytic space, we allocate a new field vector that requires $\frac{n}{C}\cdot \log |\F|$ bits, and by induction, we require $\frac{n}{C}\cdot \log |\F| + (\ell-1) \cdot \frac{n}{C}\cdot \log |\F| + \bigO(n/C \cdot \log(n/C)\cdot \log(|\F|)) = \ell \cdot \frac{n}{C}\cdot \log |\F| + \bigO(n/C \cdot \log(n/C)\cdot \log(|\F|))$.

    The correctness of the computation follows identical ideas from~\Cref{lem:main_subroutine}, while proving that we maintain the correct values of $\lin, \lout, \lmid$ is identical to the proof in~\Cref{lem:layered_inner}.
\end{proof}

We now have all the tools necessary to state and prove the main result of this section: we can compute the sum of all weights of $s,t$-paths modulo a prime $p$ for two particular vertices $s,t$.
Our approach will use the path-flow propagation of~\Cref{lem:layered_outer} with the ideas of~\cite{cook_pyne_26_efficient} so that we are able to extract the desired value using four runs of the algorithm.

\begin{theorem}\label{thm-gridalg}
    There exists a catalytic algorithm that, given a prime $p=\poly(n)$, a graph $\layergrid_{n,n}$ with weights $\mu: E(\layergrid_{n,n})\rightarrow\N$ given as an oracle $\weightoracle$ that returns the weights modulo $p$, and two vertices $u, v\in V(\layergrid_{n,n})$ described by their indices in the layer such that the distance between $u$ and $v$ is a power of two, computes the value $\calW_{u,v}=\sum_{P\, u,v\text{-path}}\mu(P)\bmod{p}$.

    The algorithm runs in time $\bigO(n^{3+\varepsilon})$ using free space $\bigO(\log n)$, catalytic space $\frac{n}{2^{\Omega(\sqrt{\log n})}}$ and $\bigO(\frac{n^{3+\varepsilon}}{2^{3\sqrt{\log n}}})$ weight oracle queries.
\end{theorem}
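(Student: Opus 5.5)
We will obtain $\calW_{u,v}$ by running $\outeralg$ from \Cref{lem:layered_outer} with the Cook--Pyne "difference" trick, exactly as outlined in the introduction. Fix $C := 2^{\sqrt{\log n}}$ (rounded to a power of two) and let $\ell$ be such that $2^\ell \cdot n/C$ equals the layer distance between $u$ and $v$. If the distance is smaller than $n/C$, call $\inneralg$ directly with a single class instead. Since the distance is a power of two and at most $n$, we get $\ell \le \log C = \sqrt{\log n}$.

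Let $\cin := \chi(u)$ and $\cout := \chi(v)$, and allocate catalytic vectors $\bigRin, \bigRout \in \F^{n/C}$. The point of the masking oracle of \Cref{lem:oracle_transform_subgraph} is to restrict $\bigRin$ to act only on the index $\delta(u)$, so that only paths starting at $u$ matter.

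The extraction consists of four runs plus bookkeeping. Let $\tauin, \tauout$ be the initial contents.
\begin{enumerate}
  \item Run $\outeralg^{\ell}(\bigRin,\bigRout;\dots;\cin,\cout)$ and record $r_1 := \bigRout[\delta(v)]$ on the work tape.
  \item Run the inverse to restore $\bigRout = \tauout$.
  \item Increment $\bigRin[\delta(u)]$ by $1$, run $\outeralg^{\ell}$ again, and record $r_2 := \bigRout[\delta(v)]$.
  \item Run the inverse, then decrement $\bigRin[\delta(u)]$.
\end{enumerate}
By linearity of $\tauin \mapsto \tauin\cdot\calW$, we get $r_2 - r_1 = e_{\delta(u)}\cdot\calW$ evaluated at $v$, which equals $\sum_{P} \mu(P) \bmod p$ over $u,v$-paths. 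Because the graph is layered and the recursion splits on the middle layer, every path through the grid is counted exactly once. \Cref{lem:layered_outer} already guarantees this: its base case uses the enlarged subgrid of height $4n/C$ with the mask, so paths that leave a color class band are still captured, and the mask prevents double counting across neighbouring base cases. All catalytic memory is restored, since every call is a catalytic subroutine and the $\pm1$ is undone.

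Resources then follow by plugging into \Cref{lem:layered_outer}.
\begin{itemize}
  \item \textbf{Free space:} $\ell(\log C+4)+\bigO(\log n) = \bigO(\log n)$.
  \item \textbf{Catalytic space:} $\ell\cdot\frac{n}{C}\log|\F| + \bigO(\frac nC \log n \log|\F|) = \frac{n}{2^{\Omega(\sqrt{\log n})}}$, since $\log n \cdot \log p = 2^{o(\sqrt{\log n})}$ is absorbed.
  \item \textbf{Time:} $(4C)^\ell$ is at most $4^{\sqrt{\log n}}\cdot 2^{\log n} \le n^{1+\varepsilon}$, so the bound is $(4C)^\ell\cdot\bigO(n^3)$ up to these factors.
  \item \textbf{Oracle queries:} $(4C)^\ell\cdot\bigO((n/C)^3) = \bigO(n^{3+\varepsilon}/2^{3\sqrt{\log n}})$, after multiplying by the constant number of runs.
\end{itemize}

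The main obstacle is the time bound. Using the crude bound $\bigO(n^3)$ per base case from the lemma, multiplied by $(4C)^\ell$, appears to give $n^{4+\varepsilon}$, not $n^{3+\varepsilon}$. I would redo the accounting using the actual base-case cost $\bigO((n/C)^3\polylog |\F|)$ from the proof of \Cref{lem:layered_outer}. That gives $(4C)^\ell(n/C)^3 \le 4^{\sqrt{\log n}}\, C^{\ell-3} n^3 \le n^{3+\varepsilon}$ when $\ell\le\log C$. To keep $\varepsilon$ arbitrary, I would instead take $C = 2^{\sqrt{c\log n}}$ with $c$ small, so that $4^{\ell}\polylog n \le n^{\varepsilon}$.

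A secondary subtlety is the precise handling of $u,v$ via the mask parameters $\iin,\iout,\hclass$. Here I would simply note that zero entries of $\tauin$ outside $u$ are irrelevant thanks to the difference $r_2-r_1$, so no special masking of $u$ itself is needed.
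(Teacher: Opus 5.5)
Your algorithm is the paper's: $\outeralg^{\ell}$ wrapped in the four-run Cook--Pyne difference trick on two catalytic vectors. Your correctness argument is sound. Linearity alone gives $r_2-r_1=\calW_{u,v}$, so the top-level mask $H(\lin,\lout,0,n,i_u,i_v,1)$, which the paper uses to get $r_1=\tauout+\tauin\cdot\calW_{u,v}$ directly, is indeed unnecessary. The gap is in the running-time analysis, where a key inequality is false as written.

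Take $C=2^{\sqrt{\log n}}$ and $\ell$ as large as $\log C$. Your refined bound $(4C)^\ell(n/C)^3\le 4^{\sqrt{\log n}}C^{\ell-3}n^3\le n^{3+\varepsilon}$ fails. At $\ell=\log C$ one has $C^{\ell}=2^{(\log C)^2}=n$, so the expression equals $n^4\cdot 4^{\sqrt{\log n}}/2^{3\sqrt{\log n}}=n^{4-o(1)}$. Replacing the lemma's $\bigO(n^3)$ base-case cost by $\bigO((n/C)^3)$ only gains a factor $C^3=n^{o(1)}$. The same mistake invalidates your oracle-query bullet for this $C$.

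The real bottleneck is the factor $C^{\ell}$ from looping over middle classes at each of the $\ell$ recursion levels. It is not $4^{\ell}\polylog{n}$, which is $n^{o(1)}$ for any constant $c$ and so is not a reason to shrink $c$.

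Your closing suggestion is nevertheless the correct fix, and it is exactly the paper's choice $C=2^{\lceil\sqrt{(\varepsilon/2.1)\log n}\rceil}$, but the justification is different. With $\ell\le\log C$ we get $(4C)^{\ell}\le 4^{\log C}\,2^{(\log C)^2}=n^{\varepsilon/2.1+o(1)}$. So already the crude bound $(4C)^{\ell}\cdot\bigO(n^3)$ of \Cref{lem:layered_outer} gives time $\bigO(n^{3+\varepsilon})$, and $(4C)^{\ell}\cdot\bigO((n/C)^3)$ gives the query bound. The catalytic space stays $\frac{n}{2^{\Omega(\sqrt{\log n})}}$, since the constant under the square root only changes the hidden constant in the $\Omega$.

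Two smaller points. In your short-distance case, $\inneralg$ must be run on a masked band of height $\bigO(n/C)$ around $u$, as in the base case of \Cref{lem:layered_outer}, not on full layers; otherwise the catalytic space becomes linear in $n$. Also, the mask in that base case does not prevent double counting. It zeroes the edges out of, and into, the padding rows of the $4n/C$-length vectors, so their arbitrary catalytic contents neither leak into $\bigRout$ nor get modified.
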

\begin{proof}
    Our goal is to use the $\outeralg$ procedure from~\Cref{lem:layered_outer}.
    
    To do so, we need the following: a suitable choice of the parameter $C$, two catalytic vector registers of length $n/C$, three $\log(n)$-bit free space variables and two more $\log(C)$-bit variables for the description of the first and last color class.

    We set $C:= 2^{\lceil\sqrt{(\varepsilon/2.1)\log n}\rceil}$ and allocate the two register vectors $\bigRin, \bigRout$ according to Remark~\ref{rmk:catalytic_mod_p}.
    Let $\ell_u, i_u$ and $\ell_v, i_v$ be the indices of $u$ and $v$ respectively.
    We can easily compute $\ell_v-\ell_u$, from which we can compute a suitable $\ell$ to use the right level of recursion for $\outeralg^\ell$.
    Similarly, we allocate the three $\log(n)$-bit integers for layer counting $\lin, \lout, \lmid$, where $\lin:=\ell_u$ and $\lout:=\ell_v$.
    The value of $\lmid$ may be arbitrary as it is immediately replaced in the procedure.
    We also allocate the two values $\cin, \cout$ for the color classes of $u$ and $v$ respectively.

    Finally, we also use~\Cref{lem:oracle_transform_subgraph} on the subgraph $H(\lin, \lout, 0, n, i_u, i_v, 1)$ to obtain an oracle $\weightoracle'$ which sets the weight zero to every edge coming from the layer $\ell_u$ unless it is incident with $u$, and to every edge coming to the layer $\ell_v$ unless it is incident with $v$, and it mirrors the original oracle $\weightoracle$ otherwise.

    As observed in the proof of~\cref{lem:layered_outer}, we can also run the $\outeralg^{\ell}$ procedure in reverse with flipped signs to subtract the values.
    Therefore, we perform the following procedure:
    \begin{enumerate}
        \item Run $\outeralg^{\ell}(\bigRin, \bigRout; \lin, \lout, \lmid; \cin, \cout)$\codecomment{$\bigRout[\vout]\pluseq \sum\limits_{\vin\in \bigvin}\bigRin[\vin]\cdot \calW_{\vin,\vout}\forall \vout$}
        \item Record the value $\bigRout[v]$ to free space as $r_1$.
        \item Run $\outeralg^{\ell}(\bigRin, \bigRout; \lin, \lout, \lmid; \cin, \cout)^{-1}$\codecomment{$\bigRout[\vout]\minuseq \sum\limits_{\vin\in \bigvin}\bigRin[\vin]\cdot \calW_{\vin,\vout}\forall \vout$}
        \item Update $\bigRin[u]\pluseq 1$.
        \item Run $\outeralg^{\ell}(\bigRin, \bigRout; \lin, \lout, \lmid; \cin, \cout)$\codecomment{$\bigRout[\vout]\pluseq \sum\limits_{\vin\in \bigvin}\bigRin[\vin]\cdot \calW_{\vin,\vout}\forall \vout$}
        \item Record the value $\bigRout[v]$ to free space as $r_2$.
        \item Run $\outeralg^{\ell}(\bigRin, \bigRout; \lin, \lout, \lmid; \cin, \cout)^{-1}$\codecomment{$\bigRout[\vout]\minuseq \sum\limits_{\vin\in \bigvin}\bigRin[\vin]\cdot \calW_{\vin,\vout}\forall \vout$}
        \item Update $\bigRin[u]\minuseq 1$.
        \item Return $r_2-r_1\pmod{p}$.
    \end{enumerate}

    To argue correctness, we need to show that the catalytic space has been restored to the original state and that the value returned is indeed correct.
    First, we note that line 3 undoes all changes performed by line 1 and line 7 undoes all changes made by line 5. Then, the only two changes are lines 4 and 8, which add and subtract 1 from $\bigRin[u]$ respectively, and these two lines again negate each other.
    Therefore, the catalytic space has the same content at the beginning and the end of the procedure.

    Next, we consider correctness of the result.
    Let $\tauin$ be the initial value in $\bigRin[u]$ and let $\tauout$ be the initial value in $\bigRout[v]$.
    By~\Cref{lem:layered_outer} and our setup for the weights, $r_1 = \tauout + \tauin\cdot \calW_{u,v}$ and $r_2 = \tauout + (\tauin+1)\cdot \calW_{u,v}$.
    Thus, $r_2-r_1 = \calW_{u,v}$, and performing all the arithmetic modulo $p$ ensures that we indeed return $\calW_{u,v}\bmod p$.

    We now analyze time and space complexity.
    We split the analysis into two steps: first, we consider the resources needed for $\outeralg^\ell$.
    In the worst case, the depth necessary for $\outeralg^\ell$ to work is $\ell = \lceil\sqrt{\log n}\rceil$.
    Then, \Cref{lem:layered_outer} gives us that $\outeralg^\ell$ uses free space $\ell\cdot(\log C + 4) + \bigO(\log |\F| + \log n)=\lceil\sqrt{(\varepsilon/2.1)\log n}\rceil\cdot (\log(2^{\lceil\sqrt{(\varepsilon/2.1)\log n}\rceil})+4) + \bigO(\log p + \log n) = \bigO(\log n)$.
    The catalytic space used is $\ell \cdot \frac{n}{C}\cdot \log |\F| + \bigO(n/C \cdot \log(n/C)\cdot \log(|\F|)) = \lceil\sqrt{(\varepsilon/2.1)\log n}\rceil\cdot \frac{n}{2^{\lceil\sqrt{(\varepsilon/2.1)\log n}\rceil\cdot}}\cdot \log n + \bigO(\frac{n}{2^{\lceil\sqrt{(\varepsilon/2.1)\log n}\rceil\cdot}} \cdot \log(\frac{n}{2^{\lceil\sqrt{(\varepsilon/2.1)\log n}\rceil\cdot}})\cdot \log n) = \frac{n}{2^{\Omega(\sqrt{\log n})}}$.
    The total time spent is $(4C)^{\ell}\cdot \bigO(n^3) + 5\cdot \sum_{i=0}^{\ell-1}(4C)^{i}\leq(4\cdot 2^{\lceil\sqrt{(\varepsilon/2.1)\log n}\rceil})^{\lceil\sqrt{(\varepsilon/2.1)\log n}\rceil}\cdot \bigO(n^3) + 5\cdot (4\cdot 2^{\lceil\sqrt{(\varepsilon/2.1)\log n}\rceil})^{\lceil\sqrt{(\varepsilon/2.1)\log n}\rceil}=\bigO(n^{3+\varepsilon})$ plus we use $(4C)^{\ell}\cdot \bigO((\frac{n}{C})^3)=\bigO(\frac{n^{3+\varepsilon}}{2^{3\sqrt{\log n}}})$ oracle queries.

    Regarding the rest of the program, we need to allocate a constant number of $\bigO(\log(n))$-bit free space variables, and two vectors of $\bigO(\log(p))$-bit field elements of length $n/C=\frac{n}{2^{\sqrt{(\varepsilon/2.1)\log n}}}$ in the catalytic space.
    
    Thus, the total free space usage is $\bigO(\log n)$.
    We also use $\frac{n}{2^{\Omega(\sqrt{\log n})}} + \frac{n}{2^{\sqrt{(\varepsilon/2.1)\log n}}}\cdot \log(p)=\frac{n}{2^{\Omega(\sqrt{\log n})}}$ catalytic space.
    The running time is $4\cdot \bigO(n^{3+\varepsilon}) + \bigO(\polylog{n})=\bigO(n^{3+\varepsilon})$, as each of the lines 2, 4, 6, 8, 9 requires logarithmic time in the worst case and the other lines are covered by the time complexity of $\outeralg^\ell$.
    We only use oracle queries in $\outeralg^\ell$, so the number of weight oracle queries is $\bigO(\frac{n^{3+\varepsilon}}{2^{3\sqrt{\log n}}})$.
\end{proof}

We remark that the assumption on the distance between $u$ and $v$ being a power of two is only done for simplicity of exposition.
It is possible to support arbitrary distances at the cost of $\bigO(1)$ bits of free space per level of recursion, which would amount only to $\bigO(\log n)$ additional free space required.

\subsection{Computing edit distance, LCS and discrete Fr\'{e}chet distance}

Our goal is now to establish Theorem \ref{thm-edit} and then Theorem \ref{thm-frechet}. 
To compute edit distance and longest common subsequence we will need to calculate a certain $N$-bit number $x$ where $N=\bigO(n^2)$.
We do not have enough space to store $x$ or the intermediate results so we cannot compute it directly.
Instead we will compute $x$ modulo various $\bigO(\log n)$-bit primes, i.e. in Chinese remainder representation,  
and then access individual bits of $x$
using the algorithm of \cite{HesseAB02} from Proposition \ref{prop-CRR}.

Next we will explain how to compute edit distance of two strings using Theorem \ref{thm-gridalg}.
We will reduce computation of edit distance to computing a total path weight value in a particular grid graph.
From the binary representation of the total path weight value we can determine the edit distance of the two strings.
The grid graph will be a simple modification of the usual edit distance graph.
We can compute the total path weight value in the grid graph modulo any $\bigO(\log n)$-bit prime using Theorem \ref{thm-gridalg}.
Using Proposition \ref{prop-CRR} we can recover the binary representation of the total path weight value and hence compute the edit distance of the two strings.

\begin{figure}
    \centering
    \begin{tikzpicture}[
    >=Stealth,
    vertex/.style={
        circle,
        fill=black,
        inner sep=0pt,
        minimum size=4pt
    },
    contdot/.style={
        circle,
        fill=black,
        inner sep=0pt,
        minimum size=2pt
    },
    label/.style={
        font=\ttfamily\large,
        draw=none
    },
    horizontal/.style={
        -{Stealth[length=2mm]},
        very thick,
        shorten >=3pt,
        shorten <=3pt
    },
    vertical/.style={
        -{Stealth[length=2mm]},
        very thick,
        shorten >=3pt,
        shorten <=3pt
    },
    diagonal/.style={
        -{Stealth[length=2mm]},
        ultra thick,
        shorten >=3pt,
        shorten <=3pt
    }
]

\foreach \i in {0,...,2} {
    \foreach \j in {0,...,2} {
        \node[vertex] (n-\i-\j) at (1*\i, -4+1*\j) {};
    }
}
\foreach \i in {3,4} {
    \foreach \j in {3,4} {
        \node[vertex] (n-\i-\j) at (1*\i, -4+1*\j) {};
    }
}

\node at (2, -4.5) {\large$\ed_{n,n}$};

\foreach \i in {0,4,6,8} {
    \foreach \j in {2} {
        \node[vertex] (n2-\i-\j) at (6+0.75*\i, -0.5-0.75*\j) {};
    }
}

\foreach \i in {1,3,7} {
    \foreach \j in {1,...,3} {
        \node[vertex] (n2-\i-\j) at (6+0.75*\i, -0.5-0.75*\j) {};
    }
}

\foreach \i in {2} {
    \foreach \j in {0,...,4} {
        \node[vertex] (n2-\i-\j) at (6+0.75*\i, -0.5-0.75*\j) {};
    }
}

\node (n2-tl) at (6, 1) {};
\node (n2-br) at (12, -5) {};

\node at (9, -5.5) {\large$\layergrid_{2n,2n}$};

\node at (5, -2) {\LARGE $\leadsto$};



\begin{scope}[on background layer]
    \foreach \i in {0,...,1} {
        \foreach \j in {0,...,2} {
            \draw[horizontal] (n-\i-\j) -- (n-\the\numexpr\i+1\relax-\j);
        }
    }
    
    \foreach \i in {0,...,2} {
        \foreach \j in {0,...,1} {
            \draw[vertical] (n-\i-\j) -- (n-\i-\the\numexpr\j+1\relax);
        }
    }
    
    \foreach \i in {0,...,1} {
        \foreach \j in {0,...,1} {
            \draw[diagonal] (n-\i-\j) -- (n-\the\numexpr\i+1\relax-\the\numexpr\j+1\relax);
        }
    }
\end{scope}

\draw[horizontal] (n-3-3) -- (n-4-3);
\draw[horizontal] (n-3-4) -- (n-4-4);
\draw[vertical] (n-3-3) -- (n-3-4);
\draw[vertical] (n-4-3) -- (n-4-4);
\draw[diagonal] (n-3-3) -- (n-4-4);

\begin{scope}
    \clip (0,-2) rectangle (3,0);
    \draw[dashed]
        (n-0-0) rectangle (n-4-4);
\end{scope}
\begin{scope}
    \clip (2,-4) rectangle (4,-1);
    \draw[dashed]
        (n-0-0) rectangle (n-4-4);
\end{scope}

\node[contdot] at (0.5, -1.25) {};
\node[contdot] at (0.5, -1.5) {};
\node[contdot] at (0.5, -1.75) {};
\node[contdot] at (1.5, -1.25) {};
\node[contdot] at (1.5, -1.5) {};
\node[contdot] at (1.5, -1.75) {};
\node[contdot] at (2.25, -2.5) {};
\node[contdot] at (2.5, -2.5) {};
\node[contdot] at (2.75, -2.5) {};
\node[contdot] at (2.25, -1.75) {};
\node[contdot] at (2.5, -1.5) {};
\node[contdot] at (2.75, -1.25) {};
\node[contdot] at (2.25, -3.5) {};
\node[contdot] at (2.5, -3.5) {};
\node[contdot] at (2.75, -3.5) {};

\node[vertex, fill=green, minimum size=8pt] at (n-0-0) {};
\node[vertex, fill=blue!80!red, minimum size=8pt] at (n-4-4) {};

\foreach \i in {0,...,3} {
    \foreach \j in {2} {
        \draw[horizontal] (n2-\i-\j) -- (n2-\the\numexpr\i+1\relax-\the\numexpr\j\relax);
    }
}
\foreach \i in {6,7} {
    \foreach \j in {2} {
        \draw[horizontal] (n2-\i-\j) -- (n2-\the\numexpr\i+1\relax-\the\numexpr\j\relax);
    }
}
\foreach \i in {1,2} {
    \foreach \j in {1,3} {
        \draw[horizontal] (n2-\i-\j) -- (n2-\the\numexpr\i+1\relax-\the\numexpr\j\relax);
    }
}
\draw[diagonal] (n2-0-2) -- (n2-1-3);
\draw[diagonal] (n2-0-2) -- (n2-1-1);
\draw[diagonal] (n2-1-1) -- (n2-2-0);
\draw[diagonal] (n2-1-1) -- (n2-2-2);
\draw[diagonal] (n2-1-3) -- (n2-2-2);
\draw[diagonal] (n2-1-3) -- (n2-2-4);
\draw[diagonal] (n2-2-4) -- (n2-3-3);
\draw[diagonal] (n2-2-2) -- (n2-3-3);
\draw[diagonal] (n2-2-0) -- (n2-3-1);
\draw[diagonal] (n2-2-2) -- (n2-3-1);
\draw[diagonal] (n2-3-1) -- (n2-4-2);
\draw[diagonal] (n2-3-3) -- (n2-4-2);
\draw[diagonal] (n2-6-2) -- (n2-7-1);
\draw[diagonal] (n2-6-2) -- (n2-7-3);
\draw[diagonal] (n2-7-1) -- (n2-8-2);
\draw[diagonal] (n2-7-3) -- (n2-8-2);    
\begin{scope}
    \clip (7.5,1) rectangle (11.25,-5);
    \draw[dashed, rotate=-45]
        (n2-0-2) rectangle (n2-8-2);    
\end{scope}
\draw (n2-tl) rectangle (n2-br);

\node[contdot] at (9.5, -2) {};
\node[contdot] at (9.75, -2) {};
\node[contdot] at (10, -2) {};
\node[contdot] at (9, -1.25) {};
\node[contdot] at (9.25, -1) {};
\node[contdot] at (9.5, -0.75) {};
\node[contdot] at (8.25, -0.5) {};
\node[contdot] at (8.5, -0.25) {};
\node[contdot] at (8.75, -0) {};
\node[contdot] at (9, -2.75) {};
\node[contdot] at (9.25, -3) {};
\node[contdot] at (9.5, -3.25) {};
\node[contdot] at (8.25, -3.5) {};
\node[contdot] at (8.5, -3.75) {};
\node[contdot] at (8.75, -4) {};

\node[vertex, fill=green, minimum size=8pt] at (n2-0-2) {};
\node[vertex, fill=blue!80!red, minimum size=8pt] at (n2-8-2) {};

\end{tikzpicture}
    \caption{The transformation from $\ed_{n,n}$ into $\layergrid_{2n,2n}$. The source and sink vertices of $\ed_{n,n}$ and $\layergrid_{2n,2n}$ which correspond to each other are shown in the same color. There are no edges in $\layergrid_{2n,2n}$ outside the dashed square.}
    \label{fig:grid_transform}
\end{figure}
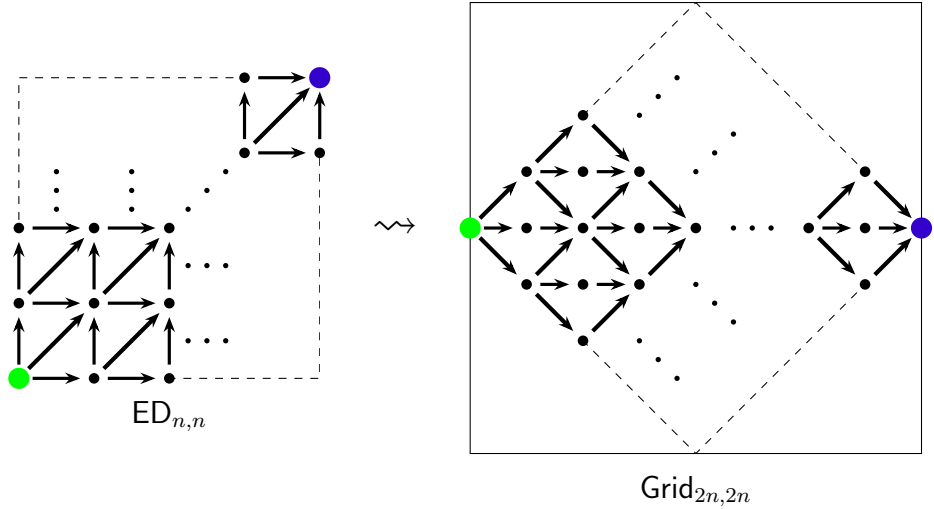

For two strings $x$ and $y$ of length $n$, define the edit distance graph $\ed_{x,y}$ to have vertex set $\{0,\dots,n\} \times \{0,\dots,n\}$,
horizontal edges $(i,j) \rightarrow (i+1,j)$ of cost 1, for $0\le i < n$ and $0\le j \le n$,
vertical edges $(i,j) \rightarrow (i,j+1)$ of cost 1, for $0\le i \le n$ and $0\le j < n$,
and diagonal edges $(i,j) \rightarrow (i+1,j+1)$, for $0\le i,j < n$,
where the cost of the edge is 0 if $x_{i+1}=y_{j+1}$ and it is 1 otherwise.

It is well known that the edit distance of $x$ and $y$ is the minimum cost of a path from $(0,0)$ to $(n,n)$ in $\ed_{x,y}$ (see e.g. \cite{AndoniKO10,ChakrabortyGK16}).
(Here by the cost of a path we understand the sum of the costs of the edges on the path as opposed to the weight of a path which is the product of the weights.)
We will embed $\ed_{x,y}$ into $\layergrid_{2n, 2n}$ so that the minimum cost of a path from $(0,0)$ to $(n,n)$ in $\ed_{x,y}$
can be determined from a particular total path weight in  $\layergrid_{2n, 2n}$.
To embed $\ed_{x,y}$ into $\layergrid_{2n, 2n}$ we will rotate the graph by 45 degrees clockwise (Fig. \ref{fig:grid_transform})
and split each diagonal edge into two by adding an auxiliary vertex.

We will define weights of the edges in $\layergrid_{2n, 2n}$ as follows.
Each vertex $(i,j)\in \{0,\dots,2n\}^2$ of $\layergrid_{2n, 2n}$, where $i = j+1 \pmod{2}$, corresponds to an auxiliary diagonal vertex,
and its only outgoing edge $(i,j) \rightarrow (i+1,j)$ has weight 1; the other two out-going edges have weight 0.
For each vertex $(i,j)\in \{0,\dots,2n\}^2$ where $i = j \pmod{2}$ we let $i'=(i-j +n)/2$ and $j'=(i+j-n)/2$.
If $(i',j')\in \{0,\dots,n\} \times \{0,\dots, n\}$ then the vertex $(i,j)$ in $\layergrid_{2n, 2n}$ corresponds to a vertex $(i',j')$ in $\ed_{x,y}$.
We set weights of edges leaving $(i,j)$ in $\layergrid_{2n, 2n}$ as follows:
\begin{enumerate}
    \item If $0 \le i'<n$ then the edge $(i,j)\rightarrow (i+1,j-1)$ has weight $16^n$.
    \item If $0 \le j'<n$ then the edge $(i,j)\rightarrow (i+1,j+1)$ has weight $16^n$.
    \item Finally, if $0 \le i',j'<n$ then the edge $(i,j)\rightarrow (i+1,j)$ has weight $1$ if $x_{i'+1} = y_{j'+1}$, and weight $16^n$ otherwise.
\end{enumerate}
The weight of all other edges in $\layergrid_{2n, 2n}$ is set to zero.
We call $\layergrid_{2n,2n}$ with this particular weighting scheme $\layergrid_{x, y}$.

Hence, vertex $(0,0)$ of $\ed_{x,y}$ corresponds to the vertex $(0,n)$ in $\layergrid_{x,y}$,
and  vertex $(n,n)$  corresponds to the vertex $(2n,n)$ in $\layergrid_{x,y}$.
There is a one-to-one correspondence between paths from $(0,0)$ to $(n,n)$ in $\ed_{x,y}$ and paths from $(0,n)$ to $(2n,n)$ in  $\layergrid_{x,y}$ that go only through non-zero-weight edges.
A path of cost $w$ from $(0,0)$ to $(n,n)$ in $\ed_{x,y}$ corresponds to a path of weight $16^{nw}$ in  $\layergrid_{x,y}$.
(A cost-one edge in $\ed_{x,y}$ corresponds to an edge with weight $16^n$ in  $\layergrid_{x,y}$ and a cost-zero edge in $\ed_{x,y}$ corresponds to an edge of weight $1$ in  $\layergrid_{x,y}$. 
The {\em auxiliary} diagonal edges of $\ed_{x,y}$ have weight 1 in $\layergrid_{x,y}$ so they don't affect the product of weights.) 

Given $n$ and $\bigO(\log n)$-bit prime $p$ in binary, we can compute $16^n \bmod p$ in $\bigO(\log n)$-space by repeated squaring and multiplications. 
Hence, it is straightforward to implement a weight oracle $\weightoracle$ for $\layergrid_{x,y}$ as required by Theorem \ref{thm-gridalg}.
The oracle will use space $\bigO(\log n)$ for its internal computation and it will need an access to the two input strings $x$ and $y$.
If the length of the two strings is not a power of two the oracle can extend the graph $\layergrid_{x,y}$ by extra vertices 
to make it of desired size $\layergrid_{2n',2n'}$ (where $n< n'< 2n$ is a power of two) 
where there will be a path of weight 1 leading from the original target vertex $(2n,n)$ to the new target vertex $(2n',n)$, 
and all other new edges will be assigned weight 0.

Notice, there are at most $3^{2n}=9^n$ paths from $(0,0)$ to $(n,n)$ in $\ed_{x,y}$ as each vertex has out-degree at most 3 and each path is of length at most $2n$.
Hence, there are at most $9^n < 16^n$ paths of non-zero weight from $(0,n)$ to $(2n,n)$ in  $\layergrid_{x,y}$.
For $w\in \{0,\dots, n\}$, let $n_w$ denote the number of paths of cost $w$ from $(0,0)$ to $(n,n)$ in $\ed_{x,y}$.
The total path weight from $s=(0,n)$ to $t=(2n,n)$ in $\layergrid_{x,y}$
corresponds to $\calW_{s,t}=\sum_{w=0}^{2n} 16^{nw} \cdot n_w$.
Since each $n_w < 16^{n}$, if we represent $\calW_{s,t}$ in base $16^n$ each digit will correspond to one of $n_w$'s.
Furthermore, if $\calW_{s,t}$ is represented in binary then bits $4n(w+1)-1,\dots, 4nw$ represent $n_w$ in binary
(where the lowest order bit has index 0).
Thus, if we determine $\calW_{s,t}$ we can determine the first non-zero $n_w$ for $w=0,\dots,2n$ by inspecting the binary representation of $\calW_{s,t}$.
(This requires $\bigO(\log n)$-bits of additional space.)
The binary representation of $\calW_{s,t}$ has $N=(2n+1)4n=8n^2+4n$ bits.
We can access the individual bits of $\calW_{s,t}$ using Proposition \ref{prop-CRR} that will make at most $\bigO(N^2)=\bigO(n^4)$ queries to $\calW_{s,t} \bmod p_i$, $i<N$,
where each of the queries can be answered by running the algorithm from Theorem \ref{thm-gridalg} on our $\layergrid_{x,y}$.
Hence, we can compute the edit distance of $x$ and $y$.

Overall the algorithm requires only $\bigO(\log n)$-space in addition to the space needed by the algorithm of Theorem \ref{thm-gridalg} to answer $\calW_{s,t} \bmod p_i$ queries.
This establishes Theorem~\ref{thm-edit} for edit distance.

We note that we do not establish the precise time complexity, as it also depends on the time complexity of~\Cref{prop-CRR} which we do not attempt to optimize here.

The same approach also works for the weighted versions of edit distance if the weights are $\bigO(\log n)$-bit integers with appropriate changes to the weighting scheme.
The number of bits in the number would then be $\bigO(n^2\cdot w_{\text{max}})$, where $w_{\text{max}}=\poly(n)$ is the largest possible weight, which would still yield a polynomial time algorithm.

\medskip
Longest common subsequence of two strings can be computed in a similar manner.
Instead of the graph $\ed_{x,y}$ one considers an almost identical graph $\lcs_{x,y}$ where horizontal and vertical edges have cost zero 
and each diagonal edge $(i,j)\rightarrow(i+1,j+1)$ has cost one if $x_{i+1} = y_{j+1}$ and cost zero otherwise.
The length of the longest common subsequence of $x$ and $y$ corresponds to the maximum cost of a path from $(0,0)$ to $(n,n)$.
Similarly to the edit distance we embed $\lcs_{x,y}$ into $\layergrid_{2n,2n}$, and compute the value of the total path weight $\calW_{s,t}$, for $s=(0,n)$ to $t=(2n,n)$.
From the binary representation of $\calW_{s,t}$ we can determine the maximum cost of a path from $(0,0)$ to $(n,n)$ that appears in the original graph $\lcs_{x,y}$.
This implies Theorem~\ref{thm-edit} for longest common subsequence.

\medskip
Computing the discrete Fr\'{e}chet distance of two point sequences $x$ and $y$ can be done using a similar technique, while using a well-known reduction that is already implicit in the algorithm by Eiter and Mannila~\cite{EiterM94}.
By our assumption on the distance oracle, we know that all the pairwise distances of points from the two sequences are from $\{0,\dots,\poly(n)\}$.
For $m \in \{0,\dots, \poly(n)\}$ we can define a graph $\frechet_{x,y,m}$ on a vertex set $\{0,\dots,n\} \times \{0,\dots,n\}$
with horizontal, vertical and diagonal edges like in the edit distance graph 
except that we remove edges leading into each vertex $(i,j)$ where the distance between $x_i$ and $y_j$ is more than $m$.
Such vertices will become unreachable.
The cost of all the remaining edges will be one.
In this graph $\frechet_{x,y,m}$, there is a path from $(0,0)$ to $(n,n)$ if and only if the discrete Fr\'{e}chet distance of $x$ and $y$ is at most $m$.
Again we can embed the graph into $\layergrid_{2n,2n}$ while assigning weight 0 to missing edges and weight one to edges that are present. 
We can use the algorithm from Theorem \ref{thm-gridalg} to determine the total path weight from $(0,n)$ into $(2n,n)$ modulo a prime.
As all paths will have weight one the total path weight counts the number of paths from $(0,n)$ into $(2n,n)$.
To determine whether the discrete Fr\'{e}chet distance of $x$ and $y$ is at most $m$ we just have to check whether the total path weight is non-zero.
The total path weight is zero if and only if its value is zero modulo the first $4n$ primes.
Thus, we can check whether there is a path from $(0,0)$ to $(n,n)$ in $\frechet_{x,y,m}$ in time $\bigO(n^{4+\varepsilon})$.

Using $\bigO(\log n)$ iterations of the algorithm, we perform binary search on $m$ from $\{0,\ldots,\poly(n)\}$ to find the smallest $m$ with the property that the discrete Fr\'{e}chet distance of $x$ and $y$ is at most $m$. 
This is the discrete Fr\'{e}chet distance of $x$ and $y$.
We conclude Theorem~\ref{thm-frechet}.

\bibliography{refs}
\bibliographystyle{alpha}

\appendix

\section{Catalytic Technicalities}
\label{sec:catalytic_technicalities}

In this section, we deal with the remaining technical details of our catalytic algorithms. \\

\noindent
{\bf Catalytic vectors $\bmod$ $p$.}\label{rmk:catalytic_mod_p}
Our algorithm views pieces of its catalytic memory as vectors of registers $r_1,\dots,r_m$ storing values from $\{0,\dots,p-1\}$.
If we allocate $w=\lceil \log p \rceil$ bits to each of the registers in a naive way 
the initial value in some of the registers could be out of the allowed range when $p$ is not a power of two.
Previous works \cite{BCKLS14,CGMPS25} addressed this issue using techniques which lead to large time overhead when accessing individual registers. 
Here we present a different technique which builds on the previous ones which is suitable for machines that have random access to the catalytic tape.
Our technique incurs $\bigO(\log n \log p)$ overhead to access a register given by its index.

Our solution will use $8mw$ bits of the catalytic tape to store the registers.
It has three phases: pre-processing phase taking time $\bigO(m \cdot \polylog{m+p})$, working phase when we can work with the registers, 
and post-processing phase taking time $\bigO(m \cdot \polylog{m+p})$ which returns the catalytic tape to its original state 
(assuming all the registers are already returned to the same value as they had at the beginning of the working phase.)

Assume that $m$ is a power of two. 
Choose some constant $C\ge 1$, set $M=m^C$.
The registers are allocated as follows. 
We will divide the allocated space into groups of $b=8 \log M$ cells, each cell having $w$ bits.
The $i$-th group of cells, $i=1,\dots,m/\log M$, will accommodate $\log M$ registers $r_{1+(i-1)\log M},\dots,r_{i \log M}$.
Within the group, the $j$-th register will use the $j$-th cell that contains a  value in range $\{0,\dots,p-1\}$. 
Hence, depending on the exact details of the computational model locating the cell storing a given register takes time at most $\bigO(\log M \log p)$ to inspect cells in a given group.

The only issue is that initially a group might contain fewer than $\log M$ cells with a value  in the range $\{0,\dots,p-1\}$.
(Cells with a value in that range keep the value in that range as register cannot be assigned a value outside of its range.)
We call such groups {\em unhealthy}.

Unhealthy groups will be modified during the pre-processing stage so that they will be usable in the working stage.
During the pre-processing stage we will create a linked list of the unhealthy cells 
so that during the post-processing phase we will be able to return them to their original contents.
The pointers of the linked list will use space obtained by compressing each unhealthy group so only $\bigO(\log m)$ extra bits will be needed to store the pointer to the first unhealthy group.

During the pre-processing phase if a group is unhealthy, then all but at most $b/8$ cells in the group have the most significant bit set to one.
We will flip the most significant bit of the first $b/4$ cells in the group.
Since $b/4-b/8 \ge \log M$ the group is now suitable to accommodate $\log M$ registers within its first $b/4$ cells.
The latter $3b/4$ cells will be compressed by removing their most significant bit which will free $3b/4$ bits.
The $3b/4$ bits will be used to store the pointer to the next unhealthy group ($\log M$ bits)
and information about the deleted most significant bits ($5b/8$ bits).

The most significant bits are encoded as triples: for each triple we have one indicator bit saying whether it consists of only 1's
and if not we specify the whole triple. 
The indicators occupy $b/4$ bits and the explicit triples take at most $3b/8$ bits.
Together this is at most $5b/8$ bits so there are at least $3b/4-5b/8 = b/8 = C \log m \ge \log m$ bits available for the pointer.

In the post-processing phase we go through the list of unhealthy groups and undo the compression and the bit flips. 

Notice, the unhealthy groups provide extra unused space when $C>1$.
This extra space can be used if we need to manage multiple vectors in the catalytic memory at once without incurring overhead $\bigO(\log n)$ bits
for a separate pointer to unhealthy cell list in each of the vectors.  

The above construction uses $8mw$ bits to store about $mw$ bits of information. 
One can easily improve the construction to require space only $(2+\epsilon)nw$
by a better compression of unhealthy groups and by making the groups constant time larger.
This would affect the time of pre- and post-processing by poly-logarithmic factors and the access time by a constant factor.

\end{document}